\documentclass[journal,onecolumn,12pt]{IEEEtran}
\usepackage[letterpaper,margin=1in]{geometry}

\usepackage{amsmath,amssymb,amsthm,mathtools}
\usepackage{bm}
\usepackage{cite}
\usepackage[colorlinks=true,citecolor=blue,urlcolor=blue,linkcolor=blue]{hyperref}

\allowdisplaybreaks[4]
\numberwithin{equation}{section}

\newtheorem{theorem}{Theorem}[section]
\newtheorem{lemma}{Lemma}[section]
\newtheorem{proposition}{Proposition}[section]
\newtheorem{corollary}{Corollary}[section]
\newtheorem{remark}{Remark}[section]
\newtheorem{example}{Example}[section]

\renewcommand{\P}{\mathbb P}
\newcommand{\E}{\mathbb E}
\newcommand{\R}{\mathbb R}
\newcommand{\Cov}{\operatorname{Cov}}
\newcommand{\tr}{\operatorname{tr}}
\newcommand{\Law}{\mathcal L}
\newcommand{\HS}{\mathrm{HS}}
\newcommand{\op}{\mathrm{op}}
\newcommand{\Id}{I_d}
\newcommand{\etaVec}{\boldsymbol\eta}
\newcommand{\Jmat}{\mathbf J}
\newcommand{\Pp}{\mathbb P}
\newcommand{\indep}{\mathrel{\perp\!\!\!\perp}}

\begin{document}

\title{Conditional Fisher-Information Central Limit Theorems under Log-Concavity with Information-Theoretic Consequences}

\author{Tong Ye and Liu-Quan Yao%
}

\maketitle

\begin{abstract}
We establish conditional central limit theorems in Fisher information under
log-concavity in every fixed dimension. For conditionally centered normalized
sums, after whitening by the averaged conditional covariance, the averaged
conditional Fisher information converges to the dimension if and only if it is
finite at one convolution level. The scalar criterion follows as the
one-dimensional case; we also provide an independent scalar proof based on a
second-order continuity theorem for Fisher production on Gaussian-smoothed,
tail-controlled classes. For the original sums, the averaged Fisher
information matrix converges in operator norm to the inverse averaged
conditional covariance. Consequently, the conditional relative Fisher
information with respect to the limiting Gaussian law vanishes, and the
Gaussian logarithmic Sobolev inequality yields convergence in conditional
relative entropy and conditional entropy. We give two operational
consequences. For any fixed finite-constellation low-power input, the
first-order conditional mutual-information slope converges to the
Gaussian-noise benchmark. For Gaussian signaling at any fixed signal
covariance, the mutual-information gap from that benchmark is bounded by the
conditional relative Fisher deficit and hence vanishes asymptotically.
\end{abstract}

\begin{IEEEkeywords}
Conditional central limit theorem, Fisher information, Fisher information
matrix, log-concavity, low-SNR asymptotics, state-dependent noise.
\end{IEEEkeywords}

\section{Introduction}
\label{sec:introduction}

Convergence in Fisher information is one of the strongest
information-theoretic forms of the central limit theorem. For a random
variable $X$, write $I(X)$ for its Fisher information; the precise weak
Sobolev definition used for possibly nonsmooth densities is given in
Section~\ref{sec:information-functionals}. If $X_1,X_2,\ldots$ are independent
and identically distributed real-valued random variables with variance
$\sigma^2$ and
\[
   W_n=\frac1{\sqrt n}\sum_{i=1}^n X_i,
\]
the Fisher-information central limit problem asks for conditions under which
\[
   I(W_n)\longrightarrow\frac1{\sigma^2}.
\]
This convergence is substantially stronger than weak convergence and is
closely connected, through de Bruijn's identity, to entropy dissipation along
the heat flow. Fisher-information inequalities and their role in the central
limit theorem were developed systematically by Johnson and Barron
\cite{johnson2004fisher}, building on the classical information inequalities
of Stam and Blachman \cite{stam1959some,Blachman1965}. In the scalar
i.i.d.\ setting, Johnson and Barron also showed that if the Fisher
information of a normalized sum is finite at some convolution level, then it
converges to the Gaussian value; see also Bobkov, Chistyakov, and G\"otze
\cite{bobkov2014fisher} for refined asymptotic results and convergence rates
under moment assumptions.

We study the corresponding conditional problem when each summand is observed
together with a random state. Let $(\xi,\eta)$ be a random pair, let
$\{(\xi_i,\eta_i)\}_{i\ge1}$ be independent copies, and define
\begin{equation}
   Z_i=\xi_i-\E[\xi_i\mid\eta_i],
   \qquad
   S_n=\frac1{\sqrt n}\sum_{i=1}^n Z_i,
   \qquad
   \etaVec_n=(\eta_1,\ldots,\eta_n).
   \label{eq:intro-normalized-sum}
\end{equation}
Conditionally on $\etaVec_n$, the summands are independent but generally not
identically distributed, because their conditional variances and shapes
depend on the realized state. The central question is whether the averaged
conditional Fisher information of $S_n$ converges to that of the Gaussian
having the averaged conditional covariance.

The mode of Gaussian convergence sought here is stronger than that
provided by the entropic conditional central limit theorem.
Ma et al.~\cite{ma2024entropic} proved an entropic conditional CLT under
a finite expected conditional Fisher-information assumption.
Ye et al.~\cite{ye2026finiteentropy} subsequently identified a
finite-entropy criterion and established continuity of Fisher information
on Gaussian-smoothed, tail-controlled classes. As a technical step in the
latter work, they also obtained convergence of averaged conditional Fisher
information for sequences with a fixed independent Gaussian component,
under suitable tail and asymptotic assumptions. 
In contrast, the present work establishes Fisher-information
 convergence for the original conditionally log-concave sums, 
 without imposing an a priori Gaussian regularization, and
  characterizes this convergence by finiteness at a single convolution level. 
  The result holds in every fixed dimension and further yields operator-norm 
  convergence of the averaged Fisher information matrix. Relative-entropy 
  convergence alone, however, does not in general imply convergence of Fisher information, 
  so the Fisher-information CLT obtained here gives a stronger mode of Gaussianization.

Let $\xi\in\R^d$, where $d\ge1$ is fixed, and put
\[
   \Sigma=\E\Cov(\xi\mid\eta)\succ0,
   \qquad
   W_n=\Sigma^{-1/2}S_n.
\]
For a random vector $X$, write $\Jmat(X)$ for its Fisher information matrix and
$I(X)=\tr\Jmat(X)$. When the score exists,
\[
   \Jmat(X)
   =\E\!\left[\rho_X(X)\rho_X(X)^{\mathsf T}\right].
\]
Precise weak definitions and conditional conventions are given in
Sections~\ref{sec:information-functionals} and
\ref{sec:conditional-information-functionals}. 
Under full-dimensional conditional log-concavity, we prove the equivalence
\[
   \E I(W_n\mid\etaVec_n)\longrightarrow d
   \qquad\Longleftrightarrow\qquad
   \E I(W_{n_0}\mid\etaVec_{n_0})<\infty
   \quad\text{for some }n_0\ge1.
\]
Moreover,
\begin{equation}
   \E\Jmat(S_n\mid\etaVec_n)
   \longrightarrow\Sigma^{-1}
   \label{eq:intro-matrix-limit}
\end{equation}
in operator norm. The matrix statement retains directional information that
is lost after taking the trace. When $d=1$, this theorem immediately gives
\[
   \E I(S_n\mid\etaVec_n)\longrightarrow\frac1{\sigma^2},
   \qquad
   \sigma^2=\E\operatorname{Var}(\xi\mid\eta),
\]
under the same finite-at-one-level criterion.

Fisher-information matrices have also appeared in earlier multivariate
limit theory. Mayer-Wolf~\cite{mayerwolf1990cramer} studied continuity and
convexity properties of the Fisher matrix and used the Cram\'er--Rao
functional as a variational tool for convergence to Gaussian laws.
More recently, Eskenazis and Gavalakis
\cite{eskenazis2024gaussian} obtained quantitative operator-norm bounds
for the normalized Fisher information matrix of weighted sums of i.i.d.\
multivariate Gaussian mixtures. The two settings are complementary. 
Their result is quantitative and tracks
dimension dependence under a Gaussian-mixture representation. The present
result is qualitative for each fixed dimension, but it treats a conditional
random environment, in which the summands are independent but generally
non-identically distributed after conditioning, and derives the matrix limit
from a finite-convolution-level criterion.

Let $\gamma_\Sigma=N(0,\Sigma)$. We write
$\mathcal D_\Sigma(X\mid Y)$ and $\mathcal J_\Sigma(X\mid Y)$ for the
averaged conditional relative entropy and the dimensionless averaged
conditional relative Fisher information, respectively, with respect to the
fixed Gaussian reference $\gamma_\Sigma$; precise definitions are given in
Section~\ref{sec:relative-definitions}. For the centered sums considered here,
expanding the Gaussian score gives the exact identity
\begin{equation}
   \mathcal J_\Sigma(S_n\mid\etaVec_n)
   =
   \tr\!\left(\Sigma\,\E\Jmat(S_n\mid\etaVec_n)\right)-d.
   \label{eq:intro-relative-fisher-identity}
\end{equation}
Consequently, \eqref{eq:intro-matrix-limit} implies
$\mathcal J_\Sigma(S_n\mid\etaVec_n)\to0$. The Gaussian logarithmic Sobolev
inequality then gives
\[
   \mathcal D_\Sigma(S_n\mid\etaVec_n)
   \le\frac12\mathcal J_\Sigma(S_n\mid\etaVec_n),
\]
so the Fisher-information CLT also yields the corresponding conditional
relative-entropy and entropy convergence.

The matrix limit also has a direct operational interpretation for
state-dependent additive-noise channels. Consider a
fixed finite-constellation low-power input $U\in\R^d$, independent of the
aggregate noise and the state, with $\Cov(U)=Q$. For
\[
   Y_{n,\rho}=\sqrt\rho\,U+S_n,
   \qquad
   R_{n,U}(\rho)=I(U;Y_{n,\rho}\mid\etaVec_n),
\]
the conditional weak-input expansion gives
\begin{equation}
   R_{n,U}'(0+)
   =\frac12\tr\!\left(Q\,\E\Jmat(S_n\mid\etaVec_n)\right).
   \label{eq:intro-low-snr-slope}
\end{equation}
Thus the first-order slope depends on the finite constellation only through
its covariance. Combining \eqref{eq:intro-low-snr-slope} with
\eqref{eq:intro-matrix-limit} yields
\[
   R_{n,U}'(0+)
   \longrightarrow
   \frac12\tr(Q\Sigma^{-1}),
\]
the Gaussian benchmark for the first-order slope. We also derive a
complementary result for Gaussian signaling at a fixed signal covariance: the
gap to the Gaussian-noise benchmark is bounded by one half of the relative
Fisher deficit.

The fixed-dimensional proof uses a conditional Lindeberg--Feller theorem to
identify the conditional Gaussian limit, followed by Gaussian smoothing and a
second-order Fisher-dissipation argument. Since this theorem includes $d=1$,
the scalar criterion is stated as a corollary in the main text. For comparison,
Appendix~\ref{app:scalar-alternative} gives an independent scalar proof along
the entropic route of Ma et al.~\cite{ma2024entropic} and the
Gaussian-smoothed continuity method of \cite{ye2026finiteentropy}. The new
ingredient in that alternative argument is continuity of the second-order
Fisher production $K$, proved in Appendix~\ref{app:smoothed-continuity}.

The rest of the paper is organized as follows.
Section~\ref{sec:preliminaries} introduces the information functionals and
conditional conventions. Section~\ref{sec:main-results} states the unified
fixed-dimensional criterion, its scalar and matrix consequences, and the
relative-information consequences. Section~\ref{sec:multivariate-proof}
proves the criterion. Section~\ref{sec:applications} develops the weak-signal
and Gaussian-signaling applications. The technical second-order Fisher,
alternative scalar, conditional Lindeberg, and Gaussian-smoothing arguments
are collected in the appendices.

\section{Preliminaries and notation}
\label{sec:preliminaries}

Throughout the paper, all random variables and random vectors are defined on
a common probability space. Whenever a statement is formulated on $\R^d$,
the dimension $d\ge1$ is fixed. Whenever conditional densities are used, we
fix a jointly measurable version of the relevant conditional density.

\subsection{Basic notation and moments}

Let $f$ be a probability density on $\R^d$ with finite second moment. We write
\begin{equation}
   m_f=\int_{\R^d}x f(x)\,dx,
   \qquad
   \Cov(f)=\int_{\R^d}(x-m_f)(x-m_f)^{\mathsf T}f(x)\,dx.
\end{equation}
We also define the centered second-moment tail functional
\begin{equation}
   \tau_f(R)
   =\int_{\R^d}\|x-m_f\|^2
   \mathbf1_{\{\|x-m_f\|\ge R\}}f(x)\,dx,
   \qquad R\ge0.
   \label{eq:tail-functional}
\end{equation}
For a random vector $X$ with density $f_X$, we use the corresponding notation
$m(X)$, $\Cov(X)$, and $\tau_X(R)$. When $d=1$, $\Cov(X)$ is written as
$\operatorname{Var}(X)$.

\subsection{Information functionals}
\label{sec:information-functionals}

The differential entropy of a probability density $f$ on $\R^d$ is
\begin{equation}
   h(f)=-\int_{\R^d}f(x)\log f(x)\,dx,
\end{equation}
whenever the integral is well defined. For a random vector $X$ with density
$f_X$, we write $h(X)=h(f_X)$.

To allow nonsmooth and compactly supported densities, the scalar-valued
Fisher information is understood in the weak Sobolev sense
\begin{equation}
I(f)
\triangleq
\begin{cases}
4\displaystyle\int_{\R^d}\|\nabla\sqrt f(x)\|^2\,dx,
&
\sqrt f\in H^1(\R^d),
\\[1.2ex]
+\infty,
&
\text{otherwise}.
\end{cases}
\label{eq:weak-fisher}
\end{equation}
Here
\[
H^1(\R^d)
=
\left\{
u\in L^2(\R^d):
\partial_j u\in L^2(\R^d),\
j=1,\ldots,d
\right\},
\]
where the derivatives are understood in the weak sense.

Whenever \(I(f)<\infty\), the score is defined \(f(x)\,dx\)-almost
everywhere by
\begin{equation}
   \rho_f(x)
   \triangleq
   \frac{\nabla f(x)}{f(x)}
   =
   \nabla\log f(x)
   \qquad\text{on }\{f>0\},
   \label{eq:score-definition}
\end{equation}
and we set \(\rho_f(x)=0\) on \(\{f=0\}\).
By \cite[Proposition~12.1]{Upper}, the weak Sobolev definition
\eqref{eq:weak-fisher} is equivalent to the usual score representation,
and hence
\begin{equation}
   I(f)
   =
   \int_{\R^d}
   \|\rho_f(x)\|^2 f(x)\,dx.
\end{equation}
The Fisher information matrix is
\begin{equation}
   \Jmat(f)=\int_{\R^d}\rho_f(x)\rho_f(x)^{\mathsf T}f(x)\,dx,
   \label{eq:fisher-matrix-definition}
\end{equation}
so that $I(f)=\tr\Jmat(f)$.

For $A=(a_{ij})\in\R^{d\times d}$, write
\[
   \|A\|_{\HS}^2=\tr(A^{\mathsf T}A)=\sum_{i,j=1}^d a_{ij}^2,
   \qquad
   \|A\|_{\op}=\sup_{\|x\|=1}\|Ax\|.
\]
For a smooth positive density $f$, define the second-order Fisher production
by
\begin{equation}
   K(f)=\int_{\R^d}\|\nabla^2\log f(x)\|_{\HS}^2f(x)\,dx.
   \label{eq:K-definition}
\end{equation}
When $d=1$,
\begin{equation}
   \rho_f=(\log f)',
   \qquad
   I(f)=\int_{\R}\rho_f(x)^2f(x)\,dx,
   \qquad
   K(f)=\int_{\R}|\rho_f'(x)|^2f(x)\,dx.
\end{equation}
The functional $K$ is denoted by $J$ in
Toscani~\cite{toscani2015strengthened}; we reserve $\Jmat$ for the Fisher
information matrix. For possibly nonsmooth original laws, \(K\) is used only after Gaussian smoothing.

\subsection{Conditional information functionals}
\label{sec:conditional-information-functionals}

Let $(X,Y)$ admit a conditional density $f_{X\mid Y}(\cdot\mid y)$. For
$\Phi\in\{h,I,K\}$, use the pointwise convention
\begin{equation}
   \Phi(X\mid Y=y)=\Phi\bigl(f_{X\mid Y}(\cdot\mid y)\bigr).
\end{equation}
We also set
\begin{equation}
   \tau_{X\mid Y=y}(R)
   =\tau_{f_{X\mid Y}(\cdot\mid y)}(R).
\end{equation}
Thus $\Phi(X\mid Y)$ and $\tau_{X\mid Y}(R)$ are random variables determined
by $Y$, while $\E\Phi(X\mid Y)$ denotes the corresponding average. The same
convention is used for conditional means and covariance matrices. In
particular, $\Jmat(X\mid Y=y)$ denotes the Fisher information matrix of the
conditional law, and
\begin{equation}
   I(X\mid Y=y)=\tr\Jmat(X\mid Y=y).
\end{equation}

\subsection{Conditional relative Fisher information and Gaussian reference}
\label{sec:relative-definitions}

Let $\Sigma\succ0$ and let $\gamma_\Sigma=N(0,\Sigma)$. If $(X,Y)$ has a
conditional density and $\E[X\mid Y]=0$ almost surely, define the averaged
conditional relative entropy with respect to the fixed Gaussian reference by
\begin{equation}
   \mathcal D_\Sigma(X\mid Y)
   =\E D\!\left(P_{X\mid Y}\middle\|\gamma_\Sigma\right),
   \label{eq:conditional-relative-entropy-def}
\end{equation}
where $D(P\|Q)$ denotes relative entropy. Whenever the conditional score is
square-integrable, define the dimensionless averaged conditional relative
Fisher information by
\begin{equation}
   \mathcal J_\Sigma(X\mid Y)
   =\E\int_{\R^d}
   \left\|\Sigma^{1/2}
   \bigl(\rho_{X\mid Y}(x)+\Sigma^{-1}x\bigr)\right\|^2
   f_{X\mid Y}(x\mid Y)\,dx.
   \label{eq:conditional-relative-fisher-def}
\end{equation}
The reference covariance is fixed rather than equal to the random conditional
covariance. This is natural here because
\[
   \E\Cov(S_n\mid\etaVec_n)=\Sigma
\]
for every $n$.

\subsection{Log-concave densities}

A probability density $f$ on $\R^d$ is log-concave if
\begin{equation}
   f(x)=e^{-\psi(x)},
\end{equation}
where $\psi:\R^d\to(-\infty,+\infty]$ is a proper lower-semicontinuous convex
function. Equivalently,
\begin{equation}
   f((1-\lambda)x+\lambda y)
   \ge f(x)^{1-\lambda}f(y)^\lambda,
   \qquad x,y\in\R^d,\quad\lambda\in[0,1].
\end{equation}
The conditional law of $X$ given $Y$ is called log-concave if
$f_{X\mid Y}(\cdot\mid y)$ is log-concave for $P_Y$-almost every $y$.
Log-concavity is preserved under affine transformations, products,
convolution, and marginalization; in particular, Gaussian smoothing preserves
log-concavity \cite{saumard2014log}.

\subsection{Gaussian smoothing and scaling}

Let $X$ have finite second moment and let $G\sim N(0,\Id)$ be independent of
$X$. For $t\ge0$, define
\begin{equation}
   X_t=X+\sqrt t\,G.
   \label{eq:additive-smoothing}
\end{equation}
We also use the Ornstein--Uhlenbeck smoothing
\begin{equation}
   P_t^*X=e^{-t}X+\sqrt{1-e^{-2t}}\,G.
\end{equation}
For $t>0$, both smoothed laws have strictly positive smooth densities.
Along the additive heat flow, de Bruijn's identity
\cite[Sec.~II-D]{rioul2011information} and the Fisher-information
dissipation identity \cite{toscani2015reciprocal} give
\begin{equation}
   \frac{d}{dt}h(X_t)=\frac12 I(X_t),
   \label{eq:debruijn}
\end{equation}
and
\begin{equation}
   \frac{d}{dt}I(X_t)=-K(X_t).
   \label{eq:fisher-dissipation}
\end{equation}

\section{Main results}
\label{sec:main-results}

Let $m\ge1$ be fixed, let $\eta\in\R^m$, and let
$\{(\xi_i,\eta_i)\}_{i\ge1}$ be independent copies of $(\xi,\eta)$. Define
\begin{equation}
   Z_i=\xi_i-\E[\xi_i\mid\eta_i],
   \qquad
   S_n=\frac1{\sqrt n}\sum_{i=1}^n Z_i,
   \qquad
   \etaVec_n=(\eta_1,\ldots,\eta_n).
   \label{eq:centered-normalized-sum}
\end{equation}
Conditionally on $\etaVec_n$, the centering term is deterministic; hence
translation invariance makes the centered and uncentered conditional
Fisher-information problems equivalent.

\subsection{The fixed-dimensional criterion}
\label{sec:multivariate-result}

Let $\xi\in\R^d$, where $d\ge1$ is fixed, and define
\begin{equation}
   \Sigma=\E\Cov(\xi\mid\eta).
   \label{eq:multi-average-covariance}
\end{equation}
Assume that $\Sigma$ is finite and positive definite. Put
\begin{equation}
   X_i=\Sigma^{-1/2}Z_i,
   \qquad
   W_n=\Sigma^{-1/2}S_n
   =\frac1{\sqrt n}\sum_{i=1}^n X_i.
   \label{eq:multi-standardized-sum}
\end{equation}
Then
\begin{equation}
   \E[X_i\mid\eta_i]=0,
   \qquad
   \E\Cov(X_i\mid\eta_i)=\Id.
   \label{eq:standardized-moments}
\end{equation}

\begin{theorem}[Fixed-dimensional conditional Fisher-information criterion]
\label{thm:main}
Assume that, for $P_\eta$-almost every $y\in\R^m$, the conditional law of
$\xi$ given $\eta=y$ admits a full-dimensional log-concave Lebesgue density on
$\R^d$. Define
\begin{equation}
   \mathcal I_n=\E I(W_n\mid\etaVec_n).
   \label{eq:multi-averaged-fisher}
\end{equation}
Then
\begin{equation}
   \mathcal I_n\longrightarrow d
   \label{eq:multi-fisher-limit}
\end{equation}
if and only if there exists $n_0\ge1$ such that
\begin{equation}
   \mathcal I_{n_0}<\infty.
   \label{eq:multi-finite-level-fisher}
\end{equation}
\end{theorem}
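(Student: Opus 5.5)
The forward implication is trivial: if $\mathcal I_n\to d$ then $\mathcal I_n<\infty$ for all large $n$. For the converse I will establish a lower bound $\liminf\mathcal I_n\ge d$ and a matching upper bound $\limsup\mathcal I_n\le d$. The starting point is a monotonicity step. By the conditional Blachman--Stam inequality (applied conditionally on $\etaVec_n$, where the summands are independent), finiteness of $\mathcal I_{n_0}$ propagates to $\mathcal I_{kn_0}$ and, using blocks of size $n_0$ together with one remainder block, to every $n\ge n_0$. This gives a uniform bound $\sup_{n\ge n_0}\mathcal I_n<\infty$.

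For the lower bound I use the Cram\'er--Rao inequality conditionally. This gives $\Jmat(W_n\mid\etaVec_n)\succeq\Cov(W_n\mid\etaVec_n)^{-1}$. Taking traces and averaging, Jensen's inequality for the convex map $A\mapsto\tr A^{-1}$ together with $\E\Cov(W_n\mid\etaVec_n)=\Id$ yields $\mathcal I_n\ge d$ for every $n$. Consequently the whole content of the theorem is the upper bound.

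For the upper bound I smooth at a fixed time and control the error. By \eqref{eq:fisher-dissipation},
\[
I(W_n\mid\etaVec_n)=I\bigl((W_n)_t\mid\etaVec_n\bigr)+\int_0^t K\bigl((W_n)_s\mid\etaVec_n\bigr)\,ds .
\]
The first term is handled by the conditional Lindeberg--Feller theorem. Conditionally on $\etaVec_n$, $W_n$ is a sum of independent centered vectors whose covariances average to $\Id$ by \eqref{eq:standardized-moments}. Uniform integrability of $\|X_1\|^2$ gives the Lindeberg condition in probability, so the conditional laws of $W_n$ approach $N(0,\Cov(W_n\mid\etaVec_n))$, and $\Cov(W_n\mid\etaVec_n)\to\Id$ in $L^1$ by the law of large numbers. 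Continuity of Fisher information on Gaussian-smoothed, tail-controlled classes, as in \cite{ye2026finiteentropy}, then gives $\E I((W_n)_t\mid\etaVec_n)\to \E\tr(\Id+t\Id)^{-1}\cdot$(limit), that is $d/(1+t)$. Dominated convergence is justified by the bound $I((W_n)_t)\le d/t$ and by tail control coming from the tail functional $\tau$.

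The main obstacle is to show that $\limsup_n\E\int_0^t K((W_n)_s\mid\etaVec_n)\,ds$ tends to $0$ as $t\downarrow0$, uniformly in $n$. This is exactly where log-concavity enters. For a log-concave $f$, $-\nabla^2\log f\succeq0$, so
\[
\|\nabla^2\log f\|_{\HS}\le\tr(-\nabla^2\log f),
\]
and, integrating by parts, $\int\tr(-\nabla^2\log f)f=I(f)$. The difficulty is that $K$ is quadratic in the Hessian, so this linear bound does not control it directly. I would therefore work with $I$ itself instead of $K$. For log-concave laws, $s\mapsto I((W_n)_s)$ is convex and decreasing. Combined with the uniform bound $\mathcal I_n\le C$ and the rescaling $(W_n)_s\stackrel{d}{=}\sqrt{1+s}\,$(a unit-variance-type object), this should give equicontinuity at $s=0$ of $s\mapsto\E I((W_n)_s\mid\etaVec_n)$ uniformly in $n$.

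Concretely, I would try to prove $I(W_n)-I((W_n)_t)\le\omega(t)$ by splitting $W_n$ into two independent halves, $W_n=(A+B)/\sqrt2$. Gaussian smoothing of size $t$ can be absorbed into the Blachman--Stam weights, so the deficit is bounded by a Stam deficit, which is controlled by the Lindeberg closeness of each half to a Gaussian. If this equicontinuity holds, letting $n\to\infty$ and then $t\downarrow0$ gives $\limsup\mathcal I_n\le d$, which completes the proof.
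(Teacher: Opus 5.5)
Your necessity argument, the propagation of finiteness, and the lower bound $\mathcal I_n\ge d$ are correct. For the propagation, note that a remainder block of size $<n_0$ may have infinite Fisher information, so it must be absorbed with the one-component inequality \eqref{eq:onecomponent}, not with Blachman--Stam. The lower bound follows from the conditional Cram\'er--Rao inequality plus Jensen for $A\mapsto\tr A^{-1}$.

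The gap is the step you flag as the main obstacle: showing that $\E\int_0^tK((W_n)_s\mid\etaVec_n)\,ds$ is small uniformly in $n$ as $t\downarrow0$. Neither proposed mechanism gives this.

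\emph{Convexity.} Take $\phi_n(s)=\E I((W_n)_s\mid\etaVec_n)$. Convexity, monotonicity, $\phi_n(0)\le C$, and $\phi_n(s)\to d/(1+s)$ for $s>0$ do not imply equicontinuity at $0$. The function that decreases linearly from $C>d$ to $d/(1+1/n)$ on $[0,1/n]$ and equals $d/(1+s)$ afterwards has all these properties, yet $\phi_n(0)=C$.

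\emph{Stam deficit.} Put $\delta_N(t)=\mathcal I_N-\mathcal I_N(t)$. Blachman--Stam applied to $W_{2N}=(A+B)/\sqrt2$ only gives $\delta_{2N}(t)\le\delta_N(t)+\bigl(\mathcal I_N(t)-\mathcal I_{2N}(t)\bigr)$. Telescoping over dyadic levels returns $\mathcal I_{N_m}\le\mathcal I_{n_0}$ and nothing more, because the accumulated smoothed deficits $\mathcal I_{n_0}(t)-\mathcal I_{N_m}(t)$ do not vanish. Bounding the unsmoothed Stam deficit by the ``Lindeberg closeness'' of the halves to a Gaussian has no justification. Fisher information is not continuous under weak convergence, and that discontinuity is the entire difficulty.

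What you need is $\delta_{2N}(t)\le\delta_N(t)$, i.e.\ that the Stam deficit does not increase under Gaussian smoothing. Since $\delta_N(t)=\int_0^t\mathcal K_N(s)\,ds$ with $\mathcal K_N(s)=\E K((W_N)_s\mid\etaVec_N)$, this follows from Toscani's second-order convolution inequality (Lemma~\ref{lem:4.2}): $K\bigl((U+V)/\sqrt2\bigr)\le\tfrac12\bigl(K(U)+K(V)\bigr)$ for independent log-concave $U,V$, applied to the Gaussian-smoothed conditional block laws. The paper proceeds as follows.
\begin{enumerate}
\item Along $N_m=2^mn_0$ one gets $\mathcal K_{N_m}\le\mathcal K_{n_0}$. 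The dominating function is integrable on $(0,t)$ because $\int_0^t\mathcal K_{n_0}\,ds=\mathcal I_{n_0}-\mathcal I_{n_0}(t)\le\mathcal I_{n_0}$.
\item The pointwise limit $\mathcal K_{N_m}(s)\to d/(1+s)^2$ comes from the conditional CLT together with the log-concave bound $\|\nabla^2\log p_s\|_{\HS}^2\le d/s^2$.
\item The same bound gives $\mathcal I_N(t)\to d/(1+t)$ via $I((W_N)_t)=\int_t^\infty K((W_N)_s)\,ds$. So for your first term you do not need a multivariate version of the continuity theorem of \cite{ye2026finiteentropy}.
\item Dominated convergence then yields $\mathcal I_{N_m}\to d$.
\item The domination holds only along doubling, so you must still pass to the full sequence. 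Blachman--Stam makes $n\mathcal I_n$ subadditive, so Fekete's lemma gives $\lim_n\mathcal I_n=\inf_n\mathcal I_n$, which must equal $d$. Your Cram\'er--Rao bound is consistent with this but not needed.
\end{enumerate}
In your sketch, log-concavity enters only through a convexity claim that cannot close the argument. In the actual proof it is indispensable at exactly this step.
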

The proof of Theorem~\ref{thm:main}
is given in Section~\ref{sec:multivariate-proof}.

\begin{corollary}[One-dimensional criterion]
\label{cor:scalar}
Let $\xi\in\R$ and define
\begin{equation}
   \sigma^2=\E\operatorname{Var}(\xi\mid\eta),
   \qquad 0<\sigma^2<\infty.
   \label{eq:average-conditional-variance}
\end{equation}
Assume that the conditional law of $\xi$ given $\eta$ has a log-concave
density almost surely. If
\begin{equation}
   I_n=\E I(S_n\mid\etaVec_n),
   \label{eq:scalar-averaged-fisher}
\end{equation}
then
\begin{equation}
   I_n\longrightarrow\frac1{\sigma^2}
   \label{eq:logconcave-fisher-limit}
\end{equation}
if and only if $I_{n_0}<\infty$ for some $n_0\ge1$.
\end{corollary}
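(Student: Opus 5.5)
This is a direct specialization of Theorem~\ref{thm:main} to $d=1$, followed by a scaling identity. With $d=1$ we have $\Sigma=\sigma^2$, so the standing assumption $0<\sigma^2<\infty$ is exactly the hypothesis that $\Sigma$ is finite and positive definite. In one dimension, a full-dimensional log-concave density on $\R$ is simply a log-concave Lebesgue density. The hypotheses of Theorem~\ref{thm:main} therefore hold.

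\emph{Step 1: scaling identity.} I first relate $I_n$ to $\mathcal I_n$. Here $W_n=\sigma^{-1}S_n$, so conditionally on $\etaVec_n=\mathbf y$ the density of $W_n$ is $f_{W_n\mid\etaVec_n}(w\mid\mathbf y)=\sigma f_{S_n\mid\etaVec_n}(\sigma w\mid\mathbf y)$. In the weak Sobolev definition \eqref{eq:weak-fisher}, the dilation $u\mapsto \sigma^{1/2}u(\sigma\,\cdot)$ maps $H^1(\R)$ onto itself. It multiplies $\int|u'|^2$ by $\sigma^2$, so
\[
I(W_n\mid\etaVec_n=\mathbf y)=\sigma^2 I(S_n\mid\etaVec_n=\mathbf y),
\]
and the two sides are infinite simultaneously. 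Taking expectations, which is legitimate for nonnegative extended-real random variables, gives $\mathcal I_n=\sigma^2 I_n$ in $[0,\infty]$ for every $n$.

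\emph{Step 2: transfer both directions.} If $I_{n_0}<\infty$, then $\mathcal I_{n_0}=\sigma^2 I_{n_0}<\infty$. Theorem~\ref{thm:main} then gives $\mathcal I_n\to1$, and hence $I_n=\mathcal I_n/\sigma^2\to1/\sigma^2$. Conversely, if $I_n\to1/\sigma^2<\infty$, then $I_n$ is finite for all large $n$, which yields some $n_0$ with $I_{n_0}<\infty$. This direction does not need the theorem; alternatively, $\mathcal I_n\to1=d$ and the theorem's converse also apply.

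\emph{Main obstacle.} The only point needing care is that the scaling identity holds in the weak sense, including the value $+\infty$, and that it holds for every $\mathbf y$ outside a null set. Both follow from the dilation invariance of $H^1$ and from the fixed jointly measurable version of the conditional density. The substantive difficulty lies entirely inside Theorem~\ref{thm:main}, which is assumed here.
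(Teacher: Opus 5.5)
Your proposal is correct and follows essentially the same route as the paper, which also applies Theorem~\ref{thm:main} with $d=1$, $\Sigma=\sigma^2$, $W_n=S_n/\sigma$, and uses the scaling identity $\E I(W_n\mid\etaVec_n)=\sigma^2\E I(S_n\mid\etaVec_n)$ to transfer both the limit and the finite-level condition. Your added check that the dilation preserves $H^1(\R)$, so that the identity also holds in the weak sense including the value $+\infty$, is a welcome detail that the paper leaves implicit.
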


\begin{proof}
Apply Theorem~\ref{thm:main} with $d=1$. Then $\Sigma=\sigma^2$,
$W_n=S_n/\sigma$, and the scaling identity gives
\[
   \E I(W_n\mid\etaVec_n)=\sigma^2\E I(S_n\mid\etaVec_n).
\]
The finite-level conditions scale in the same way.
\end{proof}

\begin{remark}
The finite-level condition does not require
$\E I(\xi\mid\eta)<\infty$. It allows nonsmooth log-concave conditional laws,
including compactly supported densities, provided Fisher information becomes
finite after finitely many convolutions.
\end{remark}

\begin{example}[A genuinely finite-convolution-level model]
\label{ex:state-dependent-uniform}
Let $\eta$ be Bernoulli and, conditionally on $\eta=j$, let
$\xi\sim\operatorname{Unif}[-a_j,a_j]$, where $0<a_0,a_1<\infty$. Each
one-step conditional density is log-concave but has infinite weak Fisher
information because of its jumps at the endpoints. On the other hand, for
every realization of $(\eta_1,\eta_2,\eta_3)$, the density of
$\xi_1+\xi_2+\xi_3$ is a compactly supported piecewise-quadratic spline that
vanishes quadratically at the two boundary points. Hence its weak derivative
satisfies $f'^2/f\in L^1$, and its Fisher information is finite.  
see also the classical triple-convolution result of Bobkov, Chistyakov, and G"otze 
\cite{bobkov2014fisher}. 
Since there are only finitely many state configurations, the scaling identity yields
\[
   \E I(S_3\mid\etaVec_3)<\infty.
\]
Thus Corollary~\ref{cor:scalar} applies with $n_0=3$, although the original
conditional Fisher information is infinite almost surely.
\end{example}

\begin{remark}[Relation to the entropic conditional CLT and an alternative scalar proof]
Whenever $I_n<\infty$, weak integration by parts and
\[
   \E \operatorname{Var}(S_n\mid\etaVec_n)=\sigma^2
\]
give
\begin{equation}
   \mathcal J_{\sigma^2}(S_n\mid\etaVec_n)
   =
   \sigma^2 I_n-1.
   \label{eq:scalar-relative-fisher-identity}
\end{equation}
Thus Corollary~\ref{cor:scalar} is equivalently a convergence
result for the conditional relative Fisher information. As shown in
Corollary~\ref{cor:relative-information} below, this Fisher convergence
implies the corresponding conditional relative-entropy and entropy
convergence.

This implication should be understood within the present log-concave
setting. In contrast, the finite-entropy criterion of
\cite{ye2026finiteentropy} establishes the entropic conditional CLT under
the weaker structural assumption of absolute continuity, without
log-concavity. Thus the present scalar result gives a stronger mode of
convergence, while relying on the additional log-concavity assumption.

Although Corollary~\ref{cor:scalar} follows directly from
Theorem~\ref{thm:main} by setting $d=1$,
Appendix~\ref{app:scalar-alternative} gives an independent scalar proof
along the entropic route of Ma et al.~\cite{ma2024entropic} and
\cite{ye2026finiteentropy}. The additional ingredient in that proof is
the continuity of the second-order Fisher production $K$ established in
Appendix~\ref{app:smoothed-continuity}.
\end{remark}

\subsection{Fisher-matrix and relative-information consequences}

\begin{corollary}[Averaged Fisher matrix]
\label{cor:matrix}
Under the assumptions of Theorem~\ref{thm:main}, if the equivalent conditions
of that theorem hold, then
\begin{equation}
   \E\Jmat(S_n\mid\etaVec_n)\longrightarrow\Sigma^{-1}
   \label{eq:fisher-matrix-limit}
\end{equation}
in operator norm. Consequently,
\begin{equation}
   \E I(S_n\mid\etaVec_n)\longrightarrow\tr(\Sigma^{-1}).
   \label{eq:unstandardized-scalar-limit}
\end{equation}
\end{corollary}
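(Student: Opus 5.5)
The plan is to reduce the matrix statement to a Cramér--Rao lower bound combined with the trace convergence of Theorem~\ref{thm:main}. Work first with the whitened sum $W_n=\Sigma^{-1/2}S_n$. By the linear change-of-variables rule for Fisher matrices, $\Jmat(S_n\mid\etaVec_n)=\Sigma^{-1/2}\Jmat(W_n\mid\etaVec_n)\Sigma^{-1/2}$. It therefore suffices to show
\[
   A_n\triangleq\E\Jmat(W_n\mid\etaVec_n)\longrightarrow\Id
\]
in operator norm. The claimed limit then follows by conjugation with the fixed matrix $\Sigma^{-1/2}$, and the trace statement \eqref{eq:unstandardized-scalar-limit} follows by continuity of the trace.

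\textbf{Step 1: a conditional matrix Cramér--Rao bound.} For each fixed realization of $\etaVec_n$ with $I(W_n\mid\etaVec_n)<\infty$, weak integration by parts gives $\int x\,\rho(x)^{\mathsf T}f\,dx=-\Id$. Write $C=\Cov(W_n\mid\etaVec_n)$ and $J=\Jmat(W_n\mid\etaVec_n)$. Applying Cauchy--Schwarz to $u^{\mathsf T}x$ and $v^{\mathsf T}\rho$ yields $(u^{\mathsf T}v)^2\le (u^{\mathsf T}Cu)(v^{\mathsf T}Jv)$, which is equivalent to $J\succeq C^{-1}$. This is the only place where finiteness is needed. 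By the theorem, $\mathcal I_n<\infty$ for all large $n$, so $J$ is finite almost surely for such $n$.

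\textbf{Step 2: averaging.} Matrix inversion is operator convex on positive definite matrices, so Jensen's inequality gives
\[
   A_n\succeq\E[C^{-1}]\succeq(\E C)^{-1}=\Id,
\]
using $\E\Cov(W_n\mid\etaVec_n)=\Id$ from \eqref{eq:standardized-moments}. (One can instead apply Jensen to the convex scalar map $C\mapsto u^{\mathsf T}C^{-1}u$ direction by direction.) Full-dimensional log-concavity makes each conditional covariance $C$ positive definite, so $C^{-1}$ is well defined.

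\textbf{Step 3: squeeze.} The matrix $A_n-\Id$ is positive semidefinite, so its operator norm is at most its trace:
\[
   \|A_n-\Id\|_{\op}\le\tr(A_n-\Id)=\mathcal I_n-d\longrightarrow0
\]
by Theorem~\ref{thm:main}.

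\textbf{Main obstacle.} I expect the only delicate point to be Step~1 for nonsmooth log-concave densities. The integration-by-parts identity $\int x\,\partial_j f=-e_j$ must be justified from $\sqrt f\in H^1$ and finite second moments. I would approximate with a cutoff $\chi_R$, writing $\nabla f=2\sqrt f\,\nabla\sqrt f$. The boundary terms vanish because $\int|x|\,|\nabla f|\,|\nabla\chi_R|\le \bigl(\int_{|x|>R}|x|^2f\bigr)^{1/2}I^{1/2}\cdot O(1)$, which tends to $0$ as $R\to\infty$. Log-concave laws have all moments, so this tail bound is available. A further measurability concern is that the conditional matrices must be jointly measurable so that the expectations make sense; this follows from the fixed measurable version of the conditional density assumed in Section~\ref{sec:preliminaries}.
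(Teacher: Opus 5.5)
Your proposal is correct and follows essentially the paper's argument. Both use the conditional matrix Cram\'er--Rao bound, operator-convex Jensen for $V\mapsto V^{-1}$ with $\E\Cov(\cdot\mid\etaVec_n)$ fixed, and $\|B\|_{\op}\le\tr B$ for a positive-semidefinite deficit of trace $\mathcal I_n-d$. Your whitened matrix $\E\Jmat(W_n\mid\etaVec_n)-\Id$ is exactly the paper's $B_n=\Sigma^{1/2}\bigl(\E\Jmat(S_n\mid\etaVec_n)-\Sigma^{-1}\bigr)\Sigma^{1/2}$. The differences are cosmetic: you derive Cram\'er--Rao directly by integration by parts and Cauchy--Schwarz, whereas the paper cites Kagan--Landsman. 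The paper also explicitly checks integrability of the inverse conditional covariance via $\tr\Cov^{-1}\le\tr\Jmat$, which your direction-by-direction Jensen remark implicitly covers.
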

The proof of Corollary~\ref{cor:matrix} is given in
Section~\ref{sec:fisher-matrix-proof}.

\begin{corollary}[Relative Fisher and entropic convergence]
\label{cor:relative-information}
Under the assumptions of Theorem~\ref{thm:main}, suppose the finite-level
condition holds. Then
\begin{equation}
   \mathcal J_\Sigma(S_n\mid\etaVec_n)
   =\tr\!\left(\Sigma\,\E\Jmat(S_n\mid\etaVec_n)\right)-d
   \longrightarrow0.
   \label{eq:relative-fisher-limit}
\end{equation}
Moreover,
\begin{equation}
   0\le\mathcal D_\Sigma(S_n\mid\etaVec_n)
   \le\frac12\mathcal J_\Sigma(S_n\mid\etaVec_n),
   \label{eq:relative-entropy-fisher-bound}
\end{equation}
and therefore
\begin{equation}
   \mathcal D_\Sigma(S_n\mid\etaVec_n)\longrightarrow0.
   \label{eq:relative-entropy-limit}
\end{equation}
Equivalently,
\begin{equation}
   \E h(S_n\mid\etaVec_n)
   \longrightarrow\frac12\log\!\left((2\pi e)^d\det\Sigma\right).
   \label{eq:conditional-entropy-limit-from-fisher}
\end{equation}
\end{corollary}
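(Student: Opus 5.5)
The plan is to deduce all three claims from Corollary~\ref{cor:matrix} by a pointwise computation, the Gaussian logarithmic Sobolev inequality applied conditionally, and then averaging over the state. Throughout, the finite-level condition and Theorem~\ref{thm:main} give $\mathcal I_n<\infty$ for all large $n$; this holds along the tail by monotonicity arguments in Section~\ref{sec:multivariate-proof}, and in any case for all large $n$ since $\mathcal I_n\to d$. Hence for a.e.\ $\etaVec_n$ the conditional density $f_n(\cdot\mid\etaVec_n)$ of $S_n$ has $\sqrt{f_n}\in H^1$ and a square-integrable score.

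\emph{Identity \eqref{eq:relative-fisher-limit}.} For fixed $\etaVec_n$, I will expand the integrand in \eqref{eq:conditional-relative-fisher-def}:
\[
\|\Sigma^{1/2}(\rho+\Sigma^{-1}x)\|^2=\tr(\Sigma\rho\rho^{\mathsf T})+2\langle\rho,x\rangle+x^{\mathsf T}\Sigma^{-1}x .
\]
The middle term will be handled by weak integration by parts, $\int\langle\nabla f,x\rangle\,dx=-d$. To justify this for $\sqrt f\in H^1$ with finite second moment, I will write $\nabla f=2\sqrt f\nabla\sqrt f$. By Cauchy--Schwarz, $\int|\langle\nabla f,x\rangle|\le\sqrt{I(f)}\,(\int\|x\|^2f)^{1/2}<\infty$. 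Then I will truncate with a smooth cutoff $\chi(x/R)$ and let $R\to\infty$ by dominated convergence. For the last term, conditional centering gives $\int x^{\mathsf T}\Sigma^{-1}xf=\tr(\Sigma^{-1}\Cov(S_n\mid\etaVec_n))$. Averaging and using $\E\Cov(S_n\mid\etaVec_n)=\Sigma$ turns this into $d$, and adding up gives $\tr(\Sigma\E\Jmat(S_n\mid\etaVec_n))-2d+d$. The exchange of expectation and trace is legitimate because every term is integrable, as $\mathcal I_n<\infty$ and the second moments are finite. Corollary~\ref{cor:matrix} then gives $\tr(\Sigma\E\Jmat)\to\tr(\Sigma\Sigma^{-1})=d$, since the trace against a fixed matrix is continuous in operator norm. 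Hence $\mathcal J_\Sigma\to0$.

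\emph{Bound \eqref{eq:relative-entropy-fisher-bound}.} Nonnegativity of relative entropy gives the lower bound. For the upper bound, the Gaussian log-Sobolev inequality for $\gamma_\Sigma$ states that $D(P\|\gamma_\Sigma)\le\frac12\int\|\Sigma^{1/2}\nabla\log(dP/d\gamma_\Sigma)\|^2dP$. I will derive this form by applying the standard inequality $D\le\frac12 I(\cdot\|\gamma_{I_d})$ to the pushforward under $x\mapsto\Sigma^{-1/2}x$, which leaves relative entropy invariant. Since $\nabla\log(f/\gamma_\Sigma)=\rho+\Sigma^{-1}x$, the right side is exactly the conditional integrand. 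The inequality must hold for $H^1$ densities $f$ with finite second moment. I will get this by approximating through Gaussian smoothing $f*\gamma_{\varepsilon}$, using lower semicontinuity of $D$ and the convergence of relative Fisher information under smoothing (Stam's inequality gives $I(f_\varepsilon)\le I(f)$, and the moments converge). I expect this step to be the only real technical point. Taking expectations then gives \eqref{eq:relative-entropy-fisher-bound}, and \eqref{eq:relative-entropy-limit} follows.

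\emph{Entropy form.} Pointwise,
\[
D(P_{S_n\mid\etaVec_n}\|\gamma_\Sigma)=-h(S_n\mid\etaVec_n)+\tfrac12\log((2\pi)^d\det\Sigma)+\tfrac12\tr(\Sigma^{-1}\Cov(S_n\mid\etaVec_n)).
\]
Averaging, the last term becomes $d/2$, so
\[
\mathcal D_\Sigma=\tfrac12\log((2\pi e)^d\det\Sigma)-\E h(S_n\mid\etaVec_n).
\]
This identity is well defined because $\mathcal D_\Sigma$ is finite and the quadratic term is integrable. It shows that \eqref{eq:relative-entropy-limit} and \eqref{eq:conditional-entropy-limit-from-fisher} are equivalent.
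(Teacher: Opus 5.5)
Your proposal is correct and follows essentially the same route as the paper: expand the square in the definition of $\mathcal J_\Sigma$, use weak integration by parts (with a cutoff) together with $\E\Cov(S_n\mid\etaVec_n)=\Sigma$ to obtain the trace identity, invoke Corollary~\ref{cor:matrix}, apply the Gaussian logarithmic Sobolev inequality conditionally and average, and expand the relative entropy to get the entropy form. Your extra justification of the log-Sobolev inequality for nonsmooth $H^1$ densities, via the $\Sigma^{-1/2}$ pushforward and Gaussian smoothing, adds rigor where the paper simply cites Gross; it does not change the argument.
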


The proof of Corollary~\ref{cor:relative-information} is given in
Section~\ref{sec:relative-application-proof}, where the same relative Fisher
deficit controls a channel mutual-information gap.

\section{Proof of the fixed-dimensional criterion}
\label{sec:multivariate-proof}
In this section, we prove Theorem~\ref{thm:main} and Corollary~\ref{cor:matrix}.
Throughout this section, the notation is that of
Section~\ref{sec:multivariate-result}, and the dimension $d$ is fixed.

\subsection{First-order Fisher inequalities}
If $U,V\in\R^d$ are independent with finite Fisher information and
$a,b\ne0$ satisfy $a^2+b^2=1$, then \cite{dembo1991information}
\begin{equation}
   I(aU+bV)\le a^2I(U)+b^2I(V).
   \label{eq:weighted-fisher}
\end{equation}
Moreover, if $U$ has finite Fisher information and $V$ is independent of
$U$, then, for $a\ne0$, \cite[Corollary~14.3]{Upper} gives
\begin{equation}
   I(aU+bV)
   \le I(aU)
   =\frac1{a^2}I(U).
   \label{eq:onecomponent}
\end{equation}
We apply these inequalities pointwise to regular conditional laws.

\subsection{Conditional multivariate Lindeberg--Feller convergence}
Set
\[
   C_i=\Cov(X_i\mid\eta_i)=\E[X_iX_i^{\mathsf T}\mid\eta_i],
   \qquad
   V_n=\Cov(W_n\mid\etaVec_n)=\frac1n\sum_{i=1}^n C_i,
\]
and, for $\varepsilon>0$,
\[
   \Lambda_n(\varepsilon)
   =\frac1n\sum_{i=1}^n
   \E\!\left[
      \|X_i\|^2\mathbf1_{\{\|X_i\|>\varepsilon\sqrt n\}}
      \middle|\eta_i
   \right].
\]
Let $\mathcal P(\mathbb R^d)$ denote the set of all Borel probability
measures on $\mathbb R^d$. For a bounded Lipschitz function
$f:\mathbb R^d\to\mathbb R$, define
\[
   \|f\|_{\mathrm{BL}}
   \triangleq
   \|f\|_\infty+\operatorname{Lip}(f),
\]
where
\[
   \operatorname{Lip}(f)
   \triangleq
   \sup_{x\ne y}
   \frac{|f(x)-f(y)|}{\|x-y\|}.
\]
For $\mu,\nu\in\mathcal P(\mathbb R^d)$, define the bounded-Lipschitz
metric by
\[
   d_{\mathrm{BL}}(\mu,\nu)
   \triangleq
   \sup_{\|f\|_{\mathrm{BL}}\le1}
   \left|
      \int_{\mathbb R^d} f\,d\mu
      -
      \int_{\mathbb R^d} f\,d\nu
   \right|.
\]
This metric metrizes weak convergence on
$\mathcal P(\mathbb R^d)$; see, e.g.,
\cite[Theorem~11.3.3]{Dudley2002}.
For random probability measures, we use the phrase
\emph{weakly in probability} to mean convergence in probability with
respect to $d_{\mathrm{BL}}$.

Set
\[
   \nu_n=\mathcal L(W_n\mid\etaVec_n),
   \qquad
   \gamma_d=N(0,I_d).
\]
\begin{proposition}[Conditional multivariate CLT]
\label{prop:conditional-clt}
Under the assumptions of Theorem~\ref{thm:main},
\[
   d_{\mathrm{BL}}(\nu_n,\gamma_d)\longrightarrow0
   \qquad\text{in probability}.
\]
Equivalently,
\[
   \Law(W_n\mid\etaVec_n)\Longrightarrow N(0,\Id)
   \qquad\text{weakly in probability}.
\]
\end{proposition}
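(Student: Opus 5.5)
We reduce the statement to a deterministic Lindeberg--Feller theorem for triangular arrays, applied conditionally on $\etaVec_n$. Two random quantities drive it, $V_n$ and $\Lambda_n(\varepsilon)$. Since the pairs $(\xi_i,\eta_i)$ are i.i.d., the matrices $C_i$ are i.i.d. with $\E C_1=\Id$ by \eqref{eq:standardized-moments}. Each entry of $C_1$ is integrable, because $\|C_1\|_{\op}\le\tr C_1$ and $\E\tr C_1=d$. The strong law of large numbers, applied entrywise, therefore gives $V_n\to\Id$ almost surely. This is the covariance condition.

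For the Lindeberg condition, fix $\varepsilon>0$ and $M>0$. Set $g(\eta_i,M)=\E[\|X_i\|^2\mathbf1_{\{\|X_i\|>M\}}\mid\eta_i]$. For $n$ with $\varepsilon\sqrt n\ge M$ we have the pathwise bound
\[
   \Lambda_n(\varepsilon)\le\frac1n\sum_{i=1}^n g(\eta_i,M).
\]
The variables $g(\eta_i,M)$ are i.i.d. and integrable, with mean $\E[\|X_1\|^2\mathbf1_{\{\|X_1\|>M\}}]$. By the SLLN, almost surely
\[
   \limsup_n\Lambda_n(\varepsilon)\le\E[\|X_1\|^2\mathbf1_{\{\|X_1\|>M\}}].
\]
Take $M$ along the integers tending to infinity and use dominated convergence, since $\E\|X_1\|^2=d<\infty$. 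This gives $\Lambda_n(\varepsilon)\to0$ almost surely. Intersecting over rational $\varepsilon$ yields a single full-measure event $\Omega_0$ on which $V_n\to\Id$ and $\Lambda_n(\varepsilon)\to0$ for every $\varepsilon>0$. Log-concavity plays no role here; only the second-moment structure is used.

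Now fix $\omega\in\Omega_0$ and work with regular conditional laws. Given $\etaVec_n$, the vectors $X_1/\sqrt n,\dots,X_n/\sqrt n$ are independent and centered. Their covariances sum to $V_n(\omega)\to\Id$, and they satisfy the Lindeberg condition. The multivariate Lindeberg--Feller theorem applies; if needed we reduce to the scalar version by the Cram\'er--Wold device, using $u^{\mathsf T}V_nu\to\|u\|^2$ and the bound $|u^{\mathsf T}x|^2\le\|u\|^2\|x\|^2$ to transfer the Lindeberg condition. It gives $\nu_n(\omega)\Rightarrow\gamma_d$. Because $d_{\mathrm{BL}}$ metrizes weak convergence, $d_{\mathrm{BL}}(\nu_n,\gamma_d)\to0$ almost surely, hence in probability. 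The stated equivalence is just the definition of convergence weakly in probability.

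The main technical point is the conditioning step. We must check that, for almost every realization of $\etaVec_n$, the conditional law of $(X_1,\dots,X_n)$ is the product of the kernels $\Law(X_i\mid\eta_i)$. This follows from the independence of the pairs $(\xi_i,\eta_i)$ and the existence of regular conditional distributions on $\R^d$. A second point is that the conditional arrays change with $n$ only through the addition of new rows. This needs no uniformity, since the Lindeberg--Feller theorem is applied pathwise for each fixed $\omega$. We also need $d_{\mathrm{BL}}(\nu_n,\gamma_d)$ to be measurable. This follows by restricting the supremum to a countable family of bounded Lipschitz functions that is dense in the sense relevant to weak convergence.
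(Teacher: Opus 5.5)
Your proof is correct, and it differs from the paper's at one step: how the Lindeberg condition is verified. The rest of the skeleton is the same as the paper's: entrywise SLLN for $V_n$, the product of regular conditional kernels, pathwise application of the deterministic multivariate Lindeberg--Feller theorem, and the fact that $d_{\mathrm{BL}}$ metrizes weak convergence.

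The paper only computes $\E\Lambda_n(\varepsilon)=\E[\|X_1\|^2\mathbf1_{\{\|X_1\|>\varepsilon\sqrt n\}}]\to0$. This gives $\Lambda_n(\varepsilon)\to0$ merely in probability. The paper must then run a diagonal subsequence extraction over rational $\varepsilon$ and invoke the subsequence criterion to get convergence in probability of $d_{\mathrm{BL}}(\nu_n,\gamma_d)$.

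You instead dominate $\Lambda_n(\varepsilon)$, for $\varepsilon\sqrt n\ge M$, by the i.i.d.\ average $\frac1n\sum_{i\le n}g(\eta_i,M)$, and apply the SLLN for each integer $M$. This gives $\Lambda_n(\varepsilon)\to0$ almost surely on a single full-measure event; the SLLN events over integer $M$ already serve every $\varepsilon>0$ at once. Your route therefore yields a strictly stronger, quenched statement: $d_{\mathrm{BL}}(\nu_n,\gamma_d)\to0$ for almost every environment, with no subsequence bookkeeping. It would also let the subsequence extraction at the start of the proof of Proposition~\ref{prop:Klimit} be dropped. The paper's route is slightly shorter at the Lindeberg step but delivers only the in-probability statement.

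Your remark on measurability of $d_{\mathrm{BL}}(\nu_n,\gamma_d)$ addresses a point the paper leaves implicit. To make it precise, take the countable family dense in the unit bounded-Lipschitz ball for uniform convergence on compacts; that ball is not separable in the sup norm. The uniform bound $\|f\|_\infty\le1$ and dominated convergence then justify restricting the supremum to this family.
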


\begin{proof}
See Appendix~\ref{app:conditional-lindeberg-proof}.
\end{proof}

\subsection{Identification of the Gaussian-smoothed limits}
Fix $t>0$ and define
\[
   W_{n,t}=W_n+\sqrt t\,G,
\]
where $G\sim N(0,\Id)$ is independent. Conditionally on $\etaVec_n$, let
$p_{n,t}$ denote the density of $W_{n,t}$.

\begin{lemma}[Gaussian convolution: derivative convergence]
\label{lem:C2conv}
Let $\mu_n$ be probability measures on $\R^d$ such that
$d_{\mathrm{BL}}(\mu_n,\mu)\to0$. For every fixed $t>0$ and multi-index
$\alpha$ with $|\alpha|\le2$,
\[
   D^\alpha(\mu_n*\phi_t)\longrightarrow D^\alpha(\mu*\phi_t)
\]
locally uniformly on $\R^d$, where $\phi_t$ is the $N(0,t\Id)$ density.
\end{lemma}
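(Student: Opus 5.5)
The plan is to write each derivative as an integral of a fixed smooth kernel against $\mu_n$, and to turn weak convergence into uniform convergence over $x$ via an equi-bounded-Lipschitz argument. For fixed $t>0$ and $|\alpha|\le2$, differentiation under the integral sign (justified by dominated convergence, since all derivatives of $\phi_t$ are bounded and integrable) gives
\[
   D^\alpha(\mu_n*\phi_t)(x)=\int_{\R^d}(D^\alpha\phi_t)(x-y)\,\mu_n(dy).
\]
The same representation holds for $\mu$. Set $g_x(y)=(D^\alpha\phi_t)(x-y)$. The key observation is that $D^\alpha\phi_t$ is a polynomial times a Gaussian. Hence both $\|D^\alpha\phi_t\|_\infty$ and $\|\nabla D^\alpha\phi_t\|_\infty$ are finite, depending only on $t$, $d$ and $|\alpha|$.

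Consequently $\|g_x\|_{\mathrm{BL}}\le M_{t,\alpha}<\infty$ uniformly in $x\in\R^d$. By the definition of the bounded-Lipschitz metric,
\[
   \sup_{x\in\R^d}\bigl|D^\alpha(\mu_n*\phi_t)(x)-D^\alpha(\mu*\phi_t)(x)\bigr|
   \le M_{t,\alpha}\,d_{\mathrm{BL}}(\mu_n,\mu)\longrightarrow0.
\]
This gives uniform convergence on all of $\R^d$, which is stronger than the local uniform convergence claimed.

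The only step needing care is the uniform Lipschitz bound. It comes from the third derivatives of $\phi_t$ being bounded, which holds because $\phi_t$ is Schwartz. Beyond that, one only needs to justify exchanging derivative and integral for a general probability measure, again via bounded derivatives and dominated convergence. I expect no real obstacle.

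This uniform bound is what the paper will use: since $d_{\mathrm{BL}}(\nu_n,\gamma_d)\to0$ in probability by Proposition~\ref{prop:conditional-clt}, the conditional smoothed densities $p_{n,t}$ and their first two derivatives converge to those of $N(0,(1+t)\Id)$ uniformly in probability.
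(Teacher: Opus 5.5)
Your proposal is correct and follows essentially the same route as the paper. Both arguments write $D^\alpha(\mu_n*\phi_t)(x)=\int D^\alpha\phi_t(x-z)\,\mu_n(dz)$, bound $\|g_x\|_{\mathrm{BL}}$ by the suprema of $D^\alpha\phi_t$ and $\nabla D^\alpha\phi_t$, and conclude from the definition of $d_{\mathrm{BL}}$. Your observation that this bound does not depend on $x$, so the convergence is in fact uniform on all of $\R^d$, is also valid; the paper's constant $C_{R,d,t,\alpha}$ likewise does not depend on $R$.
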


\begin{proof}
See Appendix~\ref{app:gaussian-smoothing-proof}.
\end{proof}

\begin{lemma}[Posterior-covariance Hessian identity]
\label{lem:hessian}
Let $X$ be any $\R^d$-valued random vector with finite second moment and let
\[
   Y=X+\sqrt t\,G,
   \qquad G\sim N(0,\Id),
\]
with $G$ independent of $X$.  Let $p_t$ be the density of $Y$, and define
\[
   m(y):=\E[X\mid Y=y].
\]
Then
\begin{align}
   \nabla\log p_t(y)
   &=\frac{m(y)-y}{t},
   \label{eq:tweedie}\\
   \nabla^2\log p_t(y)
   &=-\frac1t\Id
   +\frac1{t^2}\Cov(X\mid Y=y).
   \label{eq:hessianposterior}
\end{align}
If the law of $X$ is log-concave, then
\begin{equation}
   -\frac1t\Id
   \preceq
   \nabla^2\log p_t(y)
   \preceq0,
   \label{eq:hessianbound}
\end{equation}
and hence
\begin{equation}
   K(Y)\le\frac d{t^2}.
   \label{eq:Kbound}
\end{equation}
\end{lemma}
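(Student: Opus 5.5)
We prove Tweedie's formula and the Hessian identity by differentiating under the integral in the mixture representation of $p_t$, and then handle the log-concave bounds separately. Write $\mu$ for the law of $X$. Then
\[
   p_t(y)=\int_{\R^d}\phi_t(y-x)\,\mu(dx),
   \qquad
   \phi_t(z)=(2\pi t)^{-d/2}e^{-\|z\|^2/(2t)} .
\]
Since $\phi_t$ and all its derivatives are bounded by polynomials in $\|y-x\|$ times the Gaussian factor, dominated convergence justifies differentiating twice under the integral. The resulting integrands are dominated locally uniformly in $y$, using $\E\|X\|^2<\infty$ together with the boundedness of $\|z\|^k e^{-\|z\|^2/(4t)}$. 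The posterior of $X$ given $Y=y$ is $\phi_t(y-x)\mu(dx)/p_t(y)$, and $p_t>0$ everywhere. We have
\[
   \nabla\phi_t(y-x)=\frac{x-y}{t}\,\phi_t(y-x).
\]
Dividing by $p_t$ therefore gives \eqref{eq:tweedie} at once. Differentiating again,
\[
   \frac{\nabla^2p_t(y)}{p_t(y)}
   =\E\!\left[\frac{(X-y)(X-y)^{\mathsf T}}{t^2}-\frac1t\Id\,\middle|\,Y=y\right].
\]
Combining this with $\nabla^2\log p_t=\nabla^2p_t/p_t-(\nabla\log p_t)(\nabla\log p_t)^{\mathsf T}$ and \eqref{eq:tweedie}, the second moment about $y$ minus the outer product of the mean shift is exactly $\Cov(X\mid Y=y)$. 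This yields \eqref{eq:hessianposterior}.

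For the log-concave bounds, the lower bound $\nabla^2\log p_t\succeq -\frac1t\Id$ is immediate from \eqref{eq:hessianposterior}, because conditional covariances are positive semidefinite. This part holds for every $X$. The upper bound $\nabla^2\log p_t\preceq0$ is the statement that $p_t$ is log-concave. Since $p_t$ is smooth and positive, this follows from preservation of log-concavity under convolution (Prékopa–Leindler, cited via \cite{saumard2014log}), because the Gaussian density is log-concave. The law of $X$ is assumed log-concave; if it lives on a lower-dimensional affine set, we use the measure version, for which convolution with a Gaussian still gives a log-concave density. Equivalently, the upper bound reads $\Cov(X\mid Y=y)\preceq t\Id$, which is the Brascamp–Lieb bound for the posterior: that measure is $\frac1t$-strongly log-concave since its potential is $\psi(x)+\|y-x\|^2/(2t)$. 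Either route works; we use the convolution one as the primary argument.

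Finally, \eqref{eq:hessianbound} says that the symmetric matrix $H=\nabla^2\log p_t(y)$ has all eigenvalues in $[-1/t,0]$. Hence
\[
   \|H\|_{\HS}^2=\sum_j\lambda_j^2\le\frac{d}{t^2}.
\]
Integrating this against $p_t$ gives $K(Y)\le d/t^2$, which is \eqref{eq:Kbound}. The only delicate step is the justification of differentiation under the integral sign with merely a second-moment assumption, and the domination argument above handles it. The log-concave upper bound is the conceptual core, but it follows directly from the cited preservation property.
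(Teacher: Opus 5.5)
Your proposal is correct and follows essentially the same route as the paper. Both write $p_t$ as a Gaussian mixture and differentiate under the integral. You compute $\nabla^2 p_t/p_t$ directly where the paper differentiates the posterior mean, which is an equivalent calculation. Both then use Gaussian-convolution preservation of log-concavity for the upper bound, and integrate the eigenvalue bound against $p_t$ to get $K(Y)\le d/t^2$.

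One minor remark: all derivatives of $\phi_t$ are bounded, so differentiating under the integral needs no moment assumption on $X$.
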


\begin{proof}
See Appendix~\ref{app:hessian-proof}.
\end{proof}

\begin{proposition}[Identification of the smoothed $K$ limit]
\label{prop:Klimit}
Under the assumptions of Theorem~\ref{thm:main}, for every fixed $t>0$,
we have
\[
   K(W_{n,t}\mid\etaVec_n)
   \longrightarrow
   \frac{d}{(1+t)^2}
   \qquad\text{in probability},
\]
and therefore
\begin{equation}
   \mathcal K_n(t)
   :=\E K(W_{n,t}\mid\etaVec_n)
   \longrightarrow
   \frac{d}{(1+t)^2}.
   \label{eq:Kexpectlimit}
\end{equation}
\end{proposition}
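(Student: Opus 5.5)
Conditionally on $\etaVec_n$, the smoothed variable $W_{n,t}$ has the law $\nu_n*\phi_t$. By Proposition~\ref{prop:conditional-clt} we have $d_{\mathrm{BL}}(\nu_n,\gamma_d)\to0$ in probability, and the target $\gamma_d*\phi_t=N(0,(1+t)\Id)$ satisfies
\[
\nabla^2\log p=-\Id/(1+t),\qquad K=d/(1+t)^2 .
\]
The plan is therefore to prove a deterministic continuity statement and then transfer it to convergence in probability. The statement is: if $\mu_n$ are log-concave probability measures with $d_{\mathrm{BL}}(\mu_n,\gamma_d)\to0$, then $K(\mu_n*\phi_t)\to d/(1+t)^2$.

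The transfer uses the subsequence characterization of convergence in probability. Given any subsequence, extract a further subsequence along which $d_{\mathrm{BL}}(\nu_n,\gamma_d)\to0$ almost surely. Along it, almost every realization of $\etaVec_n$ gives a sequence $\nu_n$ that is log-concave, since convolutions of the log-concave conditional laws of the $X_i$ are log-concave, and that converges weakly to $\gamma_d$. The deterministic statement then yields $K(W_{n,t}\mid\etaVec_n)\to d/(1+t)^2$ almost surely along the sub-subsequence, hence in probability along the full sequence. Lemma~\ref{lem:hessian} gives $K(W_{n,t}\mid\etaVec_n)\le d/t^2$ uniformly. Bounded convergence then upgrades convergence in probability to convergence of expectations, which gives \eqref{eq:Kexpectlimit}.

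For the deterministic statement, write $p_n=\mu_n*\phi_t$ and $p=\gamma_d*\phi_t$.

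\textbf{Pointwise convergence of the integrand.} Lemma~\ref{lem:C2conv} gives $p_n\to p$, $\nabla p_n\to\nabla p$ and $\nabla^2p_n\to\nabla^2p$ locally uniformly. Since $p>0$ is continuous, $p_n$ is bounded below on each compact set for large $n$. Hence
\[
\nabla^2\log p_n=\frac{\nabla^2p_n}{p_n}-\frac{\nabla p_n\nabla p_n^{\mathsf T}}{p_n^2}\to\nabla^2\log p
\]
locally uniformly, and so $\|\nabla^2\log p_n\|_{\HS}^2\,p_n\to\|\nabla^2\log p\|_{\HS}^2\,p$ locally uniformly.

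\textbf{Domination and tightness.} By \eqref{eq:hessianbound}, the integrand is at most $(d/t^2)\,p_n$. The $p_n$ converge weakly to $p$ and are probability densities, so the family $\{p_n\}$ is tight. Fix $\varepsilon>0$ and choose $R$ with $\int_{\|x\|>R}p_n<\varepsilon$ for all $n$ and also for $p$. The contribution from $\|x\|>R$ is then at most $d\varepsilon/t^2$, uniformly in $n$. On the ball $\|x\|\le R$, local uniform convergence gives convergence of the integrals. Letting $\varepsilon\to0$ yields $K(p_n)\to K(p)=d/(1+t)^2$.

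The main obstacle is measurability and the conditional bookkeeping in the transfer step, namely that $\nu_n$ is a genuine regular conditional law that is log-concave almost surely. The analytic core is easy precisely because the log-concave Hessian bound \eqref{eq:hessianbound} supplies a uniform dominating constant. One caveat deserves attention: log-concavity of $\nu_n$ requires log-concavity of each conditional law of $X_i$, and that holds because $X_i$ is an affine image of $\xi_i$ given $\eta_i$.
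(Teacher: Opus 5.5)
Your proposal is correct and follows essentially the same route as the paper. It uses subsequence extraction to get almost-sure $d_{\mathrm{BL}}$ convergence, locally uniform convergence of $\nabla^2\log p_{n,t}$ on balls via Lemma~\ref{lem:C2conv} and positivity of the Gaussian limit, the uniform Hessian bound $d/t^2$ from Lemma~\ref{lem:hessian} for the tails, and bounded convergence for the expectations. The only difference is cosmetic: you control the tail mass by tightness of the weakly convergent family, whereas the paper applies the portmanteau theorem on balls with Gaussian-null boundary and then lets $R\to\infty$.
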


\begin{proof}
By Proposition~\ref{prop:conditional-clt}, we know that
\begin{equation}
   d_{\mathrm{BL}}(\nu_n,\gamma_d)
   \overset{P}{\longrightarrow}0.
\end{equation}
Therefore, for any subsequence $\{n_k\}$, there exists a further subsequence
$\{n_{k_j}\}$ such that
\[
d_{\mathrm{BL}}(\nu_{n_{k_j}},\gamma_d)
\longrightarrow 0
\qquad \text{a.s.}
\]
Fix an environment $\omega$ in the corresponding probability-one event.
For notational simplicity, we suppress the dependence on $\omega$ in what
follows.

By Lemma~\ref{lem:C2conv}, the Gaussian-smoothed conditional densities
satisfy
\[
   p_{n_{k_j},t}\to\phi_{1+t},
   \qquad
   \nabla p_{n_{k_j},t}\to\nabla\phi_{1+t},
   \qquad
   \nabla^2p_{n_{k_j},t}\to\nabla^2\phi_{1+t}
\]
locally uniformly. On each fixed ball $B_R$, the limiting Gaussian
density has a strictly positive minimum. Hence, for all sufficiently large $j$,
\[
   \nabla^2\log p_{n_{k_j},t}
   =
   \frac{\nabla^2p_{n_{k_j},t}}{p_{n_{k_j},t}}
   -
   \frac{
      \nabla p_{n_{k_j},t}
      \nabla p_{n_{k_j},t}^{\mathsf T}
   }{
      p_{n_{k_j},t}^2
   }
\]
converges uniformly on $B_R$ to
\[
   \nabla^2\log\phi_{1+t}
   =
   -\frac1{1+t}\Id.
\]
Therefore,
\[
   \int_{B_R}
   \|\nabla^2\log p_{n_{k_j},t}\|_{\HS}^2
   p_{n_{k_j},t}
   \longrightarrow
   \int_{B_R}
   \frac d{(1+t)^2}\phi_{1+t}.
\]

For the tails, Lemma~\ref{lem:hessian} gives the deterministic
pointwise bound
\[
   \|\nabla^2\log p_{n_{k_j},t}\|_{\HS}^2
   \le
   \frac d{t^2}.
\]
Thus,
\[
   \int_{B_R^c}
   \|\nabla^2\log p_{n_{k_j},t}\|_{\HS}^2
   p_{n_{k_j},t}
   \le
   \frac d{t^2}
   \Pp\left(
      W_{n_{k_j},t}\notin B_R
      \mid
      \etaVec_{n_{k_j}}
   \right).
\]
By Proposition~\ref{prop:conditional-clt} and continuity of Gaussian convolution, we have 
\begin{equation}
   \Law(W_{n_{k_{j}},t}\mid\etaVec_{n_{k_{j}}})
   \Longrightarrow N(0,(1+t)\Id).
\end{equation}
Hence, for every $R>0$ such that
$\partial B_R$ has zero Gaussian measure,
\[
\Pp\left(
   W_{n_{k_j},t}\notin B_R
   \mid
   \etaVec_{n_{k_j}}
\right)
\longrightarrow
\Pp\left(
   N(0,(1+t)\Id)\notin B_R
\right).
\]
Letting first $j\to\infty$ and then $R\to\infty$ gives
\[
   K(W_{n_{k_j},t}\mid\etaVec_{n_{k_j}})
   \longrightarrow
   K(N(0,(1+t)\Id))
   =
   \frac d{(1+t)^2}.
\]
Since every subsequence admits a further subsequence along which the
above almost-sure convergence holds, it follows that
\[
   K(W_{n,t}\mid\etaVec_n)
   \overset{P}{\longrightarrow}
   \frac d{(1+t)^2}
\]
for the original sequence.

Finally, from Lemma~\ref{lem:hessian}, we obtain
\begin{equation}\label{eq:K-bound}
     0
   \le
   K(W_{n,t}\mid\etaVec_n)
   \le
   \frac d{t^2}, \qquad n\ge1.
\end{equation}
Convergence in
probability together with this uniform bound implies convergence in
$L^1$, which yields \eqref{eq:Kexpectlimit}.
\end{proof}

\begin{proposition}[Identification of the smoothed Fisher limit]
\label{prop:Ilimit}
Under the assumptions of Theorem~\ref{thm:main}, for every fixed $t>0$,
we have
\begin{equation}
   \mathcal I_n(t)
   :=\E I(W_{n,t}\mid\etaVec_n)
   \longrightarrow
   \frac d{1+t}.
   \label{eq:Iexpectlimit}
\end{equation}
\end{proposition}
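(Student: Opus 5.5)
Prove $I(W_{n,t}\mid\etaVec_n)\to d/(1+t)$ in probability for each fixed $t>0$, exactly as in Proposition~\ref{prop:Klimit}, and then upgrade to convergence of expectations using a uniform bound.

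\emph{Uniform bound.} Identity \eqref{eq:tweedie} of Lemma~\ref{lem:hessian} gives
\[
\nabla\log p_{n,t}(y)=\frac{m(y)-y}{t}.
\]
Since $Y=X+\sqrt t\,G$, we have $y-m(y)=\sqrt t\,\E[G\mid Y=y]$. Jensen's inequality then gives
\[
I(W_{n,t}\mid\etaVec_n)\le \frac{\E\|G\|^2}{t}=\frac dt,
\]
deterministically. This is just the one-component bound \eqref{eq:onecomponent} with the Gaussian factor.

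\emph{Convergence in probability.} Fix a subsequence. By Proposition~\ref{prop:conditional-clt}, pass to a further subsequence along which $d_{\mathrm{BL}}(\nu_{n_{k_j}},\gamma_d)\to0$ almost surely, and fix such an environment.
\begin{itemize}
\item[(i)] \emph{Compact part.} Lemma~\ref{lem:C2conv} gives $p_{n_{k_j},t}\to\phi_{1+t}$ and $\nabla p_{n_{k_j},t}\to\nabla\phi_{1+t}$ locally uniformly. Since $\phi_{1+t}$ is bounded below on $B_R$, the scores converge uniformly on $B_R$ to $-y/(1+t)$. Hence
\[
\int_{B_R}\|\nabla\log p_{n_{k_j},t}\|^2p_{n_{k_j},t}\to\int_{B_R}\frac{\|y\|^2}{(1+t)^2}\phi_{1+t}.
\]
\item[(ii)] \emph{Tail.} A constant pointwise bound is not available here, because the score grows linearly. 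Instead, bound the score through $\|m(y)-y\|^2\le\E[\|Y-X\|^2\mid Y=y]$. This gives
\[
\int_{B_R^c}\|\nabla\log p_{n,t}\|^2p_{n,t}\le\frac1{t^2}\,\E\bigl[t\|G\|^2\mathbf1_{\{\|W_{n,t}\|>R\}}\mid\etaVec_n\bigr].
\]
\item[(iii)] \emph{Controlling the tail.} Split according to whether $\|G\|\le M$ or not. When $\|G\|\le M$, the term is bounded by
\[
\frac{M^2}{t}\,\Pp\bigl(\|W_{n,t}\|>R\mid\etaVec_n\bigr),
\]
and this tends to the Gaussian tail by weak convergence. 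When $\|G\|>M$, the term is bounded by $\E\bigl[\|G\|^2\mathbf1_{\{\|G\|>M\}}\bigr]/t$ uniformly in $n$, because $G$ is independent of $\etaVec_n$ and of $W_n$.
\end{itemize}
Let $j\to\infty$, then $R\to\infty$, then $M\to\infty$. The limit equals $I(N(0,(1+t)\Id))=d/(1+t)$. The subsequence principle then gives convergence in probability along the full sequence.

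\emph{Convergence of expectations.} Combine the convergence in probability with the bound $0\le I\le d/t$. Bounded convergence yields \eqref{eq:Iexpectlimit}.

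An alternative route integrates the dissipation identity \eqref{eq:fisher-dissipation}: write $I(W_{n,t})=\int_t^\infty K(W_{n,s})\,ds$ pointwise, using $I(W_{n,s})\to0$ as $s\to\infty$, which follows from the bound $d/s$. Dominated convergence with the bound $d/s^2$ from \eqref{eq:K-bound}, which is integrable on $[t,\infty)$, together with Proposition~\ref{prop:Klimit}, then gives $\mathcal I_n(t)\to\int_t^\infty d/(1+s)^2\,ds=d/(1+t)$. This route is cleaner, and I would present it as the main proof.

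The main obstacle is justifying the dissipation identity for the conditional smoothed laws and the vanishing $I(W_{n,s})\to0$ as $s\to\infty$. The smoothed densities are smooth and positive with finite second moments, so \eqref{eq:fisher-dissipation} applies on $[t,S]$. The bound $d/S$ kills the boundary term, and Fubini, justified by the same domination, exchanges $\E$ and $\int ds$.
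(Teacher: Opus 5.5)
Your main route is exactly the paper's proof and is correct: you write $I(W_{n,t}\mid\etaVec_n)=\int_t^\infty K(W_{n,s}\mid\etaVec_n)\,ds$ via the dissipation identity, kill the boundary term with the $d/T$ bound from \eqref{eq:onecomponent}, exchange $\E$ and $\int ds$, and conclude by dominated convergence with majorant $d/s^2$ and Proposition~\ref{prop:Klimit}. Your first, direct route is also valid and bypasses Proposition~\ref{prop:Klimit}. It combines local uniform score convergence, the tail bound $\frac1t\E[\|G\|^2\mathbf 1_{\{\|W_{n,t}\|>R\}}\mid\etaVec_n]$ split at $\|G\|=M$, and bounded convergence under $I\le d/t$.
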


\begin{proof}
Fix a conditional law.  By the Fisher dissipation identity
\eqref{eq:fisher-dissipation}, for $T>t$,
\[
   I(W_{n,t}\mid\etaVec_n)
   -I(W_{n,T}\mid\etaVec_n)
   =\int_t^T K(W_{n,s}\mid\etaVec_n)\,ds.
\]
By \eqref{eq:onecomponent}, we obtain 
\[
   I(W_{n,T}\mid\etaVec_n)
   \le I(\sqrt T\,G)
   =\frac dT.
\]
Hence $I(W_{n,T}\mid\etaVec_n)\to0$ as $T\to\infty$, and therefore
\[
   I(W_{n,t}\mid\etaVec_n)
   =\int_t^\infty K(W_{n,s}\mid\etaVec_n)\,ds.
\]
Taking expectations and using Tonelli's theorem \cite{Kallenberg2021}  gives
\[
   \mathcal I_n(t)=\int_t^\infty\mathcal K_n(s)\,ds.
\]
For $s\ge t$, Lemma~\ref{lem:hessian} gives
\[
   0\le\mathcal K_n(s)\le\frac d{s^2},
\]
and $d/s^2$ is integrable on $[t,\infty)$.  Proposition~\ref{prop:Klimit}
and dominated convergence therefore yield
\[
   \lim_{n\to\infty}\mathcal I_n(t)
   =\int_t^\infty\frac d{(1+s)^2}\,ds
   =\frac d{1+t}.
\]
\end{proof}

\subsection{Proof of Theorem~\ref{thm:main}}

\begin{proof}[Proof of Theorem~\ref{thm:main}]

\emph{Necessity.}
If
\[
\mathcal I_n\longrightarrow d<\infty,
\]
then $\mathcal I_n<\infty$ for all sufficiently large $n$. Hence there
exists $n_0\ge1$ such that
\[
\mathcal I_{n_0}<\infty.
\]

\medskip
\noindent
\emph{Sufficiency.}
Assume that
\[
\mathcal I_{n_0}<\infty
\]
for some $n_0\ge1$.

We first note that the finite-Fisher assumption propagates to every
$n\ge n_0$. For $n>n_0$, condition on $\etaVec_n$ and split
\[
W_n
=
\sqrt{\frac{n_0}{n}}\,W_{n_0}^{(1)}
+
\sqrt{\frac{n-n_0}{n}}\,R_{n-n_0},
\]
where the two blocks are conditionally independent. By the classical
one-component Fisher inequality \eqref{eq:onecomponent},
\[
I(W_n\mid\etaVec_n)
\le
\frac{n}{n_0}
I(W_{n_0}^{(1)}\mid\etaVec_{n_0}^{(1)}).
\]
Taking expectations gives
\begin{equation}
\mathcal I_n
\le
\frac{n}{n_0}\mathcal I_{n_0}
<\infty,
\qquad n\ge n_0.
\label{eq:multi-eventual-finiteness}
\end{equation}

We next establish convergence along the dyadic subsequence. Set
\[
N_m:=2^m n_0,
\qquad m\ge0,
\]
and, for $t>0$, define
\[
\mathcal K_m(t)
:=
\E K(W_{N_m}+\sqrt t\,G\mid\etaVec_{N_m}).
\]

Lemma~\ref{lem:4.2}, applied with
$\lambda=1/\sqrt2$, gives, for independent log-concave random vectors
$U,V\in\R^d$ satisfying the regularity assumptions there,
\begin{equation}
K\left(\frac{U+V}{\sqrt2}\right)
\le
\frac{K(U)+K(V)}2.
\label{eq:multi-Kmean-direct}
\end{equation}

Split the $N_{m+1}=2N_m$ summands into two conditionally independent
blocks. Let $G_1,G_2$ be independent standard Gaussian vectors,
independent of all other random variables. Conditionally on the two
environment blocks,
\[
W_{N_{m+1}}+\sqrt t\,G
\overset{d}{=}
\frac{
(W_{N_m}^{(1)}+\sqrt t\,G_1)
+
(W_{N_m}^{(2)}+\sqrt t\,G_2)
}{\sqrt2}.
\]
The two Gaussian-smoothed conditional block densities are log-concave.
Applying \eqref{eq:multi-Kmean-direct} pointwise in the conditioning
variables and then taking expectations yields
\begin{equation}
\mathcal K_{m+1}(t)
\le
\mathcal K_m(t),
\qquad t>0.
\label{eq:dyadic-K-monotone-multi}
\end{equation}
Consequently,
\begin{equation}
0\le
\mathcal K_m(t)
\le
\mathcal K_0(t),
\qquad m\ge0,\quad t>0.
\label{eq:dyadic-K-domination}
\end{equation}

We now remove the Gaussian smoothing. By
\cite[Corollary~14.3]{Upper}, if $I(X)<\infty$, then
\begin{equation}
I(X+\sqrt s\,G)
\longrightarrow
I(X)
\qquad\text{as }s\downarrow0.
\label{eq:fisher-gaussian-zero-limit}
\end{equation}
On the other hand, the Fisher dissipation identity
\eqref{eq:fisher-dissipation} gives, for $0<\varepsilon<t$,
\[
I(X+\sqrt{\varepsilon}\,G)
-
I(X+\sqrt t\,G)
=
\int_{\varepsilon}^{t}
K(X+\sqrt s\,G)\,ds.
\]
Letting $\varepsilon\downarrow0$ and using
\eqref{eq:fisher-gaussian-zero-limit}, we obtain
\begin{equation}
I(X)-I(X+\sqrt t\,G)
=
\int_0^t K(X+\sqrt s\,G)\,ds
\label{eq:multi-endpoint-dissipation}
\end{equation}
whenever $I(X)<\infty$.

By \eqref{eq:multi-eventual-finiteness},
$\mathcal I_{N_m}<\infty$ for every $m$. Hence
\[
I(W_{N_m}\mid\etaVec_{N_m})<\infty
\qquad\text{a.s.},
\]
so \eqref{eq:multi-endpoint-dissipation} can be applied to the
conditional law of $W_{N_m}$. Taking expectations gives
\begin{equation}
\mathcal I_{N_m}
=
\mathcal I_{N_m}(t)
+
\int_0^t\mathcal K_m(s)\,ds.
\label{eq:dyadic-dissipation}
\end{equation}

For $m=0$, the same identity gives
\[
\int_0^t\mathcal K_0(s)\,ds
=
\mathcal I_{n_0}
-
\mathcal I_{n_0}(t)
\le
\mathcal I_{n_0}
<\infty.
\]
Thus $\mathcal K_0$ is an integrable dominating function on $(0,t)$
for the sequence $\{\mathcal K_m\}_{m\ge0}$.

For every fixed $s>0$, Proposition~\ref{prop:Klimit} gives
\begin{equation}\label{eq:K-limit}
\mathcal K_m(s)
=
\E K(W_{N_m}+\sqrt s\,G\mid\etaVec_{N_m})
\longrightarrow
\frac{d}{(1+s)^2} 
\qquad\text{as } m\to\infty.
\end{equation}
Therefore, by  \eqref{eq:K-limit},\eqref{eq:dyadic-K-domination} and the dominated
convergence theorem,
\begin{equation}
\int_0^t\mathcal K_m(s)\,ds
\longrightarrow
\int_0^t\frac{d}{(1+s)^2}\,ds.
\label{eq:dyadic-K-integral-limit}
\end{equation}
At the same time, Proposition~\ref{prop:Ilimit} yields
\begin{equation}
\mathcal I_{N_m}(t)
\longrightarrow
\frac{d}{1+t}.
\label{eq:dyadic-smoothed-I-limit}
\end{equation}

Passing to the limit $m\to\infty$ in
\eqref{eq:dyadic-dissipation} and using
\eqref{eq:dyadic-K-integral-limit} and
\eqref{eq:dyadic-smoothed-I-limit}, we obtain
\begin{align}
\lim_{m\to\infty}\mathcal I_{N_m}
&=
\frac{d}{1+t}
+
\int_0^t\frac{d}{(1+s)^2}\,ds
\notag\\
&=
\frac{d}{1+t}
+
d\left(1-\frac1{1+t}\right)
\notag\\
&=
d.
\label{eq:multi-dyadic-final}
\end{align}
Thus
\begin{equation}
\lim_{m \rightarrow \infty}\mathcal I_{2^m n_0}
= d.
\label{eq:multi-dyadic-subsequence}
\end{equation}

It remains to pass from the dyadic subsequence to the full sequence.
Condition on the full environment and split $W_{m+n}$ into its first
$m$ and last $n$ normalized blocks. Applying the Blachman--Stam
inequality \eqref{eq:weighted-fisher} pointwise in the conditioning
variables and then taking expectations gives
\[
\mathcal I_{m+n}
\le
\frac{m}{m+n}\mathcal I_m
+
\frac{n}{m+n}\mathcal I_n.
\]
Equivalently, the eventually finite sequence
\[
a_n:=n\mathcal I_n
\]
is subadditive:
\[
a_{m+n}\le a_m+a_n.
\]
Fekete's lemma for extended-real subadditive sequences gives the existence of
\[
   \lim_{n\to\infty}\mathcal I_n
\]
(the finitely many possibly infinite initial terms are irrelevant). Hence, by \eqref{eq:multi-dyadic-subsequence}, we obtain
\begin{equation}
   \lim_{n\to\infty}\mathcal I_n=d.
\end{equation}

\end{proof}

\subsection{Matrix Cram\'er--Rao and inversion Jensen inequalities}

The following matrix inequalities are needed only for the Fisher-matrix
corollary, not for identifying the scalar subadditive limit.

\begin{lemma}[Matrix Cram\'er--Rao inequality]
\label{lem:cramerrao}
Let $Y\in\R^d$ have a full-dimensional density, finite covariance
matrix $V\succ0$, and finite Fisher information. Then
\[
   \Jmat(Y)\succeq V^{-1}.
\]
\end{lemma}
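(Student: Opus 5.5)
The plan is the standard covariance argument for the Cram\'er--Rao bound, done with the weak score. Let $\rho=\rho_Y$ be the score of the density $f$ of $Y$, and let $m=\E Y$. The key identity to establish is
\[
   \E\bigl[\rho(Y)(Y-m)^{\mathsf T}\bigr]=-\Id.
\]
Formally this follows by integrating by parts:
\[
   \int \partial_j f(x)\,(x_k-m_k)\,dx=-\delta_{jk}.
\]
Justifying it is the main obstacle, since $f$ need not be smooth or positive everywhere, and we only know $\sqrt f\in H^1(\R^d)$ and that $Y$ has finite second moment.

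To handle this, we write $\nabla f=2\sqrt f\,\nabla\sqrt f\in L^1$. This lies in $L^1$ by Cauchy--Schwarz, and it is the weak gradient of $f\in W^{1,1}$ because the product rule holds for $H^1$ functions. We also have $\rho f=\nabla f$ a.e. (on $\{f=0\}$ both sides vanish, since $\nabla\sqrt f=0$ a.e. there). Now take a cutoff $\chi_R(x)=\chi(x/R)$ with $\chi\in C_c^\infty$, $\chi=1$ on $B_1$, and test the weak derivative against the $C_c^\infty$ function $\chi_R(x)(x_k-m_k)$. This gives
\[
   \int \partial_j f\,\chi_R\,(x_k-m_k)
   =-\int f\,\bigl(\delta_{jk}\chi_R+(x_k-m_k)\partial_j\chi_R\bigr).
\]
Let $R\to\infty$. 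The left side converges by dominated convergence, because $|\partial_j f|\,|x_k-m_k|\le |\rho|\,|x-m|\,f$ is integrable by Cauchy--Schwarz, using $I(f)<\infty$ and the finite second moment. On the right, the error term is bounded by $C\int_{\{\|x\|\ge R\}}f\,\|x-m\|/R\to0$. This yields the identity.

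With the identity in hand, fix $a\in\R^d$ and apply Cauchy--Schwarz to the scalars $a^{\mathsf T}\rho(Y)$ and $a^{\mathsf T}V^{-1}(Y-m)$. Since
\[
   \E\bigl[a^{\mathsf T}\rho\,(Y-m)^{\mathsf T}V^{-1}a\bigr]=-a^{\mathsf T}V^{-1}a,
\]
we get
\[
   (a^{\mathsf T}V^{-1}a)^2\le (a^{\mathsf T}\Jmat(Y)a)\,(a^{\mathsf T}V^{-1}VV^{-1}a)=(a^{\mathsf T}\Jmat(Y)a)(a^{\mathsf T}V^{-1}a).
\]
Because $V\succ0$, for $a\neq0$ we can divide to obtain $a^{\mathsf T}\Jmat(Y)a\ge a^{\mathsf T}V^{-1}a$. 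As this holds for every $a$, we conclude $\Jmat(Y)\succeq V^{-1}$.

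An equivalent route avoids the division: expand
\[
   0\preceq\E\bigl[(\rho+V^{-1}(Y-m))(\rho+V^{-1}(Y-m))^{\mathsf T}\bigr]=\Jmat(Y)-2V^{-1}+V^{-1}=\Jmat(Y)-V^{-1},
\]
using the cross-moment identity and its transpose. This is the form I would write, since it gives the matrix inequality in one line.
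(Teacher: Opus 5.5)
Your proof is correct. It differs from the paper only in that the paper gives no argument for this lemma and simply cites the multivariate covariance--Fisher information inequality of Kagan and Landsman.

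You prove it directly in two steps:
\begin{itemize}
\item First, you derive the cross-moment identity $\E[\rho(Y)(Y-m)^{\mathsf T}]=-\Id$ from the weak Sobolev definition. This uses the product rule $\nabla f=2\sqrt f\,\nabla\sqrt f\in L^1$, testing against $\chi_R(x)(x_k-m_k)$, and letting $R\to\infty$, which is controlled by $\int\|\rho\|\,\|x-m\|f<\infty$ and the finite first moment.
\item Second, you complete the square: $0\preceq\E[(\rho+V^{-1}(Y-m))(\rho+V^{-1}(Y-m))^{\mathsf T}]=\Jmat(Y)-V^{-1}$.
\end{itemize}

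The citation is shorter, but it leaves implicit that the cited inequality holds under the paper's weak definition of Fisher information. Under that definition the densities may be nonsmooth, may vanish, or may be compactly supported, and these are exactly the conditional laws to which the lemma is applied in Corollary~\ref{cor:matrix}. Your argument checks this applicability directly. It needs only a density, a finite second moment, and $\sqrt f\in H^1$, with $V\succ0$ used only to invert.

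Your argument also fills a gap elsewhere in the paper. The identity you establish is the same one the paper uses, with only a one-line sketch, in \eqref{eq:conditional-score-ibp} for Corollary~\ref{cor:relative-information}. Your cutoff argument therefore also supplies the justification for that step.
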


\begin{proof}
This is the multivariate covariance--Fisher information inequality;
see Kagan and Landsman~\cite{KaganLandsman1999}.
\end{proof}

\subsection{Fisher matrix convergence}
\label{sec:fisher-matrix-proof}
\begin{proof}[Proof of Corollary~\ref{cor:matrix}]
Since $W_n=\Sigma^{-1/2}S_n$, we obtain
\begin{equation}\label{eq:matrix-transform-SW}
      \Jmat(W_n\mid\etaVec_n)
   =\Sigma^{1/2}
    \Jmat(S_n\mid\etaVec_n)
    \Sigma^{1/2}.
\end{equation}
Let
\[
   A_n
   :=
   \E\Jmat(S_n\mid\etaVec_n).
\]
By Theorem~\ref{thm:main},
\[
   \mathcal I_n
   =
   \E I(W_n\mid\etaVec_n)
   =
   \E\tr\Jmat(W_n\mid\etaVec_n)
   \longrightarrow d.
\]
Hence $\mathcal I_n<\infty$ for all sufficiently large $n$.

Since $\Jmat(W_n\mid\etaVec_n)$ is positive semidefinite, using
\eqref{eq:matrix-transform-SW} and the cyclicity of the trace gives
\begin{align}
   \tr\Jmat(S_n\mid\etaVec_n)
   &=
   \tr\left(
      \Sigma^{-1/2}
      \Jmat(W_n\mid\etaVec_n)
      \Sigma^{-1/2}
   \right)
   \notag\\
   &=
   \tr\left(
      \Sigma^{-1}
      \Jmat(W_n\mid\etaVec_n)
   \right)
   \notag\\
   &\le
   \|\Sigma^{-1}\|_{\op}
   \tr\Jmat(W_n\mid\etaVec_n).
\end{align}
Indeed, for positive-semidefinite matrices $C$ and $D$,
\[
   \tr(CD)
   \le
   \|C\|_{\op}\tr(D).
\]
Taking expectations therefore yields
\begin{align}
   \E\tr\Jmat(S_n\mid\etaVec_n)
   &\le
   \|\Sigma^{-1}\|_{\op}
   \E\tr\Jmat(W_n\mid\etaVec_n)
   \notag\\
   &=
   \|\Sigma^{-1}\|_{\op}\mathcal I_n
   <\infty.
   \label{eq:An-finite-trace}
\end{align}
Since $\Jmat(S_n\mid\etaVec_n)$ is positive semidefinite,
\[
\left|\Jmat(S_n\mid\etaVec_n)_{ij}\right|
\le
\tr\Jmat(S_n\mid\etaVec_n).
\]
Hence \eqref{eq:An-finite-trace} implies that every entry of
$\Jmat(S_n\mid\etaVec_n)$ is integrable. Therefore
\[
   A_n
   =
   \E\Jmat(S_n\mid\etaVec_n)
\]
is well defined and has finite entries for all sufficiently large $n$.
Applying Lemma~\ref{lem:cramerrao} to the conditional law of
\(S_n\) given \(\etaVec_n\), we have, almost surely,
\begin{equation}\label{eqf:CR}
   \Jmat(S_n\mid\etaVec_n)
   \succeq
   \Cov(S_n\mid\etaVec_n)^{-1}.
\end{equation}
Taking traces and using \eqref{eq:An-finite-trace},
\[
   \E\tr\!\left(
      \Cov(S_n\mid\etaVec_n)^{-1}
   \right)
   \le
   \E\tr\Jmat(S_n\mid\etaVec_n)
   <\infty.
\]
Hence
\[
   \E\left[
      \Cov(S_n\mid\etaVec_n)^{-1}
   \right]
\]
is well defined. Taking expectations in \eqref{eqf:CR}
therefore gives
\[
   A_n
   \succeq
   \E\left[
      \Cov(S_n\mid\etaVec_n)^{-1}
   \right].
\]
 Since the inversion map $V\mapsto V^{-1}$ is operator convex on the
positive-definite cone, Jensen's operator inequality
\cite{HansenPedersen2003} gives
\[
   \E\left[
      \Cov(S_n\mid\etaVec_n)^{-1}
   \right]
   \succeq
   \left(
      \E\Cov(S_n\mid\etaVec_n)
   \right)^{-1}.
\]
Since
\[
   \E\Cov(S_n\mid\etaVec_n)=\Sigma,
\]
we obtain
\begin{equation}
   A_n\succeq\Sigma^{-1}.
   \label{eq:matrixlower}
\end{equation}
Theorem~\ref{thm:main} gives
\[
   \tr(\Sigma A_n)\longrightarrow d.
\]
Define
\[
   B_n:=\Sigma^{1/2}(A_n-\Sigma^{-1})\Sigma^{1/2}.
\]
By \eqref{eq:matrixlower}, $B_n\succeq0$, and
\[
   \tr(B_n)
   =\tr(\Sigma A_n)-d
   \longrightarrow0.
\]
For a positive-semidefinite matrix,
\[
   \|B_n\|_{\op}\le\tr(B_n),
\]
so $\|B_n\|_{\op}\to0$.  Since
\[
   A_n-\Sigma^{-1}
   =\Sigma^{-1/2}B_n\Sigma^{-1/2},
\]
we obtain
\[
   \|A_n-\Sigma^{-1}\|_{\op}
   \le
   \|\Sigma^{-1/2}\|_{\op}^2\|B_n\|_{\op}
   \longrightarrow0.
\]
Therefore
\[
   \E\Jmat(S_n\mid\etaVec_n)\longrightarrow\Sigma^{-1}
\]
in operator norm.  Taking traces gives
\[
   \E I(S_n\mid\etaVec_n)\longrightarrow\tr(\Sigma^{-1}).
\]

\end{proof}

\section{Information-theoretic consequences for state-dependent aggregate noise}
\label{sec:applications}
This section develops information-theoretic consequences of the
conditional Fisher information CLT for state-dependent aggregate
noise. We first show that the Fisher-matrix limit implies vanishing
conditional relative Fisher information and, through the Gaussian
logarithmic Sobolev inequality, vanishing conditional relative
entropy. We then derive two operational consequences: a low-SNR slope
formula for fixed finite-constellation signaling, governed directly
by the averaged conditional Fisher information matrix, and a
Gaussian-signaling result at fixed signal covariance.

In both operational results, \(\etaVec_n\) is interpreted as side
information available at the receiver about the states of the
individual noise components, while the normalization by \(n^{-1/2}\)
ensures that
\[
   \E\Cov(S_n\mid\etaVec_n)=\Sigma
   \qquad\text{for every }n.
\]

\subsection{Relative Fisher information controls conditional Gaussianity}
\label{sec:relative-application-proof}

\begin{proof}[Proof of Corollary~\ref{cor:relative-information}]
For every sufficiently large \(n\), the Fisher matrix is integrable by
Corollary~\ref{cor:matrix}.  Applying the weak integration-by-parts
identity to the conditional law of \(S_n\) given \(\etaVec_n\) gives
\begin{equation}
   \E\!\left[
      \rho_{S_n\mid\etaVec_n}(S_n)S_n^{\mathsf T}
      \,\middle|\,\etaVec_n
   \right]
   =-\Id
   \qquad\text{a.s.}
   \label{eq:conditional-score-ibp}
\end{equation}
Here the identity may be obtained by applying weak integration by parts
to cutoff approximations of the coordinate functions.

Expanding the square in the definition
\eqref{eq:conditional-relative-fisher-def}, using
\eqref{eq:conditional-score-ibp}, and then taking expectations yields
\begin{align}
   \mathcal J_\Sigma(S_n\mid\etaVec_n)
   &=
   \tr\!\left(
      \Sigma\,\E\Jmat(S_n\mid\etaVec_n)
   \right)
   +
   \tr\!\left(
      \Sigma^{-1}
      \E\Cov(S_n\mid\etaVec_n)
   \right)
   -2d
   \notag\\
   &=
   \tr\!\left(
      \Sigma\,\E\Jmat(S_n\mid\etaVec_n)
   \right)-d,
   \label{eq:relative-fisher-expansion}
\end{align}
because
\[
   \E\Cov(S_n\mid\etaVec_n)=\Sigma.
\]
By Corollary~\ref{cor:matrix},
\[
   \E\Jmat(S_n\mid\etaVec_n)
   \longrightarrow
   \Sigma^{-1}
\]
in operator norm. Since \(\Sigma\succeq0\),
\[
   \left|
      \tr\!\left(
         \Sigma\bigl(
            \E\Jmat(S_n\mid\etaVec_n)-\Sigma^{-1}
         \bigr)
      \right)
   \right|
   \le
   \tr(\Sigma)
   \left\|
      \E\Jmat(S_n\mid\etaVec_n)-\Sigma^{-1}
   \right\|_{\op}
   \longrightarrow0.
\]
Hence
\[
   \tr\!\left(
      \Sigma\,\E\Jmat(S_n\mid\etaVec_n)
   \right)
   \longrightarrow
   \tr(\Sigma\Sigma^{-1})
   =d.
\]
Together with \eqref{eq:relative-fisher-expansion}, this proves
\eqref{eq:relative-fisher-limit}.

For \(P_{\etaVec_n}\)-almost every realization of \(\etaVec_n\), the
Gaussian logarithmic Sobolev inequality \cite{gross1975logsobolev}  with reference measure
\(\gamma_\Sigma=N(0,\Sigma)\) gives
\[
   D\!\left(
      P_{S_n\mid\etaVec_n}\,\middle\|\,\gamma_\Sigma
   \right)
   \le
   \frac12
   \int_{\R^d}
   \left\|
      \Sigma^{1/2}
      \bigl(
         \rho_{S_n\mid\etaVec_n}(x)
         +\Sigma^{-1}x
      \bigr)
   \right\|^2
   f_{S_n\mid\etaVec_n}(x\mid\etaVec_n)\,dx.
\]
Averaging over \(\etaVec_n\) gives
\begin{equation}
   0
   \le
   \mathcal D_\Sigma(S_n\mid\etaVec_n)
   \le
   \frac12
   \mathcal J_\Sigma(S_n\mid\etaVec_n),
   \label{eq:relative-entropy-fisher-bound-proof}
\end{equation}
which is \eqref{eq:relative-entropy-fisher-bound}.  Together with
\eqref{eq:relative-fisher-limit}, this immediately yields
\[
   \mathcal D_\Sigma(S_n\mid\etaVec_n)\longrightarrow0,
\]
proving \eqref{eq:relative-entropy-limit}.

Finally, expanding the relative entropy with respect to
\(\gamma_\Sigma\) gives
\begin{align}
   \mathcal D_\Sigma(S_n\mid\etaVec_n)
   &=
   -\E h(S_n\mid\etaVec_n)
   +\frac{d}{2}\log(2\pi)
   +\frac12\log\det\Sigma
   \notag\\
   &\quad
   +\frac12
   \E\!\left[S_n^{\mathsf T}\Sigma^{-1}S_n\right].
\end{align}
Since
\[
   \E[S_n\mid\etaVec_n]=0,
   \qquad
   \E\Cov(S_n\mid\etaVec_n)=\Sigma,
\]
we have
\[
   \E\!\left[
      S_n^{\mathsf T}\Sigma^{-1}S_n
   \right]
   =
   \tr(\Sigma^{-1}\Sigma)
   =d.
\]
Hence
\begin{equation}
   \mathcal D_\Sigma(S_n\mid\etaVec_n)
   =
   \frac12
   \log\!\left((2\pi e)^d\det\Sigma\right)
   -
   \E h(S_n\mid\etaVec_n).
   \label{eq:relative-entropy-entropy-identity}
\end{equation}
Combining this identity with
\(\mathcal D_\Sigma(S_n\mid\etaVec_n)\to0\)
proves \eqref{eq:conditional-entropy-limit-from-fisher}.
\end{proof}

\subsection{Finite-constellation signaling: the low-SNR information slope}
\label{sec:low-snr-application}
The conditional channel model has a natural interpretation in terms
of receiver side information. Let
\[
   \widetilde S_n
   \triangleq
   \frac1{\sqrt n}\sum_{i=1}^n\xi_i
\]
denote the normalized aggregate before conditional centering. The
state \(\eta_i\) may describe receiver-observed information about the
\(i\)-th noise or interference component, such as its noise regime or
an effective noise law after receiver-side equalization. Given
\(\etaVec_n\), the conditional mean
\[
   m_n(\etaVec_n)
   \triangleq
   \frac1{\sqrt n}
   \sum_{i=1}^n\E[\xi_i\mid\eta_i]
\]
is known at the receiver, and
\[
   \widetilde S_n
   =
   m_n(\etaVec_n)+S_n.
\]

For any random vector \(U\) independent of the aggregate noise and
the state, consider the transmitted signal \(\sqrt\rho\,U\), where
\(\rho\ge0\) scales the signal power while the distribution of \(U\)
is held fixed. Then
\[
\begin{aligned}
&I\!\left(
   U;
   \bigl(\sqrt\rho\,U+\widetilde S_n,\etaVec_n\bigr)
\right)\\
&\quad=
I\!\left(
   U;
   \sqrt\rho\,U+\widetilde S_n
   \,\middle|\,
   \etaVec_n
\right)\\
&\quad=
I\!\left(
   U;
   \sqrt\rho\,U+S_n
   \,\middle|\,
   \etaVec_n
\right).
\end{aligned}
\]
The first equality follows from \(U\indep\etaVec_n\), while the second
follows by subtracting the receiver-known offset
\(m_n(\etaVec_n)\). Thus conditioning on \(\etaVec_n\) models the availability of the
state information at the receiver, while subtracting the corresponding
conditional mean removes only a receiver-known offset and therefore
does not change the mutual information.

Moreover,
\[
\begin{aligned}
   \E\Cov(S_n\mid\etaVec_n)
   &=
   \frac1n\sum_{i=1}^n
   \E\Cov(\xi_i\mid\eta_i)\\
   &=
   \E\Cov(\xi\mid\eta)
   =
   \Sigma
\end{aligned}
\]
for every \(n\). 
Hence the normalization by \(n^{-1/2}\) keeps the
averaged conditional noise covariance fixed. As \(n\) grows, \(S_n\) contains more independent components, each
scaled by \(n^{-1/2}\), while its averaged conditional covariance
remains equal to \(\Sigma\). The conditional Fisher-information CLT
describes the resulting Gaussianization in an averaged conditional
sense.

To analyze the low-SNR behavior of this channel, we use the classical
connection between weak-input mutual information and Fisher
information; see, e.g.,
\cite{prelov1993weak,rioul2011information}. In the present setting,
however, we require an averaged conditional expansion under the weak
Sobolev regularity used throughout this paper. We therefore record the
following finite-constellation expansion, whose proof is given in
Appendix~\ref{app:finite-constellation-weak-input}.

\begin{lemma}[Conditional finite-constellation weak-perturbation expansion]
\label{lem:conditional-weak-perturbation}
Let $(X,V)$ admit a jointly measurable conditional density in the first
coordinate and assume
\begin{equation}
   \E I(X\mid V)<\infty.
   \label{eq:weak-perturbation-fisher-assumption}
\end{equation}
Let $U\in\R^d$ be independent of $(X,V)$ and have finite support
\[
   \operatorname{supp}(U)
   =
   \{u_1,\ldots,u_M\},
\]
with
\[
   \P(U=u_j)=\pi_j>0,
   \qquad
   j=1,\ldots,M.
\]
Assume
\begin{equation}
   \E U=\sum_{j=1}^M\pi_j u_j=0,
   \label{eq:finite-constellation-centered}
\end{equation}
and write
\begin{equation}
   Q
   \triangleq
   \Cov(U)
   =
   \sum_{j=1}^M
   \pi_j u_j u_j^{\mathsf T}.
   \label{eq:finite-constellation-covariance}
\end{equation}
Then
\begin{equation}
   I(U;X+\sqrt\rho\,U\mid V)
   =
   \frac{\rho}{2}
   \tr\!\left(
      Q\,\E\Jmat(X\mid V)
   \right)
   +
   o(\rho),
   \qquad
   \rho\downarrow0.
   \label{eq:general-conditional-weak-expansion}
\end{equation}
Equivalently,
\begin{equation}
   \left.
   \frac{d}{d\rho}
   I(U;X+\sqrt\rho\,U\mid V)
   \right|_{\rho=0+}
   =
   \frac12
   \tr\!\left(
      Q\,\E\Jmat(X\mid V)
   \right).
   \label{eq:general-conditional-weak-derivative}
\end{equation}
\end{lemma}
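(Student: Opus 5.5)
We prove the expansion first for a fixed conditioning value $v$ and then average over $V$. Fix $v$, write $f=f_{X\mid V}(\cdot\mid v)$, and assume $I(f)<\infty$; this holds for $P_V$-almost every $v$ by \eqref{eq:weak-perturbation-fisher-assumption}. Set $\delta=\sqrt\rho$. The output $Y=X+\delta U$ has the mixture density $g_\delta=\sum_j\pi_j f(\cdot-\delta u_j)$. The mutual information is
\[
   I(U;Y\mid V=v)=\sum_j\pi_j D\bigl(f(\cdot-\delta u_j)\,\big\|\,g_\delta\bigr),
\]
the Jensen--Shannon-type divergence of the shifted family. The claim thus reduces to a local (second-order) expansion of this divergence in $\delta$, which we then integrate.

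\textbf{Pointwise upper bound.} Convexity of relative entropy in its second argument gives
\[
   D\bigl(f_{u_j}\,\|\,g_\delta\bigr)\le\sum_k\pi_k D\bigl(f_{u_j}\,\|\,f_{u_k}\bigr),
   \qquad f_u:=f(\cdot-\delta u).
\]
The individual terms are controlled by Hellinger/$\chi^2$ bounds for translations. In the weak Sobolev setting, $\sqrt f\in H^1$ gives $\|\sqrt{f(\cdot-h)}-\sqrt f\|_2^2\le\tfrac14\|h\|^2I(f)$, together with the exact second-order expansion
\[
   \|\sqrt{f(\cdot-h)}-\sqrt f\|_2^2=\tfrac14 h^{\mathsf T}\Jmat(f)h+o(\|h\|^2).
\]
This follows from strong $L^2$-differentiability of translations in $H^1$. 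Relative entropy itself, however, is not dominated by Hellinger distance in general. We therefore switch to the quantity that is: $I(U;Y)$ is bounded by the $\chi^2$-information, and it is also given by the identity
\[
   I(U;Y)=\sum_j\pi_j\int f_{u_j}\log\frac{f_{u_j}}{g_\delta}.
\]
To keep everything within Hellinger-type control, use the two-sided inequality between mutual information and the squared Hellinger distance to the mixture,
\[
   \sum_j\pi_j H^2(f_{u_j},g_\delta)\le I(U;Y)\le C\sum_j\pi_j H^2(f_{u_j},g_\delta)\log\frac1{\min_j\pi_j},
\]
which has the wrong constant for an exact limit. So the constant must be fixed by a finer argument.

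\textbf{Exact local expansion (the main step).} We use the known result that for a family $\theta\mapsto f(\cdot-\theta)$, differentiable in quadratic mean (which holds iff $I(f)<\infty$), the mutual information with a finitely supported, centered $\theta=\delta U$ satisfies
\[
   I(U;Y)=\tfrac{\delta^2}{2}\E\bigl[U^{\mathsf T}\Jmat(f)U\bigr]+o(\delta^2).
\]
We would prove this as follows. Write $\sqrt{f_{u_j}}=\sqrt f\,(1+\tfrac\delta2 u_j^{\mathsf T}s+r_j)$ in $L^2$, with $s$ the score and $\|\sqrt f\,r_j\|_2=o(\delta)$. Then expand $\log(f_{u_j}/g_\delta)$ using the elementary inequality $x\log x-x+1\le(x-1)^2$ applied to the likelihood ratios $f_{u_j}/g_\delta$, whose $\chi^2$-deviations from $1$ are $O(\delta^2)$ in $g_\delta$-mean by the quadratic-mean expansion. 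The centering $\sum_j\pi_ju_j=0$ kills the first-order term. Expected difficulty: the likelihood ratio can be unbounded where $g_\delta$ is small. We handle this by bounding $I(U;Y)$ above by the $\chi^2$-information $\sum_j\pi_j\chi^2(f_{u_j}\|g_\delta)$, which is controlled through $\sqrt{f_{u_j}}\pm\sqrt{g_\delta}$ factorizations and $\pi_j\ge\min\pi>0$. This bound gives the limsup $\le\tfrac12\tr(Q\Jmat(f))$. For the liminf we use the lower bound by the Hellinger-type (Bhattacharyya) information $-2\sum_j\pi_j\log\int\sqrt{f_{u_j}g_\delta}\ge 2\sum_j\pi_jH^2$, computed to second order; it also yields $\tfrac12\tr(Q\Jmat(f))$ after the same centered expansion.

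\textbf{Averaging over $V$.} The pointwise ratios $\rho^{-1}I(U;Y\mid V=v)$ converge to $\tfrac12\tr(Q\Jmat(X\mid V=v))$. For domination, use the convexity bound together with the translation estimate $D(f_u\|f_w)\le\tfrac{\delta^2}{4}\|u-w\|^2 I(f)$, valid for Gaussian-smoothed $f$ and then passed to the limit by lower semicontinuity. Alternatively, use $\chi^2$ with the Hellinger translation estimate. Either gives $\rho^{-1}I(U;Y\mid V=v)\le C_U\,I(X\mid V=v)$, which is integrable by assumption. Dominated convergence then yields \eqref{eq:general-conditional-weak-expansion}, and \eqref{eq:general-conditional-weak-derivative} follows since the mutual information vanishes at $\rho=0$. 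The domination bound for relative entropy of translates in the nonsmooth weak setting is the most delicate point. I would obtain it via the heat-flow smoothing and the inequality $D(f_u*\phi_s\|f_w*\phi_s)\le\tfrac{\delta^2}{4}\|u-w\|^2I(f*\phi_s)\le\tfrac{\delta^2}{4}\|u-w\|^2I(f)$ (by \eqref{eq:onecomponent}), letting $s\downarrow0$.
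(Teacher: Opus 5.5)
There is a genuine gap in your main step. The two bounds you use to fix the constant are not sharp at second order, and they miss by a factor of two in opposite directions. Write $H^2(f,q)=\int(\sqrt f-\sqrt q)^2$. When $f/q\to1$ one has $\chi^2(f\|q)\approx4H^2$, $D(f\|q)\approx2H^2$, and $-2\log\int\sqrt{fq}=-2\log(1-H^2/2)\approx H^2$. Since $H^2(f_{u_j},g_\delta)=\frac{\delta^2}{4}u_j^{\mathsf T}\Jmat(f)u_j+o(\delta^2)$:
\begin{itemize}
\item your $\chi^2$-information bound only yields $\limsup\rho^{-1}I(U;Y\mid V=v)\le\tr(Q\Jmat(f))$;
\item the Bhattacharyya bound only yields $\liminf\rho^{-1}I(U;Y\mid V=v)\ge\frac14\tr(Q\Jmat(f))$. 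With $H^2$ normalized as above, your inequality $-2\log\int\sqrt{fq}\ge2H^2$ is false for small $H^2$. With the normalization $1-\int\sqrt{fq}$ it is true, but it again gives only $\frac14$;
\item the inequality $x\log x-x+1\le(x-1)^2$ has the same defect, since its left side is $\sim(x-1)^2/2$ near $x=1$.
\end{itemize}

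The missing idea is the one you set aside: the likelihood ratio is \emph{not} unbounded. Since $g_\delta\ge\pi_jf_{u_j}$, the ratio $r=f_{u_j}/g_\delta$ lies in $[0,1/\pi_j]$. Consequently:
\begin{itemize}
\item $c(r)=(r\log r-r+1)/(\sqrt r-1)^2$ is continuous and bounded on $[0,1/\pi_j]$, with $c(1)=2$;
\item $r\to1$ a.e.\ on $\{f>0\}$ along subsequences;
\item $(\sqrt{f_{u_j}}-\sqrt{g_\delta})/\delta\to a_j=-u_j^{\mathsf T}\nabla\sqrt f$ in $L^2$, with $a_j=0$ on $\{f=0\}$.
\end{itemize}
Dominated convergence then gives $D(f_{u_j}\|g_\delta)=2H^2(f_{u_j},g_\delta)+o(\delta^2)$. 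This is exactly the paper's local KL--Hellinger lemma.

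You also need to prove, rather than assert, that $\sqrt{g_\delta}$ has zero first-order $L^2$ derivative. The paper does this by writing $\sqrt{g_\delta}$ as the Euclidean norm of $(\sqrt{\pi_kf_{u_k}})_k$, using that this norm is $1$-Lipschitz, and applying the centering $\sum_k\pi_ku_k=0$. Your remainder must also be additive, $\sqrt{f_{u_j}}=\sqrt f+\delta a_j+r_j$. The multiplicative form $\sqrt f\,(1+\cdots)$ cannot represent $f_{u_j}$ where $f=0<f_{u_j}$, which happens for compactly supported laws.

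Your primary domination argument for averaging over $V$ also fails. Take a compactly supported $f$ with $I(f)<\infty$, e.g.\ $f\propto\cos^2x$ on $[-\pi/2,\pi/2]$. For $u\ne w$ we have $f_u\not\ll f_w$, so $D(f_u\|f_w)=\infty$ and the convexity bound is vacuous. By lower semicontinuity of relative entropy, this also rules out any smoothed bound $D(f_u*\phi_s\|f_w*\phi_s)\le C\delta^2\|u-w\|^2I(f)$. Moreover, the constant $\frac14$ is already wrong for a one-dimensional Gaussian, where $D=\frac{\delta^2}{2}(u-w)^2I$.

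Your $\chi^2$ alternative does work, again because of the bounded ratio:
\[
D(f_{u_j}\|g_\delta)\le\chi^2(f_{u_j}\|g_\delta)\le(1+\pi_j^{-1/2})^2H^2(f_{u_j},g_\delta)\le C_U\delta^2I(f),
\]
where the last step uses the Hellinger translation estimate. With this domination and the local equivalence above, your pointwise-then-average route would close. The paper instead avoids averaging pointwise remainders altogether. It works on the product space $dx\,P_V(dv)$, where $L^2$-differentiability of translations of $\sqrt{p(x,v)}$ holds globally via the partial Fourier transform, with $\|\nabla_x\sqrt p\|_{L^2}^2=\frac14\E I(X\mid V)$.
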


\begin{remark}[Relation to classical weak-input expansions and the role of finite constellations]
\label{rem:relation-classical-weak-input}
The unconditional first-order coefficient in
\eqref{eq:general-conditional-weak-expansion} is classical.
Prelov and van der Meulen established weak-input asymptotics for a
broad class of vector memoryless channels
\cite{prelov1993weak}, while Rioul's generalized de Bruijn identity
gives the corresponding additive expansion for arbitrary
finite-covariance perturbations, including discrete ones
\cite[Sec.~II-D, Proposition~7, and (37)--(43)]
{rioul2011information}. Thus the role of
Lemma~\ref{lem:conditional-weak-perturbation} is not to provide a new
coefficient or to allow discrete inputs, but to establish an averaged
conditional expansion under the weak-Sobolev regularity required here.

Indeed, Rioul's derivation uses standing smoothness and decay
assumptions and a local parametric Kullback--Leibler expansion
\cite[Sec.~II-A]{rioul2011information}. In contrast, the present lemma
assumes only \(\E I(X\mid V)<\infty\) in the weak Sobolev sense, so the
conditional densities may be nonsmooth, may vanish, or may have compact
support. Moreover, a pointwise conditional remainder \(o_v(\rho)\)
cannot in general be averaged over \(V\).

For a finite constellation,
\[
   \bar p_t\ge\pi_jp_{j,t},
   \qquad
   \frac{p_{j,t}}{\bar p_t}\le\frac1{\pi_j},
\]
and the alphabet contains only finitely many components. These facts
provide the domination and finite summation needed for the
product-space KL--Hellinger argument. The restriction is also
operationally natural, covering fixed PSK- and QAM-type alphabets,
including probabilistically shaped finite constellations.
\end{remark}

\begin{remark}
\label{rem:centering-finite-constellation-input}
The centering assumption \(\E U=0\) in
Lemma~\ref{lem:conditional-weak-perturbation} is imposed only for
notational convenience in the proof.  Indeed, for a general
finite-constellation input \(U\), let
\[
   m_U=\E U,
   \qquad
   \widetilde U=U-m_U.
\]
Then
\[
   \E\widetilde U=0,
   \qquad
   \Cov(\widetilde U)=\Cov(U).
\]
Moreover, since deterministic translations of the channel output do not
change conditional mutual information,
\[
   I(U;X+\sqrt\rho\,U\mid V)
   =
   I\!\left(
      \widetilde U;
      X+\sqrt\rho\,\widetilde U
      \mid V
   \right).
\]
Hence the same expansion remains valid without the assumption
\(\E U=0\), with \(Q=\Cov(U)\).
\end{remark}

We now apply Lemma~\ref{lem:conditional-weak-perturbation} to the aggregate
noise.  Let $U\in\R^d$ be independent of
$\{(\xi_i,\eta_i)\}_{i\ge1}$ and have a fixed finite support.  Assume
\begin{equation}
   \E U=0,
   \qquad
   \Cov(U)=Q\succeq0.
   \label{eq:input-covariance-Q}
\end{equation}
For $\rho\ge0$, consider the additive channel with receiver side information
\begin{equation}
   Y_{n,\rho}
   =
   \sqrt\rho\,U+S_n,
   \qquad
   R_{n,U}(\rho)
   \triangleq
   I(U;Y_{n,\rho}\mid\etaVec_n).
   \label{eq:weak-signal-channel}
\end{equation}

\begin{proposition}[Conditional finite-constellation weak-signal identity]
\label{prop:conditional-low-snr}
Fix $n$ such that
\begin{equation}
   \E I(S_n\mid\etaVec_n)<\infty.
   \label{eq:low-snr-finite-fisher}
\end{equation}
Then every finite-constellation input $U$ satisfying
\eqref{eq:input-covariance-Q} obeys
\begin{equation}
   R_{n,U}(\rho)
   =
   \frac{\rho}{2}
   \tr\!\left(
      Q\,\E\Jmat(S_n\mid\etaVec_n)
   \right)
   +
   o_{n,U}(\rho),
   \qquad
   \rho\downarrow0,
   \label{eq:conditional-low-snr-expansion}
\end{equation}
where
\[
   \frac{o_{n,U}(\rho)}{\rho}
   \longrightarrow0
   \qquad
   \text{for each fixed $n$ and fixed finite-constellation input $U$.}
\]
Equivalently,
\begin{equation}
   R_{n,U}'(0+)
   =
   \frac12
   \tr\!\left(
      Q\,\E\Jmat(S_n\mid\etaVec_n)
   \right).
   \label{eq:conditional-low-snr-derivative}
\end{equation}
\end{proposition}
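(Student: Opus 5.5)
This proposition is a direct specialization of Lemma~\ref{lem:conditional-weak-perturbation}. We take $X=S_n$ and $V=\etaVec_n$, and we check the lemma's hypotheses one at a time.

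\emph{Conditional densities.} By the standing assumption of Theorem~\ref{thm:main}, for almost every state the conditional law of each $\xi_i$ given $\eta_i$ has a density on $\R^d$. Conditionally on $\etaVec_n$ the summands are independent, so the conditional law of $S_n$ is a rescaled convolution of these densities, shifted by the deterministic centering. Such a convolution has a density, and it can be written jointly measurably in $(x,\etaVec_n)$ as an iterated integral of the fixed jointly measurable versions. This gives the conditional density required by the lemma.

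\emph{Fisher-information hypothesis.} Condition~\eqref{eq:weak-perturbation-fisher-assumption} is exactly the assumption~\eqref{eq:low-snr-finite-fisher}. Translation invariance shows that it does not matter whether the Fisher information is taken for $S_n$ or for the uncentered sum.

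\emph{Independence.} The input $U$ is independent of $\{(\xi_i,\eta_i)\}_{i\ge1}$, and $(S_n,\etaVec_n)$ is a measurable function of this sequence. Hence $U$ is independent of $(S_n,\etaVec_n)$.

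\emph{Input conditions.} The input $U$ has finite support, $\E U=0$ and $\Cov(U)=Q$. We may discard support points of zero mass, so that every $\pi_j>0$. If centering had not been assumed, Remark~\ref{rem:centering-finite-constellation-input} would reduce the general case to the centered one.

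With these hypotheses in place, \eqref{eq:general-conditional-weak-expansion} gives
\[
R_{n,U}(\rho)=I(U;S_n+\sqrt\rho\,U\mid\etaVec_n)=\frac{\rho}{2}\tr\!\left(Q\,\E\Jmat(S_n\mid\etaVec_n)\right)+o(\rho).
\]
The remainder depends only on the fixed pair $(n,U)$. This is \eqref{eq:conditional-low-snr-expansion}.

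The derivative form follows because $R_{n,U}(0)=I(U;S_n\mid\etaVec_n)=0$, since $U$ is independent of $(S_n,\etaVec_n)$. Dividing the expansion by $\rho$ and letting $\rho\downarrow0$ then gives \eqref{eq:conditional-low-snr-derivative}.

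\emph{Well-definedness of the coefficient.} The matrix $\E\Jmat(S_n\mid\etaVec_n)$ has finite entries, by the same trace domination used in the proof of Corollary~\ref{cor:matrix}: each entry satisfies $|\Jmat_{ij}|\le\tr\Jmat$, and $\E\tr\Jmat(S_n\mid\etaVec_n)<\infty$.

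\emph{Main obstacle.} The only potentially delicate point is the joint measurability of the conditional density of $S_n$. If necessary we would handle it by writing that density explicitly as the iterated convolution integral described above.
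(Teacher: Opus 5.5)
Your proposal is correct and is the same argument as the paper's, whose proof is just the one-line application of Lemma~\ref{lem:conditional-weak-perturbation} with $(X,V)=(S_n,\etaVec_n)$. Your hypothesis checks (independence of $U$, positive masses, finite coefficient matrix, $R_{n,U}(0)=0$) make explicit what the paper leaves implicit; the one quibble is that the proposition does not assume the hypotheses of Theorem~\ref{thm:main}, so the conditional density of $S_n$ should be taken from the finiteness assumption \eqref{eq:low-snr-finite-fisher}, which already presupposes it, rather than derived from densities of $\xi_i$ given $\eta_i$.
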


\begin{proof}
Apply Lemma~\ref{lem:conditional-weak-perturbation} with
\[
   (X,V)=(S_n,\etaVec_n).
\]
\end{proof}

\begin{corollary}[Universality of the asymptotic low-SNR slope]
\label{cor:low-snr-universality}
Under the assumptions of Theorem~\ref{thm:main}, suppose that the
finite-level Fisher condition holds.  Then, for every fixed
finite-constellation input law satisfying
\eqref{eq:input-covariance-Q},
\begin{equation}
   \lim_{n\to\infty}
   R_{n,U}'(0+)
   =
   \frac12
   \tr(Q\Sigma^{-1}).
   \label{eq:low-snr-gaussian-benchmark}
\end{equation}
The right-hand side is exactly the low-SNR slope obtained when the
additive noise is Gaussian with covariance $\Sigma$.
\end{corollary}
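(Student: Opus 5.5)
The plan is to combine Proposition~\ref{prop:conditional-low-snr} with Corollary~\ref{cor:matrix}. First, I will check that the slope formula is available for all large $n$. Under the finite-level condition, the sufficiency part of the proof of Theorem~\ref{thm:main} gives $\mathcal I_n<\infty$ for every $n\ge n_0$; this is \eqref{eq:multi-eventual-finiteness}. As in \eqref{eq:An-finite-trace}, this implies $\E I(S_n\mid\etaVec_n)\le\|\Sigma^{-1}\|_{\op}\,\mathcal I_n<\infty$. So hypothesis \eqref{eq:low-snr-finite-fisher} holds for each $n\ge n_0$, and Proposition~\ref{prop:conditional-low-snr} gives, for every such $n$,
\[
   R_{n,U}'(0+)=\frac12\tr\!\left(Q\,A_n\right),\qquad A_n:=\E\Jmat(S_n\mid\etaVec_n).
\]

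Second, I will pass to the limit. Corollary~\ref{cor:matrix} gives $\|A_n-\Sigma^{-1}\|_{\op}\to0$. Since $Q\succeq0$, the same trace bound used in the proof of Corollary~\ref{cor:relative-information} gives
\[
   \bigl|\tr\!\left(Q(A_n-\Sigma^{-1})\right)\bigr|\le\tr(Q)\,\|A_n-\Sigma^{-1}\|_{\op}\longrightarrow0 .
\]
Hence $R_{n,U}'(0+)\to\frac12\tr(Q\Sigma^{-1})$. The limit of the one-sided derivatives is meaningful because, for each fixed $n\ge n_0$, the derivative is an honest limit by \eqref{eq:conditional-low-snr-derivative}. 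No uniformity of the remainder $o_{n,U}(\rho)$ in $n$ is needed, since we only take $n\to\infty$ after $\rho\downarrow0$.

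Third, I will identify the right-hand side as the Gaussian benchmark. Take noise $N\sim N(0,\Sigma)$ in place of $S_n$, with trivial side information. Its Fisher matrix is $\Jmat(N)=\Sigma^{-1}$ (the score is $-\Sigma^{-1}x$), and $I(N)=\tr\Sigma^{-1}<\infty$. Lemma~\ref{lem:conditional-weak-perturbation}, applied with constant $V$, then gives slope $\frac12\tr(Q\Sigma^{-1})$.

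The only step needing care is the first one: making sure the proposition applies for all large $n$, not just at $n_0$. The propagation inequality \eqref{eq:multi-eventual-finiteness} settles this directly. Everything else is a continuity-of-trace argument.
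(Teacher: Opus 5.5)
Your proposal is correct and follows the paper's proof: it combines the slope identity of Proposition~\ref{prop:conditional-low-snr} with the operator-norm limit of Corollary~\ref{cor:matrix} and passes to the limit in the trace. Your extra checks make explicit what the paper leaves implicit: that \eqref{eq:low-snr-finite-fisher} holds for every $n\ge n_0$ via \eqref{eq:multi-eventual-finiteness}, the bound $\tr(Q)\,\|A_n-\Sigma^{-1}\|_{\op}$, and the Gaussian-benchmark identification via Lemma~\ref{lem:conditional-weak-perturbation} with constant $V$.
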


\begin{proof}
By Proposition~\ref{prop:conditional-low-snr},
\[
   R_{n,U}'(0+)
   =
   \frac12
   \tr\!\left(
      Q\,\E\Jmat(S_n\mid\etaVec_n)
   \right).
\]
Corollary~\ref{cor:matrix} gives
\[
   \E\Jmat(S_n\mid\etaVec_n)
   \longrightarrow
   \Sigma^{-1}
\]
in operator norm.  Since $Q$ is fixed,
\[
   \tr\!\left(
      Q\,\E\Jmat(S_n\mid\etaVec_n)
   \right)
   \longrightarrow
   \tr(Q\Sigma^{-1}),
\]
which proves the claim.
\end{proof}

\begin{remark}[Dependence on the input covariance]
\label{rem:finite-constellation-universality}

Within the class of fixed finite constellations,
Proposition~\ref{prop:conditional-low-snr} shows that the
first-order slope depends on the input law only through its covariance
\(Q=\Cov(U)\), while
Corollary~\ref{cor:low-snr-universality} identifies its limiting value
as
\[
   \frac12\tr(Q\Sigma^{-1}).
\]
Consequently, once \(Q\) is fixed, the detailed constellation geometry
and symbol probabilities have no further effect on the limiting
first-order slope. In particular, when \(d=1\),
\(\Sigma=\sigma^2\), and \(\operatorname{Var}(U)=1\), this slope is
\(1/(2\sigma^2)\). Hence BPSK and every other fixed zero-mean
unit-energy scalar finite constellation have the same limiting
first-order slope.
\end{remark}

For the corresponding optimization statement, let
\(\mathcal U_{\mathrm{fin}}(P)\) denote the class of all
\(\mathbb R^d\)-valued finite-support random vectors \(U\), independent
of \(\{(\xi_i,\eta_i)\}_{i\ge1}\), such that
\[
   \E U=0,
   \qquad
   \E\|U\|^2\le P.
\]
Here \(\E\|U\|^2\) is the average input energy per channel use.
Although it is automatically finite for a finite constellation, the
bound \(P\) imposes a common energy budget and makes the slope
optimization well posed, since replacing \(U\) by \(aU\) multiplies
the first-order slope by \(a^2\). As in the classical weak-input result
of Prelov and van der Meulen \cite{prelov1993weak}, the optimization
below has a largest-eigenvalue form. In the present conditional setting,
the governing matrix is the averaged conditional Fisher information
matrix, whose convergence to \(\Sigma^{-1}\) follows from
Corollary~\ref{cor:matrix}.

\begin{corollary}[Optimal first-order slope among fixed finite constellations]
\label{cor:optimal-weak-slope}
Under the assumptions of Theorem~\ref{thm:main}, suppose that the
finite-level Fisher condition holds. Let
\[
   A_n
   \triangleq
   \E\Jmat(S_n\mid\etaVec_n),
\]
and fix \(P>0\). For every sufficiently large \(n\),
\begin{equation}
   \sup_{U\in\mathcal U_{\mathrm{fin}}(P)}
   R_{n,U}'(0+)
   =
   \frac P2\lambda_{\max}(A_n).
   \label{eq:optimal-first-order-slope}
\end{equation}
Moreover, the first-order slopes converge uniformly over the
energy-constrained finite-con\-stel\-lation class:
\begin{equation}
\begin{aligned}
   &\sup_{U\in\mathcal U_{\mathrm{fin}}(P)}
   \left|
      R_{n,U}'(0+)
      -
      \frac12
      \tr\!\left(
         \Cov(U)\Sigma^{-1}
      \right)
   \right|
   \\
   &\qquad\le
   \frac P2
   \left\|
      A_n-\Sigma^{-1}
   \right\|_{\op}
   \longrightarrow0.
\end{aligned}
\label{eq:uniform-first-order-slope}
\end{equation}
Consequently,
\begin{equation}
   \lim_{n\to\infty}
   \sup_{U\in\mathcal U_{\mathrm{fin}}(P)}
   R_{n,U}'(0+)
   =
   \frac P2
   \lambda_{\max}(\Sigma^{-1}).
   \label{eq:optimal-first-order-slope-limit}
\end{equation}
\end{corollary}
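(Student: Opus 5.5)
We argue entirely from Proposition~\ref{prop:conditional-low-snr} and Corollary~\ref{cor:matrix}. By Theorem~\ref{thm:main} and \eqref{eq:multi-eventual-finiteness}, $\E I(S_n\mid\etaVec_n)<\infty$ for all sufficiently large $n$; for such $n$, Corollary~\ref{cor:matrix} shows that $A_n$ has finite entries and is positive semidefinite. Proposition~\ref{prop:conditional-low-snr} then applies to every $U\in\mathcal U_{\mathrm{fin}}(P)$ at such $n$ and gives
\[
   R_{n,U}'(0+)=\tfrac12\tr(Q_UA_n),
   \qquad Q_U=\Cov(U)=\E[UU^{\mathsf T}].
\]
The constraints on $Q_U$ are $Q_U\succeq0$ and $\tr Q_U=\E\|U\|^2\le P$. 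The problem therefore reduces to a linear optimization over a subset of this set of covariance matrices.

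For the upper bound in \eqref{eq:optimal-first-order-slope}, we use that $A_n\succeq0$ and $Q_U\succeq0$, so
\[
   \tr(Q_UA_n)\le\lambda_{\max}(A_n)\tr Q_U\le P\lambda_{\max}(A_n).
\]
For attainment, let $v$ be a unit eigenvector of $A_n$ for $\lambda_{\max}(A_n)$, and take the antipodal (BPSK-type) input $U=\pm\sqrt P\,v$ with probability $1/2$ each. This input is finite-support and centered, has $\E\|U\|^2=P$ and $Q_U=Pvv^{\mathsf T}$, and therefore yields slope $\frac P2\lambda_{\max}(A_n)$. Hence the supremum is attained, and it is a maximum.

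For \eqref{eq:uniform-first-order-slope}, we write
\[
   R_{n,U}'(0+)-\tfrac12\tr(Q_U\Sigma^{-1})=\tfrac12\tr\bigl(Q_U(A_n-\Sigma^{-1})\bigr).
\]
We then apply the elementary bound $|\tr(CD)|\le\tr(C)\|D\|_{\op}$, valid for $C\succeq0$ and $D$ symmetric (diagonalize $C$). This gives a bound of $\frac P2\|A_n-\Sigma^{-1}\|_{\op}$ uniformly over the class, and the right-hand side tends to $0$ by Corollary~\ref{cor:matrix}.

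Finally, for \eqref{eq:optimal-first-order-slope-limit} we combine \eqref{eq:optimal-first-order-slope} with Weyl's inequality,
\[
   |\lambda_{\max}(A_n)-\lambda_{\max}(\Sigma^{-1})|\le\|A_n-\Sigma^{-1}\|_{\op}\to0.
\]
Alternatively, one can take the supremum in the uniform bound and note that
\[
   \sup_{U\in\mathcal U_{\mathrm{fin}}(P)}\tfrac12\tr(Q_U\Sigma^{-1})=\tfrac P2\lambda_{\max}(\Sigma^{-1})
\]
by the same eigenvector construction. The only delicate point is ensuring that Proposition~\ref{prop:conditional-low-snr} applies simultaneously to all inputs in the class. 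It does, because its hypothesis \eqref{eq:low-snr-finite-fisher} depends only on $n$ and not on $U$, and the $o(\rho)$ remainders never enter once we work with exact derivatives.
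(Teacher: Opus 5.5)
Your proposal is correct and follows the paper's proof essentially step for step. You reduce to $R_{n,U}'(0+)=\frac12\tr(Q_UA_n)$ via Proposition~\ref{prop:conditional-low-snr}, bound it by $\lambda_{\max}(A_n)\tr Q_U\le P\lambda_{\max}(A_n)$, attain the bound with the antipodal input $\pm\sqrt P\,v$, and control the uniform deviation by $\tr(Q_U)\|A_n-\Sigma^{-1}\|_{\op}$ using Corollary~\ref{cor:matrix}. Your appeal to Weyl's inequality for $\lambda_{\max}(A_n)\to\lambda_{\max}(\Sigma^{-1})$, and your remark that the proposition's hypothesis does not depend on $U$, only make explicit what the paper leaves implicit.
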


\begin{proof}
For $U\in\mathcal U_{\mathrm{fin}}(P)$, let
\[
   Q=\Cov(U).
\]
Since $\E U=0$,
\[
   \tr Q
   =
   \E\|U\|^2
   \le P.
\]
Proposition~\ref{prop:conditional-low-snr} gives
\[
   R_{n,U}'(0+)
   =
   \frac12\tr(QA_n).
\]
Since $A_n\succeq0$,
\[
   \tr(QA_n)
   \le
   \lambda_{\max}(A_n)\tr Q
   \le
   P\lambda_{\max}(A_n).
\]
Hence
\[
   \sup_{U\in\mathcal U_{\mathrm{fin}}(P)}
   R_{n,U}'(0+)
   \le
   \frac P2\lambda_{\max}(A_n).
\]

Conversely, let $v_n$ be a unit eigenvector corresponding to
$\lambda_{\max}(A_n)$, and take the binary input
\[
   U_n=
   \begin{cases}
      \sqrt P\,v_n, & \text{with probability }1/2,\\
      -\sqrt P\,v_n, & \text{with probability }1/2.
   \end{cases}
\]
Then $U_n\in\mathcal U_{\mathrm{fin}}(P)$ and
\[
   \Cov(U_n)
   =
   P\,v_nv_n^{\mathsf T}.
\]
Therefore
\[
   R_{n,U_n}'(0+)
   =
   \frac P2
   v_n^{\mathsf T}A_nv_n
   =
   \frac P2
   \lambda_{\max}(A_n).
\]
This proves \eqref{eq:optimal-first-order-slope}.

Finally, Corollary~\ref{cor:matrix} gives
\[
   A_n\longrightarrow\Sigma^{-1}
\]
in operator norm. For every
\(U\in\mathcal U_{\mathrm{fin}}(P)\), writing
\(Q=\Cov(U)\), we have
\begin{align}
   &
   \left|
      R_{n,U}'(0+)
      -
      \frac12\tr(Q\Sigma^{-1})
   \right|
   \notag\\
   &\qquad=
   \frac12
   \left|
      \tr\!\left(
         Q(A_n-\Sigma^{-1})
      \right)
   \right|
   \notag\\
   &\qquad\le
   \frac12
   \tr(Q)
   \|A_n-\Sigma^{-1}\|_{\op}
   \notag\\
   &\qquad\le
   \frac P2
   \|A_n-\Sigma^{-1}\|_{\op}.
\end{align}
Taking the supremum over
\(U\in\mathcal U_{\mathrm{fin}}(P)\) proves
\eqref{eq:uniform-first-order-slope}.

Moreover,
\[
   \lambda_{\max}(A_n)
   \longrightarrow
   \lambda_{\max}(\Sigma^{-1}),
\]
and hence \eqref{eq:optimal-first-order-slope} yields
\eqref{eq:optimal-first-order-slope-limit}.
\end{proof}

\subsection{Gaussian signaling at fixed signal covariance}
\label{sec:gaussian-signal-application}
In the same receiver-side-information model, 
consider Gaussian signaling at a fixed input covariance. 
The following bound is consistent with the classical worst-additive-noise 
principle \cite{diggavi2001worst}. Let $G_Q\sim N(0,Q)$ be independent of
$(S_n,\etaVec_n)$ and define
\begin{align}
   R_n^{\rm G}(Q)&=I(G_Q;G_Q+S_n\mid\etaVec_n),\\
   R_{\rm AWGN}(Q)&=\frac12\log\det\!\left(\Id+\Sigma^{-1/2}Q\Sigma^{-1/2}\right).
   \label{eq:gaussian-signaling-benchmark}
\end{align}

\begin{proposition}[Fisher control of the Gaussian-signaling gap]
\label{prop:gaussian-signaling-gap}
Under the assumptions of Corollary~\ref{cor:relative-information}, for every
$Q\succeq0$,
\begin{equation}
   0\le R_n^{\rm G}(Q)-R_{\rm AWGN}(Q)
   \le\mathcal D_\Sigma(S_n\mid\etaVec_n)
   \le\frac12\mathcal J_\Sigma(S_n\mid\etaVec_n).
   \label{eq:gaussian-signaling-gap-bound}
\end{equation}
Consequently,
\begin{equation}
   R_n^{\rm G}(Q)\longrightarrow R_{\rm AWGN}(Q),
   \label{eq:gaussian-signaling-gap-limit}
\end{equation}
and, in fact,
\begin{equation}
   \sup_{Q\succeq0}\bigl(R_n^{\rm G}(Q)-R_{\rm AWGN}(Q)\bigr)
   \longrightarrow0.
\end{equation}
\end{proposition}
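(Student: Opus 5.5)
The plan is to write the conditional mutual information as a difference of conditional entropies and to bound each side separately. The middle quantity $\mathcal D_\Sigma(S_n\mid\etaVec_n)$ will enter through the exact identity \eqref{eq:relative-entropy-entropy-identity}. The last inequality in \eqref{eq:gaussian-signaling-gap-bound} is just \eqref{eq:relative-entropy-fisher-bound} from Corollary~\ref{cor:relative-information}. Once \eqref{eq:gaussian-signaling-gap-bound} is proved with a right-hand side that does not depend on $Q$, both the pointwise limit \eqref{eq:gaussian-signaling-gap-limit} and the uniform statement over $Q\succeq0$ follow from \eqref{eq:relative-fisher-limit}.

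\emph{Finiteness.} I first check that all entropies are finite. For large $n$, Corollary~\ref{cor:relative-information} gives $\mathcal D_\Sigma(S_n\mid\etaVec_n)<\infty$. Combined with \eqref{eq:relative-entropy-entropy-identity}, this shows that $\E h(S_n\mid\etaVec_n)$ is finite. Conditionally on $\etaVec_n=y$, the output $G_Q+S_n$ has a density and covariance $Q+C(y)$, where $C(y)=\Cov(S_n\mid\etaVec_n=y)$. Its entropy lies between $h(S_n\mid\etaVec_n=y)$ and the Gaussian maximum. Hence
\[
   R_n^{\rm G}(Q)=\E h(G_Q+S_n\mid\etaVec_n)-\E h(S_n\mid\etaVec_n)
\]
is a legitimate difference of finite quantities. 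I expect this bookkeeping (integrability of $\log\det(Q+C)$, and $\E\,\tr C=\tr\Sigma$) to be the only real technical point. Singular $Q$ causes no trouble, because $\Sigma\succ0$.

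\emph{Upper bound.} For each $y$, the Gaussian maximum-entropy bound gives
\[
   h(G_Q+S_n\mid\etaVec_n=y)\le\tfrac12\log\bigl((2\pi e)^d\det(Q+C(y))\bigr).
\]
Since $\log\det$ is concave on the positive-definite cone and $\E C(\etaVec_n)=\Sigma$, Jensen's inequality yields
\[
   \E h(G_Q+S_n\mid\etaVec_n)\le\tfrac12\log\bigl((2\pi e)^d\det(Q+\Sigma)\bigr).
\]
Next, \eqref{eq:relative-entropy-entropy-identity} gives
\[
   -\E h(S_n\mid\etaVec_n)=\mathcal D_\Sigma(S_n\mid\etaVec_n)-\tfrac12\log\bigl((2\pi e)^d\det\Sigma\bigr).
\]
Adding the two bounds, the $(2\pi e)^d$ factors cancel. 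The result is
\[
   R_n^{\rm G}(Q)\le\tfrac12\log\det(\Sigma+Q)-\tfrac12\log\det\Sigma+\mathcal D_\Sigma(S_n\mid\etaVec_n)=R_{\rm AWGN}(Q)+\mathcal D_\Sigma(S_n\mid\etaVec_n).
\]

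\emph{Lower bound.} Here I avoid any matrix convexity argument for $C\mapsto\log\det(I+C^{-1/2}QC^{-1/2})$ and use side information instead. Since $G_Q$ is independent of $\etaVec_n$, the chain rule gives
\[
   R_n^{\rm G}(Q)=I(G_Q;G_Q+S_n,\etaVec_n)\ge I(G_Q;G_Q+S_n).
\]
Unconditionally, $S_n$ is independent of $G_Q$, has mean zero, and has covariance $\E\Cov(S_n\mid\etaVec_n)=\Sigma$. The worst-additive-noise theorem \cite{diggavi2001worst} then gives
\[
   I(G_Q;G_Q+S_n)\ge\tfrac12\log\det(I_d+\Sigma^{-1/2}Q\Sigma^{-1/2})=R_{\rm AWGN}(Q).
\]
This proves the left inequality in \eqref{eq:gaussian-signaling-gap-bound}. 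Since the upper bound $\mathcal D_\Sigma(S_n\mid\etaVec_n)\le\tfrac12\mathcal J_\Sigma(S_n\mid\etaVec_n)$ is independent of $Q$ and tends to $0$, the supremum over $Q\succeq0$ of the gap also vanishes.
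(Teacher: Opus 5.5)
Your proof is correct. The upper bound is essentially the paper's, but the lower bound takes a genuinely different route.

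The paper works throughout with relative entropy to fixed Gaussian references. It sets $\Delta_n=\mathcal D_\Sigma(S_n\mid\etaVec_n)$ and $\Delta_{n,Q}^{\rm out}=\E D(P_{S_n+G_Q\mid\etaVec_n}\|\gamma_{\Sigma+Q})$, and obtains the exact identity $R_n^{\rm G}(Q)-R_{\rm AWGN}(Q)=\Delta_n-\Delta_{n,Q}^{\rm out}$. Both inequalities then follow at once. The upper bound is $\Delta_{n,Q}^{\rm out}\ge0$. The lower bound is data processing under Gaussian convolution, $D(P*\gamma_Q\|\gamma_\Sigma*\gamma_Q)\le D(P\|\gamma_\Sigma)$, using $\gamma_\Sigma*\gamma_Q=\gamma_{\Sigma+Q}$.

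Your upper bound is the same inequality $\Delta_{n,Q}^{\rm out}\ge0$ in another form. Conditional Gaussian maximum entropy followed by concavity of $\log\det$ does establish it. Using the fixed reference $\gamma_{\Sigma+Q}$ gives it in one line with no Jensen step, because the averaged quadratic term equals $d$ automatically.

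Your lower bound instead drops the side information and applies the worst-additive-noise theorem to the unconditional noise $S_n$. That noise is independent of $G_Q$, centered, and has covariance $\E\Cov(S_n\mid\etaVec_n)=\Sigma$ because the conditional mean vanishes. This is valid, and it gains two things:
\begin{itemize}
\item It needs no finiteness of $\mathcal D_\Sigma(S_n\mid\etaVec_n)$ or of any entropy, so $R_n^{\rm G}(Q)\ge R_{\rm AWGN}(Q)$ holds for every $n$ without log-concavity.
\item It shows operationally that a receiver ignoring $\etaVec_n$ already reaches the AWGN benchmark.
\end{itemize}
The cost is that you use the Diggavi--Cover theorem as a black box. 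The paper's argument is self-contained and identifies the gap exactly as the relative-entropy loss under Gaussian smoothing.

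Your explicit finiteness checks for $\E h(S_n\mid\etaVec_n)$ and $\E h(G_Q+S_n\mid\etaVec_n)$ cover a point the paper leaves implicit when it writes its identities.
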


\begin{proof}
Set
\begin{align*}
   \Delta_n&=\mathcal D_\Sigma(S_n\mid\etaVec_n),\\
   \Delta_{n,Q}^{\rm out}
   &=\E D\!\left(P_{S_n+G_Q\mid\etaVec_n}\middle\|N(0,\Sigma+Q)\right).
\end{align*}
Because the conditional means vanish and the averaged conditional
covariances are $\Sigma$ and $\Sigma+Q$,
\begin{align*}
   \Delta_n
   &=h(N(0,\Sigma))-\E h(S_n\mid\etaVec_n),\\
   \Delta_{n,Q}^{\rm out}
   &=h(N(0,\Sigma+Q))-\E h(S_n+G_Q\mid\etaVec_n).
\end{align*}
Subtracting gives the exact identity
\begin{equation}
   R_n^{\rm G}(Q)-R_{\rm AWGN}(Q)
   =\Delta_n-\Delta_{n,Q}^{\rm out}.
   \label{eq:exact-gaussian-gap-identity}
\end{equation}
Let \(\gamma_C\) denote the Gaussian probability measure \(N(0,C)\)
for \(C\succeq0\). For \(P_{\etaVec_n}\)-almost every \(v\),
\[
\begin{aligned}
&D\!\left(
   P_{S_n+G_Q\mid\etaVec_n=v}
   \middle\|
   \gamma_{\Sigma+Q}
\right)\\
&\quad=
D\!\left(
   P_{S_n\mid\etaVec_n=v}*\gamma_Q
   \middle\|
   \gamma_\Sigma*\gamma_Q
\right)\\
&\quad\le
D\!\left(
   P_{S_n\mid\etaVec_n=v}
   \middle\|
   \gamma_\Sigma
\right).
\end{aligned}
\]
Here the equality follows from the independence of \(G_Q\) and the
Gaussian convolution identity
\(\gamma_\Sigma*\gamma_Q=\gamma_{\Sigma+Q}\), while the inequality
follows from the data-processing inequality for relative entropy. Averaging over \(\etaVec_n\), and using the nonnegativity of relative
entropy, gives
\[
   0\le\Delta_{n,Q}^{\rm out}\le\Delta_n.
\]
The final bound in \eqref{eq:gaussian-signaling-gap-bound} follows from
\eqref{eq:relative-entropy-fisher-bound}. Since the resulting upper
bound is independent of \(Q\), the uniform conclusion follows.

\end{proof}

\appendices

\section{The multivariate log-concave second-order Fisher inequality}
\label{app:logconcave-K}

Throughout this appendix, the dimension $d\geq1$ is fixed. We first work
with strictly positive log-concave probability densities
$f$ on $\mathbb R^d$ satisfying
\[
    f\in C^2(\mathbb R^d)
\]
and, for every multi-index $\alpha$ with $|\alpha|\leq2$,
\begin{equation}
    D^\alpha f
    \in
    L^1(\mathbb R^d)\cap C_0(\mathbb R^d),
    \label{eq:logconcave-boundary-decay}
\end{equation}
where $C_0(\mathbb R^d)$ denotes the space of continuous functions
vanishing at infinity. These assumptions are used only to justify
differentiation under convolution and the integrations by parts below.
In particular, no derivatives of order greater than two are required.

For such a density $f$, define the second-order score matrix by
\begin{equation}
    \Psi_f(x)
    \triangleq
    -\nabla^2\log f(x).
    \label{eq:second-order-score-matrix}
\end{equation}
Since $f$ is log-concave,
\[
    \Psi_f(x)\succeq0,
    \qquad
    x\in\mathbb R^d.
\]
In this notation,
\begin{equation}
    K(f)
    =
    \int_{\mathbb R^d}
    \|\Psi_f(x)\|_{\HS}^2 f(x)\,dx.
    \label{eq:K-second-order-score}
\end{equation}

We now derive the convolution inequality used in the main argument.

\begin{lemma}
\label{lem:4.2}
Let $X$ and $Y$ be independent random vectors in $\mathbb R^d$ with
log-concave densities satisfying the regularity assumptions above, and assume
that
\[
    K(X)<\infty,
    \qquad
    K(Y)<\infty.
\]
Then, for every $\lambda\in[0,1]$,
\begin{equation}
    K\left(
        \lambda X+\sqrt{1-\lambda^2}\,Y
    \right)
    \leq
    \lambda^2 K(X)
    +
    (1-\lambda^2)K(Y).
    \label{eq:second-order-fisher-convolution}
\end{equation}
\end{lemma}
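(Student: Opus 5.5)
We follow the standard score-projection (Blachman--Stam type) argument, lifted to second-order scores. Set $Z=\lambda X+\sqrt{1-\lambda^2}\,Y$ and write $\mu=\sqrt{1-\lambda^2}$. The endpoint cases $\lambda\in\{0,1\}$ are trivial, so assume $0<\lambda<1$. Let $f,g,h$ be the densities of $X,Y,Z$, with first-order scores $\rho_X,\rho_Y,\rho_Z$ and second-order score matrices $\Psi_X,\Psi_Y,\Psi_Z$ as in \eqref{eq:second-order-score-matrix}.

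\textbf{Step 1 (conditional-expectation representation).} Differentiating under the convolution integral, which is justified by \eqref{eq:logconcave-boundary-decay}, gives for every $\alpha,\beta$ with $\alpha+\beta=1$ the classical identity
\[
\rho_Z(z)=\E\!\left[\tfrac{\alpha}{\lambda}\rho_X(X)+\tfrac{\beta}{\mu}\rho_Y(Y)\,\middle|\,Z=z\right].
\]
Differentiating once more yields $\nabla^2 h/h=\E[\nabla^2 f(X)/f(X)/\lambda^2\mid Z]$, and similarly with $Y$ and $\mu$. Combining this with $\nabla^2\log h=\nabla^2h/h-\rho_Z\rho_Z^{\mathsf T}$ gives
\[
-\Psi_Z(Z)=\E\!\left[-\tfrac{\alpha}{\lambda^2}\Psi_X(X)-\tfrac{\beta}{\mu^2}\Psi_Y(Y)+\Big(\tfrac{\alpha}{\lambda^2}\rho_X\rho_X^{\mathsf T}+\tfrac{\beta}{\mu^2}\rho_Y\rho_Y^{\mathsf T}\Big)\Big|Z\right]-\rho_Z\rho_Z^{\mathsf T}.
\]

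\textbf{Step 2 (choice of weights and removal of first-order terms).} Take $\alpha=\lambda^2$ and $\beta=\mu^2$. Then
\[
\Psi_Z(Z)=\E[\Psi_X(X)+\Psi_Y(Y)\mid Z]-\Cov\!\big(\lambda\rho_X(X)+\mu\rho_Y(Y)\,\big|\,Z\big).
\]
This follows because $\E[\lambda\rho_X+\mu\rho_Y\mid Z]=\rho_Z$, while $\E[\rho_X\rho_X^{\mathsf T}+\rho_Y\rho_Y^{\mathsf T}\mid Z]-\E[(\lambda\rho_X+\mu\rho_Y)(\cdot)^{\mathsf T}\mid Z]=\E[(\mu\rho_X-\lambda\rho_Y)(\cdot)^{\mathsf T}\mid Z]$.

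\textbf{This is the main obstacle:} the naive weights leave a covariance correction whose sign must be controlled. An alternative decomposition is the one to try first. Using the orthogonal rotation $(X,Y)\mapsto(Z,Z^\perp)$ with $Z^\perp=\mu X-\lambda Y$, the joint density of $(Z,Z^\perp)$ is $f\otimes g$ composed with a rotation. Writing $\Phi(x,y)=-\log f(x)-\log g(y)$, the Hessian of $-\log$ of the joint density in the rotated coordinates is the rotated block-diagonal matrix $\mathrm{diag}(\Psi_X,\Psi_Y)$. Marginalization to $Z$ then gives the Brascamp--Lieb/Schur-complement formula
\[
\Psi_Z=\E[A\mid Z]-\Cov(b\mid Z),
\]
where $A=\lambda^2\Psi_X+\mu^2\Psi_Y$ is the $zz$-block and $b$ is the $z$-gradient of $\Phi$. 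Log-concavity of the conditional law of $Z^\perp$ given $Z$ together with the Brascamp--Lieb inequality gives $\Cov(b\mid Z)\preceq \E[C D^{-1}C^{\mathsf T}\mid Z]$, where $C$ and $D$ are the off-diagonal and $\perp\perp$ blocks. Hence
\[
0\preceq\Psi_Z\preceq\E[A\mid Z].
\]

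\textbf{Step 3 (conclusion).} Since $0\preceq\Psi_Z\preceq \E[A\mid Z]$, we have $\|\Psi_Z\|_{\HS}^2\le\|\E[A\mid Z]\|_{\HS}^2$, because $\tr(M^2)$ is monotone on the positive semidefinite cone. Conditional Jensen then gives $\|\E[A\mid Z]\|_{\HS}^2\le\E[\|A\|_{\HS}^2\mid Z]$. Finally, by convexity of $\|\cdot\|_{\HS}^2$ applied to the convex combination $A=\lambda^2\Psi_X+\mu^2\Psi_Y$,
\[
\|A\|_{\HS}^2\le\lambda^2\|\Psi_X\|_{\HS}^2+\mu^2\|\Psi_Y\|_{\HS}^2.
\]
Taking expectations yields \eqref{eq:second-order-fisher-convolution}.

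The delicate points are the Brascamp--Lieb step in Step 2 and the justification of the differentiations and integrability using $K(X),K(Y)<\infty$ and \eqref{eq:logconcave-boundary-decay}. For Brascamp--Lieb we need strict positivity of $D$; if it fails, we approximate by adding $\varepsilon|x|^2$ to the potentials and pass to the limit.
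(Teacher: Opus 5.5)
Your final argument, the rotated-coordinates route, is correct, but it is genuinely different from the paper's. The paper does not argue from first principles. It imports Toscani's multivariate convolution inequality (his Theorem~6), whose cross term is $\mathcal H(U,V)=\langle\Jmat(U),\Jmat(V)\rangle_{\HS}$, and bounds that term by $\sqrt{K(U)K(V)}$. It then applies the inequality with $U=\lambda X$, $V=\mu Y$, $a=\lambda^2$, $b=\mu^2$ and the scaling $K(cX)=c^{-4}K(X)$, and closes with AM--GM.

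You instead prove the pointwise conditional matrix bound $0\preceq\Psi_Z(Z)\preceq\E[\lambda^2\Psi_X(X)+\mu^2\Psi_Y(Y)\mid Z]$ from the marginal-Hessian identity $\Psi_Z=\E[A\mid Z]-\Cov(b\mid Z)$. You finish with monotonicity of $\tr(\cdot)^2$ on the positive semidefinite cone, conditional Jensen, and convexity of $\|\cdot\|_{\HS}^2$. Your route is self-contained and shows exactly where log-concavity enters: only through $\Psi_Z\succeq0$. The paper's route takes two lines but rests on the precise statement and hypotheses of the cited theorem.

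The two are closer than they look. Suppose that in Step~3 you expand $\E\|A\|_{\HS}^2$ using independence and $\E\Psi_X(X)=\Jmat(X)$, instead of invoking convexity. You then obtain exactly $\lambda^4K(X)+\mu^4K(Y)+2\lambda^2\mu^2\tr(\Jmat(X)\Jmat(Y))$, which is Toscani's bound for these weights. Your convexity step is the pointwise version of the paper's Cauchy--Schwarz plus AM--GM.

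Two clean-ups are needed. First, the initial display of Step~2 is wrong. With $\alpha=\lambda^2$, $\beta=\mu^2$, Step~1 actually gives $\Psi_Z=\E[\Psi_X+\Psi_Y\mid Z]-\Cov(\lambda\rho_X+\mu\rho_Y\mid Z)-\E[(\mu\rho_X-\lambda\rho_Y)(\mu\rho_X-\lambda\rho_Y)^{\mathsf T}\mid Z]$. For standard Gaussians your version returns $2\Id$ instead of $\Id$. Even the corrected identity only yields $\Psi_Z\preceq\E[\Psi_X+\Psi_Y\mid Z]$, which loses the weights, so simply delete that route.

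Second, Brascamp--Lieb is not needed. The upper bound $\Psi_Z\preceq\E[A\mid Z]$ is just $\Cov(b\mid Z)\succeq0$. The lower bound $\Psi_Z\succeq0$ is log-concavity of $Z$, which is preserved under convolution as noted in the preliminaries. The $\varepsilon\|x\|^2$ approximation can therefore be dropped.

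The differentiations under the $w$-integral are covered by the standing assumptions. A nonnegative $C^2$ function with bounded Hessian satisfies $\|\nabla f\|^2\le2\bigl(\sup\|\nabla^2 f\|_{\op}\bigr)f$, and this makes every conditional moment in the identity finite for each $z$.
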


\begin{proof}
   Toscani's multivariate convolution inequality
\cite[Theorem~6]{toscani2015strengthened}, rewritten in our notation,
states that for independent log-concave random vectors $U,V\in\mathbb R^d$
and arbitrary $a,b>0$,
\begin{align}
    K(U+V)
  &  \leq
    \frac{a^4}{(a+b)^4}K(U)
    +
    \frac{b^4}{(a+b)^4}K(V)
    +
    \frac{2a^2b^2}{(a+b)^4}
    \mathcal H(U,V),\notag \\ 
    &\leq \frac{a^4}{(a+b)^4}K(U)
    +
    \frac{b^4}{(a+b)^4}K(V)
    +
    \frac{2a^2b^2}{(a+b)^4} \sqrt{K(U)}\sqrt{K(V)}
    \label{eq:toscani-second-order}
\end{align}
where 
\begin{equation}
   \mathcal H(f,g)
    \triangleq \sum_{i,j=1}^d \int_{\mathbb R^d}
    \frac{\partial_i f(x)\partial_j f(x)}{f(x)}
    \,dx \int_{\mathbb R^d}
    \frac{\partial_i g(x)\partial_j g(x)}{g(x)}
    \,dx.
\end{equation}

Fix $\lambda\in(0,1)$ and put
\[
    \mu
    \triangleq
    \sqrt{1-\lambda^2}.
\]
Apply \eqref{eq:toscani-second-order} with
\[
    U=\lambda X,
    \qquad
    V=\mu Y,
\]
and
\[
    a=\lambda^2,
    \qquad
    b=\mu^2=1-\lambda^2.
\]
Then $a+b=1$. Using the scaling identities
\[
    K(cX)=c^{-4}K(X),
\]
Therefore,
\begin{align}
    K\left(
        \lambda X+\mu Y
    \right)
    &\leq
    \lambda^4 K(X)
    +
    \mu^4 K(Y)
    +
    2\lambda^2\mu^2\sqrt{K(X)K(Y)}\\
    &\leq  \lambda^4 K(X)+
    \mu^4 K(Y)+
   \lambda^2\mu^2(K(X)+K(Y)).
    \label{eq:intermediate-second-order-bound}
\end{align}

Substituting this into
\eqref{eq:intermediate-second-order-bound} yields
\begin{align}
    K\left(
        \lambda X+\mu Y
    \right)
    &\leq
    \bigl(\lambda^4+\lambda^2\mu^2\bigr)K(X)
    +
    \bigl(\mu^4+\lambda^2\mu^2\bigr)K(Y)
    \notag\\
    &=
    \lambda^2(\lambda^2+\mu^2)K(X)
    +
    \mu^2(\lambda^2+\mu^2)K(Y)
    \notag\\
    &=
    \lambda^2K(X)
    +
    \mu^2K(Y)
    \notag\\
    &=
    \lambda^2K(X)
    +
    (1-\lambda^2)K(Y).
\end{align}
This proves \eqref{eq:second-order-fisher-convolution} for
$\lambda\in(0,1)$. The endpoint cases $\lambda=0$ and $\lambda=1$ are
immediate.
\end{proof}

\begin{remark}[Gaussian-regularized log-concave densities]
\label{rem:Gaussian-regularized-K}
Let $p$ be a possibly nonsmooth log-concave density and
$p_\varepsilon=p*\gamma_\varepsilon$, where $\gamma_\varepsilon$ is the
$N(0,\varepsilon\Id)$ density. Gaussian convolution makes
$p_\varepsilon$ strictly positive and smooth, preserves log-concavity, and
satisfies
\[
   D^\alpha p_\varepsilon=p*D^\alpha\gamma_\varepsilon,
   \qquad |\alpha|\le2.
\]
Since $D^\alpha\gamma_\varepsilon\in L^1\cap C_0$, the density
$p_\varepsilon$ satisfies \eqref{eq:logconcave-boundary-decay}. Thus
Lemma~\ref{lem:4.2} applies directly to all Gaussian-smoothed conditional laws
used in the paper.
\end{remark}

\section{An alternative one-dimensional proof}
\label{app:scalar-alternative}

Corollary~\ref{cor:scalar} already follows from Theorem~\ref{thm:main} by
setting $d=1$. This appendix gives an independent proof following the
conditional entropic route of \cite{ma2024entropic} and the first-order
Gaussian-smoothed continuity method of \cite{ye2026finiteentropy}. The new
technical input is the continuity of the second-order Fisher production $K$
proved in Appendix~\ref{app:smoothed-continuity}.

Write
\begin{equation}
   I_n=\E I(S_n\mid\etaVec_n),
   \qquad n\ge1.
   \label{eq:alternative-scalar-In}
\end{equation}
Necessity is immediate. For sufficiency, fix $n_0$ with $I_{n_0}<\infty$ and
put $N_m=2^mn_0$. For $t\ge0$, let
\[
   S_{N_m,t}=S_{N_m}+\sqrt t\,G,
\]
where $G\sim N(0,1)$ is independent, and define
\begin{equation}
   I_m(t)=\E I(S_{N_m,t}\mid\etaVec_{N_m}),
   \qquad
   K_m(t)=\E K(S_{N_m,t}\mid\etaVec_{N_m}).
   \label{eq:alternative-smoothed-IK}
\end{equation}
The Fisher dissipation identity gives
\begin{equation}
   I_m(0)-I_m(t)=\int_0^tK_m(s)\,ds.
   \label{eq:alternative-integrated-dissipation}
\end{equation}
Splitting the $N_{m+1}$ summands into two independent blocks and applying
Lemma~\ref{lem:4.2} conditionally yields
\begin{equation}
   K_{m+1}(t)\le K_m(t),
   \qquad t>0.
   \label{eq:alternative-K-monotone}
\end{equation}
By \eqref{eq:weighted-fisher}, we have 
$I_{m+1}(t)\le I_m(t)$.

\begin{proposition}
\label{prop:alternative-regularized-limits}
For every fixed $t>0$,
\begin{equation}
   \lim_{m\rightarrow \infty}K_m(t)=\frac1{(\sigma^2+t)^2}.
   \label{eq:alternative-regularized-limits}
\end{equation}
\end{proposition}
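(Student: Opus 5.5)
The plan is to reduce Proposition~\ref{prop:alternative-regularized-limits} to the identification already carried out in Proposition~\ref{prop:Klimit}, using a scaling argument. The scalar setting is the case $d=1$ with $W_n=S_n/\sigma$. For each $n$ and $t>0$,
\[
   S_{n,t}=S_n+\sqrt t\,G=\sigma\bigl(W_n+\sqrt{t/\sigma^2}\,G\bigr)=\sigma W_{n,t/\sigma^2}.
\]
The scaling identity $K(cX)=c^{-4}K(X)$, applied pointwise to the conditional laws, then gives
\[
   K(S_{n,t}\mid\etaVec_n)=\sigma^{-4}K(W_{n,t/\sigma^2}\mid\etaVec_n).
\]

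Proposition~\ref{prop:Klimit} with $d=1$ and smoothing parameter $s=t/\sigma^2>0$ shows that $\E K(W_{n,s}\mid\etaVec_n)\to 1/(1+s)^2$ along the full sequence $n\to\infty$. In particular this holds along the dyadic subsequence $N_m=2^mn_0$. Hence
\[
   K_m(t)\longrightarrow\sigma^{-4}\frac{1}{(1+t/\sigma^2)^2}=\frac{1}{(\sigma^2+t)^2}.
\]

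Since this appendix is meant to give an \emph{independent} scalar proof, I would also sketch the direct route, which uses the same ingredients. The conditional Lindeberg CLT (Proposition~\ref{prop:conditional-clt}) gives $\Law(S_{N_m}\mid\etaVec_{N_m})\Rightarrow N(0,\sigma^2)$ weakly in probability. Passing to an almost-surely convergent sub-subsequence, the smoothed densities converge in $C^2$ locally uniformly to $\phi_{\sigma^2+t}$ by Lemma~\ref{lem:C2conv}. Positivity of the Gaussian limit on compact sets then gives local uniform convergence of $(\log p)''$ to $-1/(\sigma^2+t)$.

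The tails are controlled by the log-concave bound $0\le-(\log p)''\le 1/t$ from Lemma~\ref{lem:hessian}, together with tightness coming from the weak convergence. This yields $K(S_{N_m,t}\mid\etaVec_{N_m})\to(\sigma^2+t)^{-2}$ in probability. The uniform bound $K\le t^{-2}$ upgrades this to convergence of expectations. Alternatively, one could invoke the $K$-continuity theorem of Appendix~\ref{app:smoothed-continuity} on Gaussian-smoothed tail-controlled classes in place of the explicit log-concave Hessian bound.

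The main obstacle is the tail/integrability control that lets pointwise convergence of the integrand be passed through the integral and the expectation. Log-concavity resolves this cleanly through the deterministic Hessian bound, so the only remaining care is the subsequence argument that converts convergence in probability of the conditional laws into almost-sure convergence.
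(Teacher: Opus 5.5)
Your proof is correct, but it takes a genuinely different route from the paper's. You use the exact scaling $S_{n,t}=\sigma W_{n,t/\sigma^2}$ and $K(cX)=c^{-4}K(X)$ to reduce the statement to Proposition~\ref{prop:Klimit} with $d=1$. That proposition is proved in Section~\ref{sec:multivariate-proof} from the conditional Lindeberg CLT, Lemma~\ref{lem:C2conv}, and the log-concave Hessian bound of Lemma~\ref{lem:hessian}, so there is no circularity. Your ``direct route'' is that same proof specialized to $d=1$.

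The paper deliberately avoids Proposition~\ref{prop:Klimit} here:
\begin{itemize}
\item It takes the limit $\E I(S_{N_m,t}\mid\etaVec_{N_m})\to1/(\sigma^2+t)$ from \cite{ye2026finiteentropy}.
\item It gets the lower bound $\liminf_m K_m(t)\ge(\sigma^2+t)^{-2}$ from $I^2\le K$ and Jensen.
\item It gets the upper bound from the entropic conditional CLT of \cite{ma2024entropic}: the relative entropy to the matched Gaussian and the conditional variance converge in probability, and there is a high-probability uniform tail bound. It then combines Pinsker's inequality with the continuity of $K$ on the tail-controlled class $\overline{\mathcal I}_{a,\ell}$ (Lemma~\ref{lem:4.1}), and uses $K\le t^{-2}$ on the bad event.
\end{itemize}

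Your reduction is shorter and self-contained within the paper. It also gives convergence in probability of the conditional $K$ along the full sequence, not only of its expectation along the dyadic subsequence. However, it collapses Appendix~\ref{app:scalar-alternative} back into the main-text proof. The paper's route is what makes that appendix an independent argument, and it is built to showcase the $K$-continuity lemma as the new ingredient.

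Your closing remark is not a drop-in substitution. Lemma~\ref{lem:4.1} applies only inside $\overline{\mathcal I}_{a,\ell}$, so you would need the uniform tail control that the paper imports from \cite{ma2024entropic}. You would also still need $K\le t^{-2}$, or some other uniform integrability, to pass to expectations.
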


\begin{proof}
For $n\ge1$, define
\[
   \widetilde\xi_i
   =
   \xi_i+\sqrt t\,G_i,
   \qquad
   \overline W_{n,t}
   =
   \frac1{\sqrt n}\sum_{i=1}^n
   \bigl(
      \widetilde\xi_i
      -
      \E[\widetilde\xi_i\mid\eta_i]
   \bigr),
\]
where the $G_i$ are independent standard Gaussian random variables,
independent of the original pairs. Conditionally on $\etaVec_n$,
$\overline W_{n,t}$ has the same law as $S_n+\sqrt t\,G$.

By the argument in the appendix of~\cite{ma2024entropic}, the family
\[
   \left\{
      \overline W_{n,t}^{\,2}
   \right\}_{n\ge1}
\]
is uniformly integrable. Define
\begin{equation}
   \ell(R)
   =
   \sup_{n\ge1}
   \mathbb E
   \left[
      \overline W_{n,t}^{\,2}
      \mathbf 1_{\{
         |\overline W_{n,t}|\ge R
      \}}
   \right].
\end{equation}
Then $\ell$ is nonnegative and decreasing, and
\[
   \ell(R)\longrightarrow0
   \qquad
   \text{as }R\to\infty.
\]
Therefore,
\begin{equation}
   \sup_{m\ge0}
   \E
   \tau_{S_{N_m,t}\mid\etaVec_{N_m}}(R)
   \le
   \ell(R),
   \qquad R\ge0.
   \label{eq:alternative-tail-control}
\end{equation}

Moreover,
\begin{equation}
   \E
   \operatorname{Var}
   (S_{N_m,t}\mid\etaVec_{N_m})
   =
   \sigma^2+t.
   \label{eq:alternative-average-variance}
\end{equation}
Thus, by~\cite{ye2026finiteentropy},
\begin{align}
\lim_{m\to\infty}
\mathbb E\,
I\left(
   S_{N_m,t}
   \mid
   \etaVec_{N_m}
\right)
&=
\lim_{m\to\infty}
\mathbb E\,
I\left(
   S_{N_m}+\sqrt t\,G
   \mid
   \etaVec_{N_m}
\right)
\notag\\
&=
\frac{1}{\sigma^2+t}.
\label{eq:alternative-fisher-limit}
\end{align}

By~\eqref{eq:K-bound},
\begin{equation}
   \mathbb E\,
   K\left(
      S_{N_m,t}
      \mid
      \etaVec_{N_m}
   \right)
   \le
   \frac{1}{t^2}.
\end{equation}
Moreover, it follows from~\cite{villani2006short} that
\begin{equation}
   I(X)^2
   \le
   K(X).
   \label{eq:K-I-comparison}
\end{equation}
Therefore,
\begin{align}
\mathbb E\,
I\left(
   S_{N_m,t}
   \mid
   \etaVec_{N_m}
\right)
&\le
\sqrt{
   \mathbb E
   \left[
      I\left(
         S_{N_m,t}
         \mid
         \etaVec_{N_m}
      \right)^2
   \right]
}
\notag\\
&\le
\sqrt{
   \mathbb E\,
   K\left(
      S_{N_m,t}
      \mid
      \etaVec_{N_m}
   \right)
}
\notag\\
&\le
\frac{1}{t}.
\end{align}

Since
\begin{align}
\sup_{m\ge0}
\mathbb E\,
I\left(
   S_{N_m,t}
   \mid
   \etaVec_{N_m}
\right)
&\le
\frac{1}{t},
\\
\sup_{m\ge0}
\mathbb E\,
\tau_{S_{N_m,t}\mid\etaVec_{N_m}}(R)
&\le
\ell(R),
\end{align}
it follows from~\cite{ma2024entropic} that
\begin{equation}
   \mathfrak D
   (S_{N_m,t}\mid\etaVec_{N_m})
   \overset{P}{\longrightarrow}
   0,
   \qquad
   \operatorname{Var}
   (S_{N_m,t}\mid\etaVec_{N_m})
   \overset{P}{\longrightarrow}
   \sigma^2+t.
   \label{eq:alternative-entropic-variance-convergence}
\end{equation}
Here,
\begin{equation}
   \mathfrak D(X\mid Y=y)
   =
   D\!\left(
      P_{X\mid Y=y}
      \middle\|
      N\!\left(
         0,
         \operatorname{Var}(X\mid Y=y)
      \right)
   \right).
   \label{eq:matching-gaussian-deficit}
\end{equation}

As in the proof of Proposition~5.16 in~\cite{ma2024entropic},
for every $\eta>0$, there exists a nonnegative decreasing function
$\ell_1$ satisfying
\[
   \ell_1(R)\longrightarrow0
   \qquad
   \text{as }R\to\infty,
\]
such that
\begin{equation}
   \P\!\left(
      \tau_{S_{N_m,t}\mid\etaVec_{N_m}}(R)
      \le
      \ell_1(R)
      \text{ for all }R\ge0
   \right)
   \ge
   1-\eta.
   \label{eq:alternative-high-prob-tail}
\end{equation}

Let $\mathcal R_m$ denote the event in
\eqref{eq:alternative-high-prob-tail}. For $\delta>0$ and
$0<\varepsilon<\sigma^2+t$, define
\[
   \mathcal H_m
   =
   \left\{
      \mathfrak D
      (S_{N_m,t}\mid\etaVec_{N_m})
      \le
      \delta
   \right\},
\]
and
\[
   \mathcal F_m
   =
   \left\{
      \left|
         \operatorname{Var}
         (S_{N_m,t}\mid\etaVec_{N_m})
         -
         (\sigma^2+t)
      \right|
      \le
      \varepsilon
   \right\}.
\]
Set
\[
   \mathcal W_m
   =
   \mathcal R_m
   \cap
   \mathcal H_m
   \cap
   \mathcal F_m.
\]

By Pinsker's inequality~\cite{pinsker1964information} and
Appendix~\ref{app:smoothed-continuity}, $\delta$ can be chosen
sufficiently small so that, on the event $\mathcal W_m$,
\begin{equation}
\left|
K\left(
   S_{N_m,t}
   \mid
   \etaVec_{N_m}
\right)
-
\frac{1}{
   \operatorname{Var}
   \left(
      S_{N_m,t}
      \mid
      \etaVec_{N_m}
   \right)^2
}
\right|
\le
\varepsilon.
\end{equation}

By~\eqref{eq:alternative-entropic-variance-convergence} and
\eqref{eq:alternative-high-prob-tail},
\[
   \liminf_{m\to\infty}
   \P(\mathcal W_m)
   \ge
   1-3\eta.
\]
Therefore,
\begin{align}
K_m(t)
&=
\mathbb E\,
K\left(
   S_{N_m,t}
   \mid
   \etaVec_{N_m}
\right)
\notag\\
&=
\mathbb E
\left[
   K\left(
      S_{N_m,t}
      \mid
      \etaVec_{N_m}
   \right)
   \mathbf 1_{\mathcal W_m}
\right]
+
\mathbb E
\left[
   K\left(
      S_{N_m,t}
      \mid
      \etaVec_{N_m}
   \right)
   \mathbf 1_{\mathcal W_m^c}
\right]
\notag\\
&\le
\mathbb E
\left[
   \left(
      \varepsilon
      +
      \frac{1}{
         \operatorname{Var}
         \left(
            S_{N_m,t}
            \mid
            \etaVec_{N_m}
         \right)^2
      }
   \right)
   \mathbf 1_{\mathcal W_m}
\right]
+
\frac{1}{t^2}
\P(\mathcal W_m^c)
\notag\\
&\le
\varepsilon
+
\frac{1}{(\sigma^2+t-\varepsilon)^2}
+
\frac{1}{t^2}
\P(\mathcal W_m^c).
\end{align}
Hence,
\begin{align}
\limsup_{m\to\infty} K_m(t)
&\le
\varepsilon
+
\frac{1}{(\sigma^2+t-\varepsilon)^2}
+
\frac{1}{t^2}
\limsup_{m\to\infty}
\P(\mathcal W_m^c)
\notag\\
&\le
\varepsilon
+
\frac{1}{(\sigma^2+t-\varepsilon)^2}
+
\frac{3\eta}{t^2}.
\end{align}
Letting $\varepsilon,\eta\downarrow0$ yields
\begin{equation}
   \limsup_{m\to\infty}
   K_m(t)
   \le
   \frac{1}{(\sigma^2+t)^2}.
\end{equation}

On the other hand, by~\eqref{eq:K-I-comparison},
\begin{align}
K_m(t)
&=
\mathbb E\,
K\left(
   S_{N_m,t}
   \mid
   \etaVec_{N_m}
\right)
\notag\\
&\ge
\mathbb E
\left[
   I\left(
      S_{N_m,t}
      \mid
      \etaVec_{N_m}
   \right)^2
\right]
\notag\\
&\ge
\left(
   \mathbb E\,
   I\left(
      S_{N_m,t}
      \mid
      \etaVec_{N_m}
   \right)
\right)^2.
\end{align}
Together with~\eqref{eq:alternative-fisher-limit}, this gives
\begin{equation}
   \liminf_{m\to\infty}
   K_m(t)
   \ge
   \frac{1}{(\sigma^2+t)^2}.
\end{equation}
Consequently,
\begin{equation}
   \lim_{m\to\infty}
   K_m(t)
   =
   \frac{1}{(\sigma^2+t)^2}.
\end{equation}

\end{proof}

We now remove the smoothing. By
\eqref{eq:alternative-integrated-dissipation},
\[
   I_m(0)=I_m(t)+\int_0^tK_m(s)\,ds.
\]
By \eqref{eq:alternative-K-monotone},
$0\le K_m(s)\le K_0(s)$. Since $I_{n_0}<\infty$,
\[
   \int_0^tK_0(s)\,ds=I_0(0)-I_0(t)\le I_{n_0}<\infty.
\]
Proposition~\ref{prop:alternative-regularized-limits} and dominated
convergence therefore give
\begin{align*}
   \lim_{m\to\infty}I_{N_m}
   &=\frac1{\sigma^2+t}+\int_0^t\frac{ds}{(\sigma^2+s)^2}\\
   &=\frac1{\sigma^2}.
\end{align*}

Finally, the one-component Fisher inequality propagates finiteness from
$n_0$ to every $n\ge n_0$. Splitting $S_{m+n}$ into two conditionally
independent normalized blocks gives
\begin{equation}
   (m+n)I_{m+n}\le mI_m+nI_n.
   \label{eq:alternative-subadditivity}
\end{equation}
Thus $a_n=nI_n$ is an extended-real subadditive sequence that is eventually
finite. Fekete's lemma implies that $I_n=a_n/n$ has a limit, and the dyadic
subsequence identifies it as $1/\sigma^2$. This completes the independent
proof of Corollary~\ref{cor:scalar}.

\section{One-dimensional compactness and continuity after Gaussian smoothing}
\label{app:smoothed-continuity}

For $a>0$ and a nonnegative function $\ell$ with $\ell(R)\to0$, define
\begin{equation}
   \mathcal I_{a,\ell}
   =\left\{f_{X_0+G_a}:
      \tau_{X_0+G_a}(R)\le\ell(R)\ \forall R>0,
      \ \E X_0=0
   \right\},
   \label{eq:smoothed-tail-class}
\end{equation}
where $G_a\sim N(0,a)$ is independent of $X_0$. Let
$\overline{\mathcal I}_{a,\ell}$ denote the closure of this class in
$L^1_{1+x^2}=L^1(\R,(1+x^2)dx)$.

We first record the generalized dominated convergence principle used below.

\begin{lemma}[Generalized dominated convergence]
\label{lem:generalized-dct}
Let $F_j,G_j,F,G$ be nonnegative measurable functions such that
$F_j\le G_j$, $F_j\to F$ and $G_j\to G$ almost everywhere, and
$\int G_j\to\int G<\infty$. Then $F_j\to F$ in $L^1$.
\end{lemma}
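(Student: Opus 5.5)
The plan is the standard Scheff\'e-type argument applied to the differences $G_j-F_j$. Since $F_j\le G_j$, these differences are nonnegative. By Fatou's lemma applied to $G_j-F_j\ge0$, which converges almost everywhere to $G-F$, we get
\[
   \int (G-F)\le\liminf_{j\to\infty}\int (G_j-F_j)
   =\int G-\limsup_{j\to\infty}\int F_j,
\]
where we used $\int G_j\to\int G<\infty$. Note first that $\int F\le\int G<\infty$: indeed $F\le G$ almost everywhere as a pointwise limit of $F_j\le G_j$. Hence $F$ is integrable and we may cancel $\int G$, obtaining $\limsup_j\int F_j\le\int F$. Fatou applied directly to $F_j\ge0$ gives $\int F\le\liminf_j\int F_j$, so $\int F_j\to\int F$. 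For the finitely many $j$ where $\int G_j$ might be infinite, the quantities are irrelevant because only tails matter. For large $j$, $\int G_j<\infty$, so $F_j$ is integrable as well.

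To upgrade to $L^1$ convergence, apply the same argument to the functions $|F_j-F|$. These are dominated by $F_j+F\le G_j+F$, and the dominating sequence $H_j:=G_j+F$ converges almost everywhere to $G+F$ with $\int H_j\to\int G+\int F<\infty$. Repeating Fatou on $H_j-|F_j-F|\ge0$, which tends almost everywhere to $G+F$, yields
\[
   \int(G+F)\le\int(G+F)-\limsup_{j\to\infty}\int|F_j-F|,
\]
so $\limsup_j\int|F_j-F|\le0$, i.e.\ $F_j\to F$ in $L^1$. This second step alone actually suffices, and it makes the first paragraph unnecessary except for establishing integrability of $F$.

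The only delicate point, and the one I expect to be the main obstacle if any, is the bookkeeping with possibly infinite integrals. The subtraction inside Fatou's inequality requires $\int H_j<\infty$, which holds for all large $j$ by hypothesis, together with $\int(G+F)<\infty$, which follows from $F\le G$ almost everywhere. Once these finiteness facts are recorded, the argument is routine.
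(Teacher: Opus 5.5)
Your proof is correct, and your first paragraph is exactly the paper's argument: Fatou applied to $F_j$ and to $G_j-F_j$ gives $\int F_j\to\int F$, after which the paper simply invokes Scheff\'e's lemma. Your second step, Fatou applied to $G_j+F-|F_j-F|\ge0$, is just a self-contained proof of that Scheff\'e step. As you note, it makes the first paragraph redundant apart from $\int F\le\int G<\infty$, so the two routes are essentially the same.
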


\begin{proof}
Fatou's lemma applied to $F_j$ gives
$\int F\le\liminf_j\int F_j$. Applying it to $G_j-F_j$ and using
$\int G_j\to\int G$ gives
$\limsup_j\int F_j\le\int F$. Hence the integrals converge, and Scheff\'e's
lemma yields $L^1$ convergence.
\end{proof}

\begin{lemma}[Continuity of the second-order Fisher production]
\label{lem:4.1}
Let $p_n,p\in\overline{\mathcal I}_{a,\ell}$ and suppose
$\|p_n-p\|_{L^1}\to0$. Then
\begin{equation}
   K(p_n)\longrightarrow K(p).
   \label{eq:K-continuity}
\end{equation}
\end{lemma}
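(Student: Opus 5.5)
The plan is to write each element of $\overline{\mathcal I}_{a,\ell}$ as a Gaussian convolution, $p=\mu*\phi_a$ with $\mu$ a probability measure, and then use the posterior (Tweedie) representation of Lemma~\ref{lem:hessian} to control the integrand of $K$ pointwise. Note that the class $\mathcal I_{a,\ell}$ does not assume log-concavity, so the bound $K\le 1/a^2$ from \eqref{eq:Kbound} is not directly available. The argument has to rely only on the Gaussian-convolution structure and the uniform tail control.

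\textbf{Step 1 (closure structure).} If $p_j=\mu_j*\phi_a\to p$ in $L^1_{1+x^2}$, the tail bounds $\tau(R)\le\ell(R)$ make $\{\mu_j\}$ tight with uniformly integrable second moments. Indeed, the second moment of $\mu_j$ equals that of $p_j$ minus $a$. Prokhorov then gives a weak limit $\mu$ along a subsequence, and $\mu*\phi_a=p$. Since Gaussian convolution is injective, the limit $\mu$ is unique. Hence every $p\in\overline{\mathcal I}_{a,\ell}$ again has the form $\mu*\phi_a$, with $\int x\,d\mu=0$ and with the same tail control (by Fatou). Applying the same argument to $p_n\to p$ gives $\mu_n\Rightarrow\mu$.

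\textbf{Step 2 (pointwise convergence).} Lemma~\ref{lem:C2conv} gives $D^\alpha p_n\to D^\alpha p$ locally uniformly for $|\alpha|\le2$, and $p>0$ everywhere. Therefore $(\log p_n)''\to(\log p)''$ locally uniformly, and the integrand $F_n:=|(\log p_n)''|^2p_n$ converges pointwise to $F:=|(\log p)''|^2p$.

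\textbf{Step 3 (domination).} By \eqref{eq:hessianposterior},
\[
(\log p_n)''(y)=-\frac1a+\frac1{a^2}\operatorname{Var}(X_0\mid Y=y).
\]
This gives
\[
F_n\le \frac{2}{a^2}p_n+\frac{2}{a^4}\bigl(\E[(X_0-m(y))^2\mid Y=y]\bigr)^2p_n .
\]
The second term is the obstacle, because it involves a squared conditional variance and the tail control only bounds second moments. I would handle it on $B_R$ and on $B_R^c$ separately, in the spirit of the proof of Proposition~\ref{prop:Klimit}. On $B_R$, Step 2 gives uniform convergence. On $B_R^c$, I need
\[
\sup_n\int_{|y|>R}F_n\,dy\to0 \quad\text{as } R\to\infty.
\]
My first attempt is to bound the squared conditional variance using the Gaussian posterior. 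Since $X_0\mid Y=y$ has density proportional to $\phi_a(y-x)\,\mu(dx)$, I would split $\mu$ into the part on $\{|x|\le r\}$ and the part outside. The inner part contributes a posterior variance of at most $r^2$. The outer part is controlled through $\tau_\mu(r)$ together with the Gaussian likelihood ratio. Together these should give a bound of the form
\[
\int\operatorname{Var}(X_0\mid Y)^2p\,dy\lesssim r^2\,\E_\mu|X_0|^2+\text{(tail terms)}.
\]
If fourth posterior moments resist this approach, the fallback is to use the identity
\[
K(p)=\int\Bigl(\frac{p''}{p}\Bigr)^2p-\text{(terms in } I \text{ and } \textstyle\int p'^2p''/p^2\text{)},
\]
together with the explicit formulas $p''=\mu*\phi_a''$ and $p'=\mu*\phi_a'$ and Cauchy--Schwarz in the form
\[
\frac{(\mu*\phi_a'')^2}{\mu*\phi_a}\le\mu*\frac{(\phi_a'')^2}{\phi_a}.
\]
The right-hand side is integrable and its tails are governed uniformly by $\tau_\mu$. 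The corresponding quartic term needs the analogous convexity bound
\[
\frac{(p')^4}{p^3}\le\mu*\frac{(\phi_a')^4}{\phi_a^3}.
\]

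\textbf{Step 4 (conclude).} With a dominating sequence $G_n=C\,\mu_n*g_a$, where $g_a$ is a fixed integrable Gaussian-type function, we have $G_n\to G=C\,\mu*g_a$ pointwise and $\int G_n=C\int g_a=\int G$. Lemma~\ref{lem:generalized-dct} then gives $\int F_n\to\int F$, which is $K(p_n)\to K(p)$. The uniform-integrability bound in Step 3 is the step I expect to be the crux.
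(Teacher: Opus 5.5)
Your proof is correct once you commit to the fallback in Step~3. The first attempt, which bounds the squared posterior variance by splitting $\mu$ into inner and outer parts, is unnecessary and should be dropped. Jensen's inequality for the probability measure $\phi_a(y-x)\,\mu_n(dx)/p_n(y)$ gives your two bounds, $(p_n'')^2/p_n\le\mu_n*\bigl((\phi_a'')^2/\phi_a\bigr)$ and $(p_n')^4/p_n^3\le\mu_n*\bigl((\phi_a')^4/\phi_a^3\bigr)$. You do not need an expansion identity for $K$: the pointwise inequality $\bigl[(\log p_n)''\bigr]^2p_n\le 2(p_n'')^2/p_n+2(p_n')^4/p_n^3$ already yields $F_n\le\mu_n*g_a$, where $g_a$ is a fixed polynomial times a Gaussian density. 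You also need no tail estimate on this majorant. Indeed, $\int\mu_n*g_a=\int g_a$ for every $n$, and $\mu_n*g_a\to\mu*g_a$ pointwise because $g_a$ is bounded and continuous, so Lemma~\ref{lem:generalized-dct} closes the argument. This domination step is the same mechanism as the paper's H\"older majorants \eqref{eq:first-derivative-majorant}--\eqref{eq:K-integrand-majorant} followed by Lemma~\ref{lem:generalized-dct}.

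The difference lies in the framework around that step. The paper does the following:
\begin{itemize}
\item it splits the Gaussian variance in half, so that the mixing object $f_j$ is itself a density;
\item it dominates $y^4\phi_t$ by $(\sup U_t)\,\phi_{2t}$, so that the majorant is again a smoothed density converging in $L^1$;
\item it imports $L^1_{1+x^2}$ subsequential compactness from the earlier finite-entropy work and estimates $q_j''-q''$ by hand;
\item it proves the statement first for sequences inside $\mathcal I_{a,\ell}$ and then extends to the closure by a diagonal argument.
\end{itemize}
Your version works at the level of mixing measures instead. You represent every closure point as $\mu*\phi_a$ via tightness, Prokhorov and injectivity of Gaussian convolution, and you get pointwise convergence from Lemma~\ref{lem:C2conv}. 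This is self-contained and handles closure points directly, without diagonalization. It also shows that the tail control enters only through tightness of the mixing measures. In fact your argument proves $K(\mu_n*\phi_a)\to K(\mu*\phi_a)$ whenever $\mu_n\Rightarrow\mu$, with no moment assumptions. Even tightness could be bypassed: $L^1$ convergence of $\mu_n*\phi_a$ gives pointwise convergence of $\hat\mu_n$, and L\'evy's continuity theorem then applies.
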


\begin{proof}
It suffices first to prove the following core statement: if
$q_j\in\mathcal I_{a,\ell}$, $q\in\overline{\mathcal I}_{a,\ell}$, and
$q_j\to q$ in $L^1$, then $K(q_j)\to K(q)$. We prove this statement and
then extend it to arbitrary sequences in the closure.

By deterministic rescaling, it suffices to consider $a\in(0,1)$. Every
$q_j=f_{X_{0,j}+G_a}$ can be represented as
\[
   X_{0,j}+G_a
   \overset{d}{=}
   e^{-t}Z_j+\sqrt{1-e^{-2t}}\,G,
   \qquad
   t=-\frac12\log(1-a/2),
\]
where
\[
   e^{-t}Z_j=X_{0,j}+\widehat G_{a/2},
   \qquad
   \widehat G_{a/2}\sim N(0,a/2).
\]
Let $f_j$ be the density of $e^{-t}Z_j$, and let $\phi_t$ be the
density of
\[
   N(0,1-e^{-2t})=N(0,a/2).
\]
Then
\[
   q_j=f_j\ast\phi_t.
\]

It follows from~\cite{ye2026finiteentropy} that every subsequence
admits a further subsequence, still denoted by $j$, and a density $f$
such that
\begin{equation}
   q_j\overset{L^1_{1+x^2}}{\longrightarrow} q,\qquad
   f_j\overset{L^1_{1+x^2}}{\longrightarrow} f,\qquad
   q_j'\overset{a.e.}{\longrightarrow} q',
   \label{eq:compact-subsequence-convergence}
\end{equation}
with $q=f\ast\phi_t$.

Differentiation under convolution gives, with
$\sigma_t^2=1-e^{-2t}$,
\[
   q_j''(y)
   =
   \int_{\R}
   f_j(x)
   \left(
      \frac{(y-x)^2}{\sigma_t^4}
      -
      \frac1{\sigma_t^2}
   \right)
   \phi_t(y-x)\,dx.
\]
Consequently,
\begin{equation}
   |q_j''(y)-q''(y)|
   \le
   (C_t+C_t'y^2)\|f_j-f\|_{L^1}
   +
   C_t'\|f_j-f\|_{L^1_{1+x^2}},
   \label{eq:second-derivative-convergence-bound}
\end{equation}
where
\[
   C_t
   =
   \frac{1}
   {(2\pi)^{1/2}(1-e^{-2t})^{3/2}},
   \qquad
   C_t'
   =
   \frac{2}
   {(1-e^{-2t})^{5/2}(2\pi)^{1/2}}.
\]
Hence,
\[
   q_j''(y)\longrightarrow q''(y)
   \qquad\text{for every }y.
\]
Moreover, since $f_j\to f$ in $L^1$ and
$q_j=f_j\ast\phi_t$, we have
\[
   q_j(y)\longrightarrow q(y)
   \qquad\text{for every }y.
\]
Since Gaussian convolution yields $q_j(y)>0$ and $q(y)>0$, it follows
that
\begin{equation}
   \left[
      \frac{q_j''}{q_j}
      -
      \left(\frac{q_j'}{q_j}\right)^2
   \right]^2q_j
   \longrightarrow
   \left[
      \frac{q''}{q}
      -
      \left(\frac{q'}{q}\right)^2
   \right]^2q
   \quad\text{a.e.}
   \label{eq:K-integrand-ae}
\end{equation}

It remains to establish uniform integrability of the integrands in
\eqref{eq:K-integrand-ae}. By H\"older's inequality,
\begin{equation}
   (q_j'(y))^4
   \le
   \frac{q_j(y)^3}{(1-e^{-2t})^4}
   \mathbb E\!\left[
      (y-e^{-t}Z_j)^4
      \phi_t(y-e^{-t}Z_j)
   \right].
\end{equation}
Therefore,
\begin{equation}
   \frac{(q_j'(y))^4}{q_j(y)^3}
   \le
   \frac{1}{(1-e^{-2t})^4}
   \mathbb E\!\left[
      (y-e^{-t}Z_j)^4
      \phi_t(y-e^{-t}Z_j)
   \right].
\end{equation}

Let
\begin{align}
   U_t(y)
   &=
   \frac{y^4\phi_t(y)}{\phi_{2t}(y)}
   \notag\\
   &=
   y^4\sqrt{1+e^{-2t}}
   \exp\left(
      -\frac{e^{-2t}y^2}{2(1-e^{-4t})}
   \right).
\end{align}
Note that $U_t(y)\ge0$ and
\begin{equation}
   \sup_{y\in\R}U_t(y)<\infty.
\end{equation}
Hence,
\begin{align}
   \frac{(q_j'(y))^4}{q_j(y)^3}
   &\le
   \frac{\sup_{y\in\R}U_t(y)}
        {(1-e^{-2t})^4}
   \mathbb E\!
   \left[
      \phi_{2t}(y-e^{-t}Z_j)
   \right]
   \notag\\
   &=
   B_t(f_j\ast\phi_{2t})(y),
   \label{eq:first-derivative-majorant}
\end{align}
where
\[
   B_t
   =
   \frac{\sup_{y\in\R}U_t(y)}
        {(1-e^{-2t})^4}.
\]

Moreover,
\[
   f_j\ast\phi_{2t}
   \longrightarrow
   f\ast\phi_{2t}
   \quad\text{pointwise and in }L^1.
\]
Lemma~\ref{lem:generalized-dct}, applied to
\eqref{eq:first-derivative-majorant}, yields
\begin{equation}
   \frac{(q_j')^4}{q_j^3}
   \longrightarrow
   \frac{(q')^4}{q^3}
   \qquad\text{in }L^1.
   \label{eq:first-derivative-L1}
\end{equation}

For the second derivative, we have
\begin{equation}
   q_j''(y)
   =
   -\frac{1}{1-e^{-2t}}q_j(y)
   +
   \frac{1}{(1-e^{-2t})^2}
   \mathbb E\!\left[
      (y-e^{-t}Z_j)^2
      \phi_t(y-e^{-t}Z_j)
   \right].
\end{equation}
Therefore,
\begin{align}
   (q_j''(y))^2
   &\le
   \frac{2}{(1-e^{-2t})^2}q_j(y)^2
   \notag\\
   &\quad+
   \frac{2}{(1-e^{-2t})^4}
   \left(
      \mathbb E\!\left[
         (y-e^{-t}Z_j)^2
         \phi_t(y-e^{-t}Z_j)
      \right]
   \right)^2.
\end{align}
By the Cauchy--Schwarz inequality,
\begin{align}
   (q_j''(y))^2
   &\le
   \frac{2}{(1-e^{-2t})^2}q_j(y)^2
   +
   \frac{2q_j(y)}{(1-e^{-2t})^4}
   \mathbb E\!\left[
      (y-e^{-t}Z_j)^4
      \phi_t(y-e^{-t}Z_j)
   \right]
   \notag\\
   &\le
   \frac{2}{(1-e^{-2t})^2}q_j(y)^2
   +
   \frac{2q_j(y)\sup_{y\in\R}U_t(y)}
        {(1-e^{-2t})^4}
   \mathbb E\!\left[
      \phi_{2t}(y-e^{-t}Z_j)
   \right]
   \notag\\
   &=
   \frac{2}{(1-e^{-2t})^2}q_j(y)^2
   +
   2q_j(y)B_t(f_j\ast\phi_{2t})(y).
\end{align}
Thus,
\begin{equation}
   \frac{(q_j'')^2}{q_j}
   \le
   \frac{2}{(1-e^{-2t})^2}q_j
   +
   2B_t(f_j\ast\phi_{2t}).
   \label{eq:second-derivative-majorant}
\end{equation}
The right-hand side converges in $L^1$ to the corresponding expression
with $q$ and $f$. Hence, another application of
Lemma~\ref{lem:generalized-dct} yields
\begin{equation}
   \frac{(q_j'')^2}{q_j}
   \longrightarrow
   \frac{(q'')^2}{q}
   \qquad\text{in }L^1.
   \label{eq:second-derivative-L1}
\end{equation}

Finally,
\begin{equation}
   \left[
      \frac{q_j''}{q_j}
      -
      \left(\frac{q_j'}{q_j}\right)^2
   \right]^2q_j
   \le
   2\frac{(q_j'')^2}{q_j}
   +
   2\frac{(q_j')^4}{q_j^3}.
   \label{eq:K-integrand-majorant}
\end{equation}
The majorant in \eqref{eq:K-integrand-majorant} converges in $L^1$ by
\eqref{eq:first-derivative-L1}--\eqref{eq:second-derivative-L1}.
Combining \eqref{eq:K-integrand-ae} with
Lemma~\ref{lem:generalized-dct} proves that
\[
   K(q_j)\longrightarrow K(q)
\]
along the extracted subsequence. Since every subsequence has a further
subsequence with this limit, the core statement follows for the full
sequence.

We now pass from $\mathcal I_{a,\ell}$ to its closure. Let
$p_n,p\in\overline{\mathcal I}_{a,\ell}$ with $p_n\to p$ in $L^1$.
For each fixed $n$, choose $p_{n,k}\in\mathcal I_{a,\ell}$ such that
\[
   p_{n,k}\longrightarrow p_n
   \qquad\text{in }L^1_{1+x^2}.
\]
The core statement gives
\[
   K(p_{n,k})\longrightarrow K(p_n).
\]
Choose $k(n)$ so that
\[
   \|p_{n,k(n)}-p_n\|_{L^1}<\frac1n,
   \qquad
   |K(p_{n,k(n)})-K(p_n)|<\frac1n.
\]
Set $q_n=p_{n,k(n)}$. Then $q_n\in\mathcal I_{a,\ell}$ and
$q_n\to p$ in $L^1$. Applying the core statement once more gives
$K(q_n)\to K(p)$, and hence
\[
   |K(p_n)-K(p)|
   \le
   |K(p_n)-K(q_n)|
   +
   |K(q_n)-K(p)|
   \longrightarrow0.
\]
\end{proof}

\section{Proof of the conditional multivariate Lindeberg--Feller proposition}
\label{app:conditional-lindeberg-proof}

\begin{proof}[Proof of Proposition~\ref{prop:conditional-clt}]
First, the matrices $C_i$ are i.i.d. and have integrable entries.  Indeed,
\[
   |(C_i)_{jk}|
   \le
   \frac12\E[X_{i,j}^2+X_{i,k}^2\mid\eta_i].
\]
Hence the strong law of large numbers, applied entrywise, gives
\begin{equation}
   V_n=\frac1n\sum_{i=1}^n C_i
   \longrightarrow \E C_1=\Id
   \qquad\text{almost surely}.
   \label{eq:conditional-covariance-convergence}
\end{equation}
Since $d$ is fixed, entrywise convergence is equivalent to convergence in
any matrix norm.

Next, by the tower property and identical distribution,
\begin{align}
   \E\Lambda_n(\varepsilon)
   &=\E\left[
      \|X_1\|^2
      \mathbf1_{\{\|X_1\|>\varepsilon\sqrt n\}}
   \right]
   \longrightarrow0,
   \label{eq:conditional-lindeberg-L1}
\end{align}
where the limit follows from $\E\|X_1\|^2=d<\infty$.  Thus
$\Lambda_n(\varepsilon)\to0$ in $L^1$, and hence in probability, for each
$\varepsilon>0$.

We now explain why the ordinary deterministic multivariate
Lindeberg--Feller theorem can be applied after conditioning.
Let
\[
\mathcal G_n=\sigma(\etaVec_n).
\]
Since
$(\mathbb R^d,\mathcal B(\mathbb R^d))$ is a standard Borel space,
there exists a regular conditional distribution of $X_k$ given
$\mathcal G_n$; see, e.g., \cite[Theorem~8.5]{Kallenberg2021}.
Thus, for each $1\le k\le n$, we may choose a probability kernel
\[
K_{n,k}(\omega,dx)
=
\mathbb P(X_k\in dx\mid\mathcal G_n)(\omega).
\]
Therefore, for every integrable measurable function $\varphi$,
\[
\mathbb E[\varphi(X_k)\mid\mathcal G_n](\omega)
=
\int_{\mathbb R^d}\varphi(x)\,K_{n,k}(\omega,dx)
\]
for almost every $\omega$.

Since the pairs $(X_i,\eta_i)$ are independent, the random vectors
$X_1,\ldots,X_n$ are conditionally independent given $\mathcal G_n$.
Consequently, for almost every $\omega$, a version of their conditional
joint distribution is
\[
K_n(\omega,dx_1,\ldots,dx_n)
=
\bigotimes_{k=1}^n K_{n,k}(\omega,dx_k).
\]
Therefore, once $\omega$ is fixed outside a null set, the conditional
expectations appearing in $V_n$ and $\Lambda_n(\varepsilon)$ become
ordinary deterministic integrals. 
Moreover,
\[
   \E[X_i\mid\mathcal G_n]
   =
   \E[X_i\mid\eta_i]
   =
   0,
   \qquad 1\le i\le n.
\]
Hence, under \(K_n(\omega,\cdot)\), the \(n\)th row
\[
   \left\{\frac{X_i}{\sqrt n}:1\le i\le n\right\}
\]
is an ordinary triangular array of independent centered
\(\R^d\)-valued random vectors.

Fix an arbitrary subsequence $\{n_k\}$. Since
$\Lambda_n(\varepsilon)\to0$ in probability for every
$\varepsilon>0$, for each fixed $\varepsilon>0$ the subsequence
$\{\Lambda_{n_k}(\varepsilon)\}$ admits a further subsequence converging
to zero almost surely. By a diagonal argument over the countable set
$\mathbb Q_{>0}$, we can extract a further subsequence
$\{n_{k_j}\}$ such that
\begin{equation}
    \Lambda_{n_{k_j}}(\varepsilon)\longrightarrow0
    \qquad\text{a.s.}
\end{equation}
for every positive rational $\varepsilon$. Since
$\Lambda_n(\varepsilon)$ is nonincreasing in $\varepsilon$, the same
convergence holds for every $\varepsilon>0$. Moreover,
\eqref{eq:conditional-covariance-convergence} yields
\[
    V_{n_{k_j}}\longrightarrow \Id
    \qquad\text{a.s.}
\]
along this subsequence.

Fix an environment in this probability-one event.  For each \(j\), consider
the deterministic triangular array
\[
   Y_{j,i}
   \triangleq
   \frac{X_i}{\sqrt{n_{k_j}}},
   \qquad
   1\le i\le n_{k_j},
\]
under the conditional product measure
\(K_{n_{k_j}}(\omega,\cdot)\).
The variables in each row are independent and centered.  Moreover, their
total covariance matrix is
\[
   \sum_{i=1}^{n_{k_j}}
   \Cov_{K_{n_{k_j}}(\omega,\cdot)}(Y_{j,i})
   =
   V_{n_{k_j}}(\omega)
   \longrightarrow \Id,
\]
while, for every \(\varepsilon>0\),
\begin{align}
   &\sum_{i=1}^{n_{k_j}}
   \E_{K_{n_{k_j}}(\omega,\cdot)}
   \left[
      \|Y_{j,i}\|^2
      \mathbf 1_{\{\|Y_{j,i}\|>\varepsilon\}}
   \right]
   \notag\\
   &\qquad
   =
   \Lambda_{n_{k_j}}(\varepsilon)(\omega)
   \longrightarrow0.
\end{align}
These are precisely the covariance and Lindeberg conditions in the
multivariate Lindeberg--Feller theorem; see, e.g.,
\cite[Proposition~2.27]{vanDerVaart1998}.
Consequently,
\[
   \frac{1}{\sqrt{n_{k_j}}}
   \sum_{i=1}^{n_{k_j}}X_i
   \Longrightarrow
   N(0,\Id)
\]
under \(K_{n_{k_j}}(\omega,\cdot)\), or equivalently,
\[
   \nu_{n_{k_j}}
   \Longrightarrow
   \gamma_d
   \qquad\text{for almost every }\omega.
\]
Since \(d_{\mathrm{BL}}\) metrizes weak convergence,
\[
   d_{\mathrm{BL}}
   \bigl(\nu_{n_{k_j}},\gamma_d\bigr)
   \longrightarrow0
   \qquad\text{almost surely}.
\]
As every subsequence of
\(\{d_{\mathrm{BL}}(\nu_n,\gamma_d)\}\) admits a further subsequence
converging to zero almost surely, the subsequence criterion for convergence
in probability yields
\[
   d_{\mathrm{BL}}(\nu_n,\gamma_d)
   \overset{P}{\longrightarrow}0.
\]

\end{proof}

\section{Gaussian smoothing and Hessian identities}
\label{app:gaussian-smoothing}

\subsection{Derivative convergence under Gaussian convolution}
\label{app:gaussian-smoothing-proof}

\begin{proof}[Proof of Lemma~\ref{lem:C2conv}]
Fix $t>0$ and a multi-index $\alpha$ with $|\alpha|\le2$.
Since $D^\alpha\phi_t$ is bounded, differentiation under the integral
sign gives, for every $x\in\mathbb R^d$,
\[
D^\alpha(\mu_n*\phi_t)(x)
=
\int_{\mathbb R^d}
D^\alpha\phi_t(x-z)\,\mu_n(dz),
\]
and similarly,
\[
D^\alpha(\mu*\phi_t)(x)
=
\int_{\mathbb R^d}
D^\alpha\phi_t(x-z)\,\mu(dz).
\]

Fix $R>0$. For each $x\in B_R$, define
\[
g_x(z)
\triangleq
D^\alpha\phi_t(x-z),
\qquad z\in\mathbb R^d.
\]
Then $g_x$ is bounded and Lipschitz. Indeed,
\[
\|g_x\|_\infty
=
\sup_{z\in\mathbb R^d}
|D^\alpha\phi_t(x-z)|
=
\sup_{y\in\mathbb R^d}
|D^\alpha\phi_t(y)|
<\infty.
\]
Moreover, by the mean-value theorem,
\[
\operatorname{Lip}(g_x)
\leq
\sup_{z\in\mathbb R^d}
\|\nabla g_x(z)\|
=
\sup_{y\in\mathbb R^d}
\|\nabla D^\alpha\phi_t(y)\|
<\infty.
\]
The last two quantities are finite because every derivative of the
Gaussian density is a polynomial multiplied by a Gaussian density.
Hence there exists a constant $C_{R,d,t,\alpha}<\infty$ such that
\[
\|g_x\|_{\mathrm{BL}}=\|g_x\|_{\infty}+\operatorname{Lip}(g_x)
\leq
C_{R,d,t,\alpha},
\qquad
x\in B_R.
\]

By the definition of the bounded-Lipschitz distance, for every bounded
Lipschitz function $f$,
\[
\left|
\int_{\mathbb R^d} f(z)\,(\mu_n-\mu)(dz)
\right|
\le
\|f\|_{\mathrm{BL}}\,
d_{\mathrm{BL}}(\mu_n,\mu).
\]
Therefore, for every $x\in B_R$,
\begin{align}
&\left|
D^\alpha(\mu_n*\phi_t)(x)
-
D^\alpha(\mu*\phi_t)(x)
\right|
\notag\\
&\quad=
\left|
\int_{\mathbb R^d}
D^\alpha\phi_t(x-z)\,
(\mu_n-\mu)(dz)
\right|
\notag\\
&\quad=
\left|
\int_{\mathbb R^d}
g_x(z)\,
(\mu_n-\mu)(dz)
\right|
\notag\\
&\quad\le
\|g_x\|_{\mathrm{BL}}\,
d_{\mathrm{BL}}(\mu_n,\mu)
\notag\\
&\quad\le
C_{R,d,t,\alpha}\,
d_{\mathrm{BL}}(\mu_n,\mu).
\end{align}
Taking the supremum over $x\in B_R$ yields
\[
\sup_{x\in B_R}
\left|
D^\alpha(\mu_n*\phi_t)(x)
-
D^\alpha(\mu*\phi_t)(x)
\right|
\le
C_{R,d,t,\alpha}\,
d_{\mathrm{BL}}(\mu_n,\mu).
\]
Since
\[
d_{\mathrm{BL}}(\mu_n,\mu)\longrightarrow0,
\]
we obtain
\[
\sup_{x\in B_R}
\left|
D^\alpha(\mu_n*\phi_t)(x)
-
D^\alpha(\mu*\phi_t)(x)
\right|
\longrightarrow0.
\]
Since $R>0$ is arbitrary, the convergence is locally uniform on
$\mathbb R^d$.
\end{proof}

\subsection{Posterior covariance and Hessian bound}
\label{app:hessian-proof}

\begin{proof}[Proof of Lemma~\ref{lem:hessian}]
Let $\mu=\Law(X)$, and let
\[
\phi_t(u)
=
\frac{1}{(2\pi t)^{d/2}}
\exp\left(-\frac{\|u\|^2}{2t}\right),
\qquad u\in\R^d,
\]
denote the density of $N(0,t\Id)$, where $\|\cdot\|$ denotes the Euclidean
norm on $\R^d$. Since
\[
Y=X+\sqrt t\,G,
\]
the density of $Y$ is
\[
p_t(y)
=
\int_{\mathbb R^d}\phi_t(y-x)\,\mu(dx).
\]
Since $\phi_t>0$, we have $p_t(y)>0$ for every $y\in\mathbb R^d$.
Hence the posterior mean can be written as
\begin{equation}
m(y)
=
\E[X\mid Y=y]
=
\frac{
\displaystyle\int_{\mathbb R^d}
x\,\phi_t(y-x)\,\mu(dx)
}{
p_t(y)
}.
\label{eq:posterior-mean-expression}
\end{equation}

Since
\[
\nabla_y\phi_t(y-x)
=
\frac{x-y}{t}\phi_t(y-x),
\]
we obtain
\begin{equation}
\nabla p_t(y) = \frac{p_t(y)}{t}
\bigl(m(y)-y\bigr). 
\end{equation}
Therefore,
\[
\nabla\log p_t(y)
=
\frac{\nabla p_t(y)}{p_t(y)}
=
\frac{m(y)-y}{t},
\]
which proves \eqref{eq:tweedie}.

For the derivative of the posterior mean, write componentwise
\[
   m_k(y)=
   \frac{\int x_k\phi_t(y-x)\,\mu(dx)}{p_t(y)}.
\]
Differentiating the quotient gives
\[
   \frac{\partial m_k(y)}{\partial y_j}
   =\frac1t
   \left(
     \E[X_kX_j\mid Y=y]
     -\E[X_k\mid Y=y]\E[X_j\mid Y=y]
   \right).
\]
Thus
\[
   \nabla m(y)=\frac1t\Cov(X\mid Y=y).
\]
Now differentiating \eqref{eq:tweedie}, we obtain
\begin{align}
\nabla^2\log p_t(y)
&=
\frac1t
\left(
\nabla m(y)-\Id
\right)
\notag\\
&=
-\frac1t\Id
+
\frac1{t^2}
\Cov(X\mid Y=y),
\end{align}
which proves \eqref{eq:hessianposterior}.

Since
\[
\Cov(X\mid Y=y)\succeq0,
\]
equation \eqref{eq:hessianposterior} immediately gives
\[
\nabla^2\log p_t(y)
\succeq
-\frac1t\Id.
\]
If the law of X is log-concave, then its convolution with a Gaussian density is also log-concave \cite{saumard2014log}.
 Hence $p_t$ is log-concave. Since $p_t$ is smooth
and strictly positive,
\[
\nabla^2\log p_t(y)\preceq0.
\]
Therefore,
\[
-\frac1t\Id
\preceq
\nabla^2\log p_t(y)
\preceq
0.
\]
Thus every eigenvalue $\lambda_i(y)$ of
$\nabla^2\log p_t(y)$ satisfies
\[
-\frac1t\le\lambda_i(y)\le0.
\]
Since $\nabla^2\log p_t(y)$ is symmetric, its Hilbert--Schmidt norm
satisfies
\[
   \|\nabla^2\log p_t(y)\|_{\mathrm{HS}}^2
   =
   \sum_{i=1}^d \lambda_i(y)^2
   \le \frac{d}{t^2}.
\]
Integrating against $p_t(y)\,dy$ gives
\[
K(Y)
=
\int_{\mathbb R^d}
\|\nabla^2\log p_t(y)\|_{\HS}^2p_t(y)\,dy
\le
\frac d{t^2},
\]
which proves \eqref{eq:Kbound}.
\end{proof}

\section{Proof of the finite-constellation weak-perturbation lemma}
\label{app:finite-constellation-weak-input}

This appendix proves
Lemma~\ref{lem:conditional-weak-perturbation}.
The proof uses the first-order differentiability of translations in the
Sobolev space naturally associated with the weak Fisher information,
followed by a local KL--Hellinger comparison under a bounded likelihood
ratio.  The latter is a special bounded-likelihood-ratio form of the
standard local equivalence of \(f\)-divergences; see, e.g.,
\cite{sason2018fdivergences}.  We give the argument in full because the
present setting allows nonsmooth and compactly supported densities.

\subsection{Notation}

Set
\begin{equation}
   \mu(dx,dv)
   \triangleq
   dx\,P_V(dv),
   \qquad
   p(x,v)
   \triangleq
   f_{X\mid V}(x\mid v),
   \qquad
   g(x,v)
   \triangleq
   \sqrt{p(x,v)}.
   \label{eq:app-product-space-density}
\end{equation}

Throughout this appendix, the following norm conventions are used.

For a vector \(z\in\mathbb R^k\),
\[
   \|z\|_{\mathbb R^k}
   \triangleq
   \left(
      \sum_{\ell=1}^k z_\ell^2
   \right)^{1/2}.
\]
For a scalar- or complex-valued quantity \(a\), \(|a|\) denotes its
absolute value or modulus.

For a scalar measurable function \(f\) on
\(\mathbb R^d\times\mathcal V\), define
\begin{equation}
   \|f\|_{L^2(\mu)}^2
   \triangleq
   \int_{\mathcal V}
   \int_{\mathbb R^d}
   |f(x,v)|^2
   \,dx\,P_V(dv).
   \label{eq:app-scalar-L2-norm}
\end{equation}
For an \(\mathbb R^M\)-valued measurable function
\(F=(F_1,\ldots,F_M)\), define
\begin{equation}
   \|F\|_{L^2(\mu;\mathbb R^M)}^2
   \triangleq
   \int_{\mathcal V}
   \int_{\mathbb R^d}
   \|F(x,v)\|_{\mathbb R^M}^2
   \,dx\,P_V(dv).
   \label{eq:app-vector-L2-norm}
\end{equation}

We write \(H_x^1(\mu)\) for the class of functions
\(f\in L^2(\mu)\) whose weak derivatives
\(\partial_{x_1}f,\ldots,\partial_{x_d}f\) all belong to
\(L^2(\mu)\).  Thus the Sobolev regularity is imposed only in the
Euclidean coordinate \(x\), and no differentiability in \(v\) is
required.

For densities \(f,q\) with respect to \(\mu\), define
\begin{align}
   D_\mu(f\|q)
   &\triangleq
   \int
   f\log\frac{f}{q}\,d\mu,
   \label{eq:app-KL-definition}\\
   H_\mu^2(f,q)
   &\triangleq
   \int
   \left(
      \sqrt f-\sqrt q
   \right)^2d\mu.
   \label{eq:app-Hellinger-definition}
\end{align}
The convention in \eqref{eq:app-Hellinger-definition} differs by a
constant factor from some definitions of the squared Hellinger distance.

\subsection{Weak Fisher information and the square-root density}

By the weak definition of Fisher information,
\(g(\cdot,v)\in H^1(\mathbb R^d)\) for \(P_V\)-almost every \(v\), and
Fubini's theorem gives
\begin{equation}
   4
   \int_{\mathcal V}
   \int_{\mathbb R^d}
   \|\nabla_x g(x,v)\|_{\mathbb R^d}^2
   \,dx\,P_V(dv)
   =
   \E I(X\mid V)
   <
   \infty.
   \label{eq:app-weak-fisher-sobolev}
\end{equation}
Since
\[
   \|g\|_{L^2(\mu)}^2
   =
   \int p\,d\mu
   =
   1,
\]
we have
\[
   g\in H_x^1(\mu).
\]

Whenever \(\E I(X\mid V)<\infty\), define the conditional score by
\[
   \rho_{X\mid V}(x\mid v)
   \triangleq
   \begin{cases}
      \displaystyle
      \frac{2\nabla_x g(x,v)}{g(x,v)},
      & p(x,v)>0,\\[1.2ex]
      0,
      & p(x,v)=0.
   \end{cases}
\]
By the locality property of weak derivatives,
\[
   \nabla_x g=0
   \qquad
   \mu\text{-a.e. on }\{g=0\}=\{p=0\}.
\]
Therefore,
\begin{equation}
   \nabla_x g(x,v)
   =
   \frac12
   \rho_{X\mid V}(x\mid v)\,
   g(x,v)
   \qquad
   \mu\text{-a.e.}
   \label{eq:app-score-square-root}
\end{equation}
Consequently,
\begin{equation}
   \overline{\Jmat}
   \triangleq
   \E\Jmat(X\mid V)
   =
   \int
   \rho_{X\mid V}(x\mid v)
   \rho_{X\mid V}(x\mid v)^{\mathsf T}
   p(x,v)\,
   \mu(dx,dv).
   \label{eq:app-averaged-fisher-matrix}
\end{equation}

Let
\[
   \P(U=u_j)=\pi_j>0,
   \qquad
   j=1,\ldots,M,
\]
and assume
\begin{equation}
   \sum_{j=1}^M
   \pi_j u_j
   =
   0.
   \label{eq:app-zero-mean-constellation}
\end{equation}

For \(t\ge0\), define
\begin{align}
   p_{j,t}(x,v)
   &\triangleq
   p(x-tu_j,v),
   \label{eq:app-shifted-density}\\
   g_{j,t}(x,v)
   &\triangleq
   \sqrt{p_{j,t}(x,v)}
   =
   g(x-tu_j,v).
   \label{eq:app-shifted-root-density}
\end{align}

\subsection{First-order Sobolev expansion of the translated densities}

For each \(j\), define the scalar function
\begin{equation}
   a_j(x,v)
   \triangleq
   -u_j^{\mathsf T}\nabla_xg(x,v).
   \label{eq:app-aj-definition}
\end{equation}
Since \(u_j\in\mathbb R^d\) is fixed and
\(\nabla_xg\in L^2(\mu;\mathbb R^d)\),
we have \(a_j\in L^2(\mu)\).

We first prove
\begin{equation}
   \frac{g_{j,t}-g}{t}
   \longrightarrow
   a_j
   \qquad
   \text{in }L^2(\mu),
   \qquad
   t\downarrow0.
   \label{eq:app-translation-L2-derivative}
\end{equation}

Let \(\mathcal F_x\) denote the unitary partial
Fourier--Plancherel transform on \(L^2(\mu)\), acting only in the
\(x\)-coordinate. It is normalized so that, for
\(h\in L^1(\mathbb R^d)\cap L^2(\mathbb R^d)\),
\[
   (\mathcal F_xh)(\xi)
   =
   \frac{1}{(2\pi)^{d/2}}
   \int_{\mathbb R^d}
   e^{-i\xi^{\mathsf T}x}
   h(x)\,dx,
\]
and is extended to \(L^2(\mathbb R^d)\) by density. Write
\begin{equation}
   \widehat g
   \triangleq
   \mathcal F_xg.
   \label{eq:app-partial-fourier-transform}
\end{equation}

For \(P_V\)-almost every \(v\), the usual translation and weak
derivative identities hold in \(L^2(\mathbb R^d)\):
\[
   \mathcal F_x\!\left[
      g(\,\cdot-tu_j,v)
   \right](\xi)
   =
   e^{-it\xi^{\mathsf T}u_j}
   \widehat g(\xi,v),
\]
and
\[
   \mathcal F_x\!\left[
      \partial_{x_k}g(\,\cdot,v)
   \right](\xi)
   =
   i\xi_k\widehat g(\xi,v),
   \qquad
   k=1,\ldots,d.
\]
Consequently,
\[
   \mathcal F_x\!\left[
      u_j^{\mathsf T}\nabla_xg(\,\cdot,v)
   \right](\xi)
   =
   i\xi^{\mathsf T}u_j\,
   \widehat g(\xi,v)
\]
in \(L^2(\mathbb R^d)\).

Moreover, the partial Fourier--Plancherel transform is unitary on
\(L^2(\mu)\). Hence, for every \(h\in L^2(\mu)\),
\[
   \int_{\mathcal V}
   \int_{\mathbb R^d}
   |h(x,v)|^2
   \,dx\,P_V(dv)
   =
   \int_{\mathcal V}
   \int_{\mathbb R^d}
   |(\mathcal F_xh)(\xi,v)|^2
   \,d\xi\,P_V(dv).
\]
Applying this identity to the relevant difference quotient gives
\begin{equation}
   \left\|
      \frac{
         g(\,\cdot-tu_j,\cdot)-g
      }{t}
      +
      u_j^{\mathsf T}\nabla_xg
   \right\|_{L^2(\mu)}^2
   =
   \int_{\mathcal V}
   \int_{\mathbb R^d}
   \left|
      \frac{
         e^{-it\xi^{\mathsf T}u_j}-1
      }{t}
      +
      i\xi^{\mathsf T}u_j
   \right|^2
   |\widehat g(\xi,v)|^2
   \,d\xi\,P_V(dv).
   \label{eq:app-fourier-translation}
\end{equation}

For every fixed \(\xi\in\mathbb R^d\),
\[
   \frac{
      e^{-it\xi^{\mathsf T}u_j}-1
   }{t}
   +
   i\xi^{\mathsf T}u_j
   \longrightarrow
   0, \quad \text{as} \quad t \rightarrow 0.
\]
Moreover,
\[
   |e^{-iz}-1|
   \le
   |z|,
   \qquad z\in\mathbb R,
\]
so
\begin{align}
   \left|
      \frac{
         e^{-it\xi^{\mathsf T}u_j}-1
      }{t}
      +
      i\xi^{\mathsf T}u_j
   \right|
   &\le
   2|\xi^{\mathsf T}u_j|
   \notag\\
   &\le
   2
   \|\xi\|_{\mathbb R^d}
   \|u_j\|_{\mathbb R^d}.
   \label{eq:app-fourier-dominating-bound}
\end{align}
Thus the integrand in
\eqref{eq:app-fourier-translation} is bounded by
\[
   4
   \|u_j\|_{\mathbb R^d}^2
   \|\xi\|_{\mathbb R^d}^2
   |\widehat g(\xi,v)|^2.
\]
The Fourier characterization of \(H_x^1(\mu)\) gives
\begin{equation}
   \int_{\mathcal V}
   \int_{\mathbb R^d}
   \|\xi\|_{\mathbb R^d}^2
   |\widehat g(\xi,v)|^2
   \,d\xi\,P_V(dv)
   <
   \infty.
   \label{eq:app-fourier-H1-integrability}
\end{equation}
Therefore dominated convergence proves
\eqref{eq:app-translation-L2-derivative}.

Equivalently,
\begin{equation}
   g_{j,t}
   =
   g+t a_j+r_{j,t},
   \qquad
   \|r_{j,t}\|_{L^2(\mu)}
   =
   o(t).
   \label{eq:app-shifted-root-expansion}
\end{equation}

By the locality property of weak gradients,
\[
   \nabla_xg=0
   \qquad
  \mu\text{-a.e. on }\{g=0\}.
\]
Since \(\{g=0\}=\{p=0\}\), it follows that
\begin{equation}
   a_j=0
   \qquad
   \mu\text{-a.e. on }\{p=0\}.
   \label{eq:app-aj-zero-on-zero-set}
\end{equation}

\subsection{First-order cancellation in the output mixture}

The density of \((X+tU,V)\) with respect to \(\mu\) is
\begin{equation}
   \bar p_t(x,v)
   \triangleq
   \sum_{j=1}^M
   \pi_jp_{j,t}(x,v),
   \label{eq:app-mixture-density}
\end{equation}
and define
\begin{equation}
   \bar g_t(x,v)
   \triangleq
   \sqrt{\bar p_t(x,v)}.
   \label{eq:app-mixture-root-density}
\end{equation}

We claim that
\begin{equation}
   \frac{\bar g_t-g}{t}
   \longrightarrow
   0
   \qquad
   \text{in }L^2(\mu).
   \label{eq:app-mixture-zero-first-derivative}
\end{equation}

Define the \(\mathbb R^M\)-valued functions
\begin{align}
   G_t(x,v)
   &\triangleq
   \bigl(
      \sqrt{\pi_1}\,g_{1,t}(x,v),
      \ldots,
      \sqrt{\pi_M}\,g_{M,t}(x,v)
   \bigr),
   \label{eq:app-Gt-definition}\\
   A(x,v)
   &\triangleq
   \bigl(
      \sqrt{\pi_1}\,a_1(x,v),
      \ldots,
      \sqrt{\pi_M}\,a_M(x,v)
   \bigr).
   \label{eq:app-A-definition}
\end{align}
Also set
\begin{equation}
   e
   \triangleq
   (\sqrt{\pi_1},\ldots,\sqrt{\pi_M})
   \in\mathbb R^M.
   \label{eq:app-e-definition}
\end{equation}
Since \(\sum_j\pi_j=1\),
\[
   \|e\|_{\mathbb R^M}=1.
\]
At \(t=0\),
\[
   G_0(x,v)=g(x,v)e.
\]

Moreover, for every fixed \((x,v)\),
\begin{align}
   \|G_t(x,v)\|_{\mathbb R^M}
   &=
   \left(
      \sum_{j=1}^M
      \pi_jg_{j,t}(x,v)^2
   \right)^{1/2}
   \notag\\
   &=
   \sqrt{\bar p_t(x,v)}
   =
   \bar g_t(x,v).
   \label{eq:app-mixture-as-euclidean-norm}
\end{align}
Thus
\eqref{eq:app-mixture-as-euclidean-norm}
is a pointwise identity involving the Euclidean norm in
\(\mathbb R^M\).

Since the constellation contains only finitely many points,
\eqref{eq:app-translation-L2-derivative} implies
\begin{equation}
   \frac{G_t-G_0}{t}
   \longrightarrow
   A
   \qquad
   \text{in }L^2(\mu;\mathbb R^M).
   \label{eq:app-vector-L2-derivative}
\end{equation}
Indeed,
\begin{equation}
   \left\|
      \frac{G_t-G_0}{t}-A
   \right\|_{L^2(\mu;\mathbb R^M)}^2=
   \sum_{j=1}^M
   \pi_j
   \left\|
      \frac{g_{j,t}-g}{t}-a_j
   \right\|_{L^2(\mu)}^2
   \longrightarrow0.
\end{equation}

We next linearize the pointwise Euclidean norm.
For almost every \((x,v)\) such that \(g(x,v)>0\),
the map
\[
   z\longmapsto\|z\|_{\mathbb R^M}
\]
is differentiable at \(z=g(x,v)e\neq0\), with gradient \(e\).
Hence
\begin{equation}
   \frac{
      \|g(x,v)e+tA(x,v)\|_{\mathbb R^M}
      -
      g(x,v)
   }{t}
   \longrightarrow
   e^{\mathsf T}A(x,v).
   \label{eq:app-pointwise-norm-derivative}
\end{equation}
On the set \(\{g=0\}\), relation
\eqref{eq:app-aj-zero-on-zero-set} implies
\(A=0\) almost everywhere, so
\eqref{eq:app-pointwise-norm-derivative}
also holds there.

By the reverse triangle inequality in \(\mathbb R^M\),
\begin{align}
   \left|
      \frac{
         \|ge+tA\|_{\mathbb R^M}
         -
         g
      }{t}
   \right|
   &=
   \frac{
      \left|
         \|ge+tA\|_{\mathbb R^M}
         -
         \|ge\|_{\mathbb R^M}
      \right|
   }{t}
   \notag\\
   &\le
   \|A\|_{\mathbb R^M}.
   \label{eq:app-pointwise-norm-dominating-bound}
\end{align}
Since
\(A\in L^2(\mu;\mathbb R^M)\),
the square of the right-hand side is integrable.
Therefore dominated convergence yields
\begin{equation}
   \frac{
      \|G_0+tA\|_{\mathbb R^M}
      -
      g
   }{t}
   \longrightarrow
   e^{\mathsf T}A
   \qquad
   \text{in }L^2(\mu).
   \label{eq:app-linearized-mixture-norm}
\end{equation}

The Euclidean norm is \(1\)-Lipschitz:
\[
   \left|
      \|z\|_{\mathbb R^M}
      -
      \|w\|_{\mathbb R^M}
   \right|
   \le
   \|z-w\|_{\mathbb R^M},
   \qquad
   z,w\in\mathbb R^M.
\]
Hence
\begin{align}
   \left\|
      \frac{
         \|G_t\|_{\mathbb R^M}
         -
         \|G_0+tA\|_{\mathbb R^M}
      }{t}
   \right\|_{L^2(\mu)}\le
   \left\|
      \frac{G_t-G_0}{t}-A
   \right\|_{L^2(\mu;\mathbb R^M)}
   \longrightarrow0.
   \label{eq:app-norm-lipschitz}
\end{align}

Finally,
\begin{align}
   e^{\mathsf T}A
   &=
   \sum_{j=1}^M
   \pi_j a_j
   \notag\\
   &=
   -
   \left(
      \sum_{j=1}^M
      \pi_j u_j
   \right)^{\mathsf T}
   \nabla_xg
   \notag\\
   &=
   0
   \label{eq:app-mixture-cancellation}
\end{align}
by \eqref{eq:app-zero-mean-constellation}.
Combining
\eqref{eq:app-mixture-as-euclidean-norm},
\eqref{eq:app-linearized-mixture-norm},
\eqref{eq:app-norm-lipschitz}, and
\eqref{eq:app-mixture-cancellation}
proves
\eqref{eq:app-mixture-zero-first-derivative}.

Consequently, by \eqref{eq:app-translation-L2-derivative} and \eqref{eq:app-mixture-zero-first-derivative},   for every fixed \(j\),
\begin{equation}
   \frac{
      g_{j,t}-\bar g_t
   }{t}
   \longrightarrow
   a_j
   \qquad
   \text{in }L^2(\mu).
   \label{eq:app-component-mixture-root-difference}
\end{equation}

By definition of the squared Hellinger distance,
\begin{align}
   H_\mu^2(p_{j,t},\bar p_t)
   &=
   \|g_{j,t}-\bar g_t\|_{L^2(\mu)}^2
   \notag\\
   &=
   t^2
   \|a_j\|_{L^2(\mu)}^2
   +
   o(t^2).
   \label{eq:app-hellinger-expansion}
\end{align}

Using
\eqref{eq:app-score-square-root} and
\eqref{eq:app-averaged-fisher-matrix},
\begin{align}
   \|a_j\|_{L^2(\mu)}^2
   &=
   \int
   \left|
      u_j^{\mathsf T}\nabla_xg(x,v)
   \right|^2
   \,\mu(dx,dv)
   \notag\\
   &=
   \frac14
   \int
   \left|
      u_j^{\mathsf T}
      \rho_{X\mid V}(x\mid v)
   \right|^2
   p(x,v)
   \,\mu(dx,dv)
   \notag\\
   &=
   \frac14
   u_j^{\mathsf T}
   \overline{\Jmat}
   u_j.
   \label{eq:app-aj-fisher}
\end{align}
Therefore
\begin{equation}
   H_\mu^2(p_{j,t},\bar p_t)
   =
   \frac{t^2}{4}
   u_j^{\mathsf T}
   \overline{\Jmat}
   u_j
   +
   o(t^2).
   \label{eq:app-hellinger-fisher-expansion}
\end{equation}

\subsection{A local KL--Hellinger comparison}

We next prove the local comparison needed to pass from
\eqref{eq:app-hellinger-fisher-expansion}
to relative entropy.

\begin{lemma}[Local KL--Hellinger comparison under a bounded likelihood ratio]
\label{lem:app-local-kl-hellinger}
Let \(f_t,q_t,p\) be probability densities with respect to a common measure
\(\mu\), where \(t\downarrow0\).  Assume
\begin{align}
   \|\sqrt{f_t}-\sqrt p\|_{L^2(\mu)}
   &\longrightarrow0,
   \label{eq:app-ft-hellinger-p}\\
   \|\sqrt{q_t}-\sqrt p\|_{L^2(\mu)}
   &\longrightarrow0.
   \label{eq:app-qt-hellinger-p}
\end{align}
Assume also that there exists \(C<\infty\) such that
\begin{equation}
   f_t
   \le
   Cq_t
   \qquad
   \mu\text{-a.e.}
   \label{eq:app-likelihood-ratio-bound}
\end{equation}
for all sufficiently small \(t\).

Suppose further that
\begin{equation}
   d_t
   \triangleq
   \frac{
      \sqrt{f_t}-\sqrt{q_t}
   }{t}
   \longrightarrow
   d
   \qquad
   \text{in }L^2(\mu),
   \label{eq:app-dt-L2}
\end{equation}
where
\begin{equation}
   d=0
   \qquad
   \mu\text{-a.e. on }\{p=0\}.
   \label{eq:app-d-zero}
\end{equation}
Then
\begin{equation}
   D_\mu(f_t\|q_t)
   =
   2H_\mu^2(f_t,q_t)
   +
   o(t^2).
   \label{eq:app-KL-Hellinger-local-equivalence}
\end{equation}
\end{lemma}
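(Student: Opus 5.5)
The plan is to write the relative entropy through the likelihood ratio and compare it pointwise with the squared Hellinger integrand. Set $r_t=\sqrt{f_t/q_t}$ on $\{q_t>0\}$. By \eqref{eq:app-likelihood-ratio-bound}, $f_t=0$ wherever $q_t=0$, so $f_t\ll q_t$ and $0\le r_t\le\sqrt C$. Then
\[
   D_\mu(f_t\|q_t)=\int q_t\,\varphi(r_t)\,d\mu,\qquad
   2H_\mu^2(f_t,q_t)=\int q_t\,2(r_t-1)^2\,d\mu,
\]
where $\varphi(r)=2r^2\log r$ is taken after adding the zero-mean term $-(r^2-1)$. This term integrates to zero because $\int f_t=\int q_t=1$. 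Define
\[
   \psi(r)=2r^2\log r-(r^2-1)-2(r-1)^2 .
\]
A Taylor expansion at $r=1$ gives $\psi(1)=\psi'(1)=\psi''(1)=0$. On the compact interval $[0,\sqrt C]$ we therefore have $|\psi(r)|\le C'|r-1|^3$, and hence also $|\psi(r)|\le C''(r-1)^2\min(1,|r-1|)$. It then suffices to show
\[
   \int q_t(r_t-1)^2\min(1,|r_t-1|)\,d\mu=o(t^2).
\]

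Next, express this remainder through $d_t$. On $\{q_t>0\}$ we have $q_t(r_t-1)^2=(\sqrt{f_t}-\sqrt{q_t})^2=t^2d_t^2$. The task reduces to
\[
   \int d_t^2\min(1,|r_t-1|)\,d\mu\longrightarrow0 .
\]
Since $d_t\to d$ in $L^2(\mu)$, the family $d_t^2$ is uniformly integrable and converges to $d^2$ in $L^1$. So it is enough to show $\int d^2\min(1,|r_t-1|)\,d\mu\to0$, that is, that $r_t\to1$ in $d^2\,d\mu$-measure. By \eqref{eq:app-d-zero}, $d^2\,d\mu$ is carried by $\{p>0\}$. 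Both \eqref{eq:app-ft-hellinger-p} and \eqref{eq:app-qt-hellinger-p} give $\sqrt{f_t},\sqrt{q_t}\to\sqrt p$ in $L^2(\mu)$. Along any subsequence there is a further subsequence on which both converge $\mu$-a.e. There $r_t=\sqrt{f_t}/\sqrt{q_t}\to1$ a.e. on $\{p>0\}$, and dominated convergence then applies: the integrand is bounded by $d^2\in L^1$. The subsequence principle upgrades this to the full family.

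The main obstacle is exactly this last localization step: controlling $r_t$ where $p$ is small or vanishes. Hypothesis \eqref{eq:app-d-zero} together with the truncation $\min(1,|r_t-1|)$ handles it, because the bounded ratio only has to be close to $1$ where $d^2$ carries mass. The uniform bound $r_t\le\sqrt C$ is what makes the cubic Taylor remainder uniform, including near $r=0$, where $r^2\log r$ is bounded. Collecting the estimates yields $D_\mu(f_t\|q_t)-2H_\mu^2(f_t,q_t)=o(t^2)$, which is \eqref{eq:app-KL-Hellinger-local-equivalence}.
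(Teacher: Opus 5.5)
Your proof is correct and is essentially the paper's argument. The paper uses $r_t=f_t/q_t$, $\Phi(r)=r\log r-r+1$, $\Psi(r)=(\sqrt r-1)^2$ and the ratio $c=\Phi/\Psi$, which is continuous and bounded on $[0,C]$ with $c(1)=2$. Your $\psi(r)=\Phi(r^2)-2\Psi(r^2)$ with the bound $|\psi(r)|\le C''(r-1)^2\min(1,|r-1|)$ is a quantitative version of that continuity. After that, both proofs use $d_t^2\to d^2$ in $L^1$, $d=0$ on $\{p=0\}$, and a.e.\ subsequence convergence of the ratio to $1$ on $\{p>0\}$, with dominated convergence by $d^2$.

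One small repair: the Taylor expansion at $r=1$ does not by itself give the uniform cubic bound near $r=0$, because $\psi'''(r)=4/r$ is unbounded there. The bound instead follows from continuity of $\psi(r)/|r-1|^3$ on $[0,\sqrt C]$ away from $r=1$ together with its finite limit at $r=1$.
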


\begin{proof}
On \(\{q_t>0\}\), define
\[
   r_t
   \triangleq
   \frac{f_t}{q_t}.
\]
On \(\{q_t=0\}\), set \(r_t=0\).
By \eqref{eq:app-likelihood-ratio-bound},
\(f_t=0\) on \(\{q_t=0\}\), so this convention does not change any
integral.  Moreover,
\begin{equation}
   0\le r_t\le C
   \qquad
   \mu\text{-a.e.}
   \label{eq:app-ratio-uniform-bound}
\end{equation}

Since \(f_t\) and \(q_t\) are probability densities,
\[
   \int q_t(r_t-1)\,d\mu
   =
   \int(f_t-q_t)\,d\mu
   =
   0.
\]
Hence
\begin{equation}
   D_\mu(f_t\|q_t)
   =
   \int
   q_t\Phi(r_t)\,d\mu,
   \label{eq:app-KL-Phi}
\end{equation}
where
\[
   \Phi(r)
   \triangleq
   r\log r-r+1,
   \qquad r\ge0,
\]
with the convention \(0\log0=0\).

Likewise,
\begin{equation}
   H_\mu^2(f_t,q_t)
   =
   \int
   q_t\Psi(r_t)\,d\mu,
   \label{eq:app-H-Psi}
\end{equation}
where
\[
   \Psi(r)
   \triangleq
   (\sqrt r-1)^2.
\]

For \(r\ne1\), define
\[
   c(r)
   \triangleq
   \frac{\Phi(r)}{\Psi(r)}.
\]
Since
\[
   \Phi(1+s)
   =
   \frac{s^2}{2}
   +
   o(s^2),
   \qquad
   \Psi(1+s)
   =
   \frac{s^2}{4}
   +
   o(s^2),
   \qquad
   s\to0,
\]
we have
\[
   \lim_{r\to1}c(r)=2.
\]
Set
\[
   c(1)=2.
\]
Furthermore,
\[
   \Phi(0)=1,
   \qquad
   \Psi(0)=1,
\]
so set
\[
   c(0)=1.
\]
Thus \(c\) is continuous on \([0,C]\).  Define
\begin{equation}
   C_0
   \triangleq
   \sup_{0\le r\le C}
   |c(r)-2|
   <
   \infty.
   \label{eq:app-c-uniform-bound}
\end{equation}

Since
\[
   q_t\Psi(r_t)
   =
   \left(
      \sqrt{f_t}-\sqrt{q_t}
   \right)^2
   =
   t^2d_t^2,
\]
equations
\eqref{eq:app-KL-Phi} and
\eqref{eq:app-H-Psi} give
\begin{equation}
   \frac{
      D_\mu(f_t\|q_t)
      -
      2H_\mu^2(f_t,q_t)
   }{t^2}
   =
   \int
   \bigl(c(r_t)-2\bigr)
   d_t^2\,d\mu.
   \label{eq:app-KL-H-difference}
\end{equation}

From
\eqref{eq:app-dt-L2},
\begin{align}
   \|d_t^2-d^2\|_{L^1(\mu)}
   &\le
   \|d_t-d\|_{L^2(\mu)}
   \left(
      \|d_t\|_{L^2(\mu)}
      +
      \|d\|_{L^2(\mu)}
   \right)
   \notag\\
   &\longrightarrow0.
   \label{eq:app-dt-square-L1}
\end{align}

Noth that 
\begin{equation}
   \left|
      \int
      \bigl(c(r_t)-2\bigr)d_t^2\,d\mu
   \right|\le
   C_0
   \|d_t^2-d^2\|_{L^1(\mu)}
   +
   \int
   |c(r_t)-2|\,d^2\,d\mu.
   \label{eq:app-final-KL-H-bound}
\end{equation}
It remains to prove
\begin{equation}
    \int
   |c(r_t)-2|\,d^2\,d\mu
   \longrightarrow0.  
   \label{eq:app-cr}
\end{equation}
We use a subsequence argument. Let \(\{t_n\}\) be an arbitrary sequence satisfying
\(t_n\downarrow0\).  By
\eqref{eq:app-ft-hellinger-p} and
\eqref{eq:app-qt-hellinger-p}, after passing to a subsequence, still denoted
by \(\{t_n\}\), we may assume that
\[
   \sqrt{f_{t_n}}
   \longrightarrow
   \sqrt p,
   \qquad
   \sqrt{q_{t_n}}
   \longrightarrow
   \sqrt p
   \qquad
   \mu\text{-a.e.}
\]
Indeed, each \(L^2(\mu)\)-convergent sequence admits an
almost-everywhere convergent subsequence, and a successive extraction
gives the two convergences simultaneously.

On the set \(\{p>0\}\), the limiting denominator is strictly positive.
Hence, for \(\mu\)-almost every point in \(\{p>0\}\),
\[
   r_{t_n}
   =
   \frac{f_{t_n}}{q_{t_n}}
   =
   \left(
      \frac{\sqrt{f_{t_n}}}{\sqrt{q_{t_n}}}
   \right)^2
   \longrightarrow1.
\]
By continuity of \(c\) at \(1\),
\[
   c(r_{t_n})-2
   \longrightarrow0
   \qquad
   \mu\text{-a.e. on }\{p>0\}.
\]
On the other hand,
\eqref{eq:app-d-zero} gives
\[
   d=0
   \qquad
   \mu\text{-a.e. on }\{p=0\}.
\]
Therefore
\[
   |c(r_{t_n})-2|\,d^2
   \longrightarrow0
   \qquad
   \mu\text{-a.e.}
\]

By \eqref{eq:app-ratio-uniform-bound} and
\eqref{eq:app-c-uniform-bound},
\[
   |c(r_{t_n})-2|\,d^2
   \le
   C_0d^2.
\]
Since \(d\in L^2(\mu)\), we have \(d^2\in L^1(\mu)\).
The dominated convergence theorem therefore yields
\begin{equation*}
      \int
   |c(r_{t_n})-2|\,d^2\,d\mu
   \longrightarrow0
\end{equation*}
along the extracted subsequence.

For any sequence \(\{t_n\}\) with \(t_n\downarrow0\), there exists a
subsequence \(\{t_{n_k}\}\) such that
\[
   \int
   |c(r_{t_{n_k}})-2|\,d^2\,d\mu
   \longrightarrow0
   \qquad
   \text{as }k\to\infty.
\]
This implies convergence along the full parameter \(t\downarrow0\).
Indeed, otherwise there would exist \(\varepsilon>0\) and a sequence
\(s_n\downarrow0\) such that
\[
   \int |c(r_{s_n})-2|\,d^2\,d\mu
   \ge\varepsilon
   \qquad\text{for every }n.
\]
No subsequence of \(\{s_n\}\) could then have the integral converging
to zero, contradicting the preceding subsequence property. Hence
\begin{equation}
      \label{eq:app-c-times-d2}
         \int |c(r_t)-2|\,d^2\,d\mu
   \longrightarrow0,
   \qquad t\downarrow0.
\end{equation}

Finally, by \eqref{eq:app-dt-square-L1} and 
\eqref{eq:app-c-times-d2}, we obtain
\begin{equation}
\lim_{t\rightarrow 0} 
   \left|
      \int
      \bigl(c(r_t)-2\bigr)d_t^2\,d\mu
   \right| =0
\end{equation}
Hence the right-hand side of
\eqref{eq:app-KL-H-difference}
converges to zero.  This proves
\eqref{eq:app-KL-Hellinger-local-equivalence}.
\end{proof}

\subsection{Completion of the proof of
Lemma~\ref{lem:conditional-weak-perturbation}}

Fix \(j\in\{1,\ldots,M\}\).
Since
\[
   \bar p_t
   =
   \sum_{k=1}^M
   \pi_kp_{k,t},
\]
we have
\begin{equation}
   \bar p_t
   \ge
   \pi_jp_{j,t},
   \qquad
   \frac{p_{j,t}}{\bar p_t}
   \le
   \frac1{\pi_j}
   \quad
   \mu\text{-a.e.}
   \label{eq:app-mixture-likelihood-bound}
\end{equation}

Furthermore,
\eqref{eq:app-translation-L2-derivative} gives
\begin{equation}
   \|\sqrt{p_{j,t}}-\sqrt p\|_{L^2(\mu)}
   =
   O(t)
   \longrightarrow0,
   \label{eq:app-component-to-base-Hellinger}
\end{equation}
while
\eqref{eq:app-mixture-zero-first-derivative} gives
\begin{equation}
   \|\sqrt{\bar p_t}-\sqrt p\|_{L^2(\mu)}
   =
   o(t)
   \longrightarrow0.
   \label{eq:app-mixture-to-base-Hellinger}
\end{equation}

Moreover,
\eqref{eq:app-component-mixture-root-difference} gives
\[
   \frac{
      \sqrt{p_{j,t}}-\sqrt{\bar p_t}
   }{t}
   \longrightarrow
   a_j
   \qquad
   \text{in }L^2(\mu),
\]
and
\eqref{eq:app-aj-zero-on-zero-set} gives
\[
   a_j=0
   \qquad
   \mu\text{-a.e. on }\{p=0\}.
\]

Thus
Lemma~\ref{lem:app-local-kl-hellinger}
applies with
\[
   f_t=p_{j,t},
   \qquad
   q_t=\bar p_t,
   \qquad
   d=a_j.
\]
Together with
\eqref{eq:app-hellinger-fisher-expansion}, it yields
\begin{align}
   D_\mu(p_{j,t}\|\bar p_t)
   &=
   2H_\mu^2(p_{j,t},\bar p_t)
   +
   o(t^2)
   \notag\\
   &=
   \frac{t^2}{2}
   u_j^{\mathsf T}
   \overline{\Jmat}
   u_j
   +
   o(t^2).
   \label{eq:app-component-KL-expansion}
\end{align}

Conditionally on \(U=u_j\), the pair
\((X+tU,V)\) has density \(p_{j,t}\) with respect to \(\mu\),
whereas its unconditional density is \(\bar p_t\).
Therefore
\begin{align}
   I(U;X+tU,V)
   &=
   \sum_{j=1}^M
   \pi_j
   D_\mu(p_{j,t}\|\bar p_t)
   \notag\\
   &=
   \frac{t^2}{2}
   \sum_{j=1}^M
   \pi_j
   u_j^{\mathsf T}
   \overline{\Jmat}
   u_j
   +
   o(t^2).
   \label{eq:app-mutual-information-sum}
\end{align}
The final remainder is \(o(t^2)\) because \(M<\infty\).

Since
\[
   Q
   =
   \Cov(U)
   =
   \sum_{j=1}^M
   \pi_j
   u_ju_j^{\mathsf T},
\]
we have
\begin{align}
   \sum_{j=1}^M
   \pi_j
   u_j^{\mathsf T}
   \overline{\Jmat}
   u_j
   &=
   \sum_{j=1}^M
   \pi_j
   \tr\!\left(
      \overline{\Jmat}
      u_ju_j^{\mathsf T}
   \right)
   \notag\\
   &=
   \tr\!\left(
      \overline{\Jmat}Q
   \right)
   \notag\\
   &=
   \tr\!\left(
      Q\overline{\Jmat}
   \right).
   \label{eq:app-trace-identity}
\end{align}
Therefore
\begin{equation}
   I(U;X+tU,V)
   =
   \frac{t^2}{2}
   \tr\!\left(
      Q\,\E\Jmat(X\mid V)
   \right)
   +
   o(t^2).
   \label{eq:app-final-t-expansion}
\end{equation}

Finally, because \(U\) is independent of \(V\),
\[
   I(U;X+tU,V)
   =
   I(U;V)
   +
   I(U;X+tU\mid V)
   =
   I(U;X+tU\mid V).
\]
Setting \(t=\sqrt\rho\) in
\eqref{eq:app-final-t-expansion}
gives
\[
   I(U;X+\sqrt\rho\,U\mid V)
   =
   \frac{\rho}{2}
   \tr\!\left(
      Q\,\E\Jmat(X\mid V)
   \right)
   +
   o(\rho),
\]
which proves
Lemma~\ref{lem:conditional-weak-perturbation}.
\qed

\end{document}